\def\llncs{0}
\def\fullpage{1}
\def\anonymous{0}
\def\authnote{0}
\def\notxfont{0}
\def\submission{0}
\def\cameraready{0}
\def\noclassic{0}
\def\arxiv{1}
\def\anonymous{1}
\def\llncs{1}
\def\llncs{1}
\def\authnote{0}
\def\anonymous{0}
\def\llncs{1}
\def\authnote{0}
\definecolor{darkblue}{rgb}{0,0,0.6}
\definecolor{darkgreen}{rgb}{0,0.5,0}
\definecolor{maroon}{rgb}{0.5,0.1,0.1}
\definecolor{dpurple}{rgb}{0.2,0,0.65}
\newtheoremstyle{thicktheorem}%
{\topsep}
{\topsep}
{\itshape}{}%
{\bfseries}%
{.}
{ }%
{\thmname{#1}\thmnumber{ #2}%
		\thmnote{ (#3)}%
}
\newtheoremstyle{remark}
{\topsep}
{\topsep}
	{}
	{}
	{}
	{.}
	{ }
	{\textit{\thmname{#1}}\thmnumber{ #2}
			\thmnote{ (#3)}%
	}
	\theoremstyle{thicktheorem}
	\newtheorem{theorem}{Theorem}[section]
	\newtheorem{lemma}[theorem]{Lemma}
	\newtheorem{corollary}[theorem]{Corollary}
	\newtheorem{proposition}[theorem]{Proposition}
	\newtheorem{definition}[theorem]{Definition}
	\newtheorem{game}[theorem]{Game}
	\theoremstyle{remark}
	\newtheorem{remark}[theorem]{Remark}
	\crefname{theorem}{Theorem}{Theorems}
	\crefname{assumption}{Assumption}{Assumptions}
	\crefname{construction}{Construction}{Constructions}
	\crefname{corollary}{Corollary}{Corollaries}
	\crefname{conjecture}{Conjecture}{Conjectures}
	\crefname{definition}{Definition}{Definitions}
	\crefname{exmaple}{Example}{Examples}
	\crefname{experiment}{Experiment}{Experiments}
	\crefname{counterexample}{Counterexample}{Counterexamples}
	\crefname{lemma}{Lemma}{Lemmata}
	\crefname{observation}{Observation}{Observations}
	\crefname{proposition}{Proposition}{Propositions}
	\crefname{remark}{Remark}{Remarks}
	\crefname{claim}{Claim}{Claims}
	\crefname{fact}{Fact}{Facts}
	\crefname{note}{Note}{Notes}
 \crefname{appendix}{App.}{Appendices}
 \crefname{section}{Sec.}{Sections}
\renewcommand*{\backref}[1]{}
	\renewcommand*{\backref}[1]{(Cited on page~#1.)}
\newcommand{\mor}[1]{}
\newcommand{\taiga}[1]{}
\newcommand{\ryo}[1]{}
\newcommand{\takashi}[1]{}
\newcommand{\revise}[1]{#1}
\newcommand{\mor}[1]{$\ll$\textsf{\color{red} Tomoyuki: { #1}}$\gg$}
\newcommand{\taiga}[1]{$\ll$\textsf{\color{magenta} Taiga: { #1}}$\gg$}
\newcommand{\takashi}[1]{$\ll$\textsf{\color{orange} Takashi: { #1}}$\gg$}
\newcommand{\ryo}[1]{$\ll$\textsf{\color{darkgreen} Ryo: { #1}}$\gg$}
\newcommand{\revise}[1]{{\color{purple}#1}} 
\newcommand{\Delete}{\algo{Del}}
\newcommand{\cert}{\keys{cert}}
\newcommand{\skcd}{\mathsf{skcd}}
\newcommand{\pkcd}{\mathsf{pkcd}}
\newcommand{\cccd}{\mathsf{cccd}}
\newcommand{\sen}{\mathcal{S}}
\newcommand{\rec}{\mathcal{R}}
\newcommand{\ctmsg}{\ct_{\mathsf{msg}}}
\newcommand{\OW}{\mathsf{OW}}
\newcommand{\ow}{\mathsf{ow}}
\newcommand{\NCE}{\mathsf{NCE}}
\newcommand{\nce}{\mathsf{nce}}
\newcommand{\Fake}{\algo{Fake}}
\newcommand{\Reveal}{\algo{Reveal}}
\newcommand{\tlsk}{\widetilde{\keys{sk}}}
\newcommand{\tldk}{\widetilde{\keys{dk}}}
\newcommand{\tlmsk}{\widetilde{\keys{msk}}}
\newcommand{\FakeCT}{\algo{FakeCT}}
\newcommand{\FakeSK}{\algo{FakeSK}}
\newcommand{\FakeSetup}{\algo{FakeSetup}}
\newcommand{\iO}{i\cO}
\newcommand{\sfD}{\mathsf{D}}
\newcommand{\NIZK}{\algo{NIZK}}
\newcommand{\nizk}{\mathsf{nizk}}
\newcommand{\Prove}{\algo{Prove}}
\newcommand{\Sim}{\algo{Sim}}
\newcommand{\sfinvalid}{\mathsf{Invalid}}
\newcommand{\NP}{\compclass{NP}}
\newcommand{\Lang}{\mathcal{L}}
\newcommand{\Rela}{\mathcal{R}}
\newcommand{\GenF}{\algo{Gen}_{\mathcal{F}}}
\newcommand{\GenG}{\algo{Gen}_{\mathcal{G}}}
\newcommand{\InvF}{\algo{Inv}_{\mathcal{F}}}
\newcommand{\InvG}{\algo{Inv}_{\mathcal{G}}}
\newcommand{\Chk}{\algo{Chk}}
\newcommand{\Samp}{\algo{Samp}}
\newcommand{\Supp}{\mathrm{Supp}}
\newcommand{\fk}{\mathsf{k}}
\newcommand{\Bad}{\mathtt{Bad}}
\newcommand{\WE}{\mathsf{WE}}
\newcommand{\we}{\mathsf{we}}
\newcommand{\OSS}{\mathsf{OSS}}
\newcommand{\oss}{\mathsf{oss}}
\newcommand{\chosen}{\leftarrow}
\newcommand{\lrun}{\leftarrow}
\newcommand{\la}{\leftarrow}
\newcommand{\ra}{\rightarrow}
\newcommand{\seteq}{\coloneqq}
\newcommand{\concat}{\|}
\newcommand{\setbracket}[1]{\{#1\}}
\newcommand{\setbk}[1]{\{#1\}}
\newcommand{\abs}[1]{|#1|}
\newcommand{\cA}{\mathcal{A}}
\newcommand{\cB}{\mathcal{B}}
\newcommand{\cC}{\mathcal{C}}
\newcommand{\cD}{\mathcal{D}}
\newcommand{\cE}{\mathcal{E}}
\newcommand{\cF}{\mathcal{F}}
\newcommand{\cG}{\mathcal{G}}
\newcommand{\cK}{\mathcal{K}}
\newcommand{\cM}{\mathcal{M}}
\newcommand{\cO}{\mathcal{O}}
\newcommand{\cR}{\mathcal{R}}
\newcommand{\cS}{\mathcal{S}}
\newcommand{\cX}{\mathcal{X}}
\newcommand{\cY}{\mathcal{Y}}
\def\tlpi{\widetilde{\pi}}
\def\makeuppercase#1{
\expandafter\newcommand\csname tl#1\endcsname{\widetilde{#1}}
}
\def\makelowercase#1{
\expandafter\newcommand\csname tl#1\endcsname{\widetilde{#1}}
}
\newcommand{\N}{\mathbb{N}}
\newcommand{\R}{\mathbb{R}}
\newcommand{\Ms}{\mathcal{M}}
\newcommand{\Ks}{\mathcal{K}}
\newcommand{\Xs}{\mathcal{X}}
\newcommand{\Ps}{\mathcal{P}}
\newcommand{\secp}{\lambda}
\newcommand{\crs}{\mathsf{crs}}
\newcommand{\tlcrs}{\widetilde{\mathsf{crs}}}
\newcommand{\aux}{\mathsf{aux}}
\newcommand{\A}{\entity{A}}
\newcommand{\adva}[2]{\mathsf{Adv}_{#1}^{\mathsf{#2}}}
\newcommand{\advb}[3]{\mathsf{Adv}_{#1}^{\mathsf{#2} \mbox{-} \mathsf{#3}}}
\newcommand{\advc}[4]{\mathsf{Adv}_{#1}^{\mathsf{#2} \mbox{-} \mathsf{#3} \mbox{-} \mathsf{#4}}}
\newcommand{\advd}[5]{\mathsf{Adv}_{#1}^{\mathsf{#2} \mbox{-} \mathsf{#3} \mbox{-} \mathsf{#4} \mbox{-} \mathsf{#5}}}
\newcommand{\expa}[3]{\mathsf{Exp}_{#1}^{ \mathsf{#2} \mbox{-} \mathsf{#3}}}
\newcommand{\expb}[4]{\mathsf{Exp}_{#1}^{ \mathsf{#2} \mbox{-} \mathsf{#3} \mbox{-} \mathsf{#4}}}
\newcommand{\wtexpb}[4]{\widetilde{\mathsf{Exp}}_{#1}^{ \mathsf{#2} \mbox{-} \mathsf{#3} \mbox{-} \mathsf{#4}}}
\newcommand{\expc}[4]{\mathsf{Exp}_{#1}^{ \mathsf{#2} \mbox{-} \mathsf{#3} \mbox{-} \mathsf{#4}}}
\newcommand{\sfhyb}[2]{\mathsf{Hyb}_{#1}^{#2}}
\newcommand{\hybi}[1]{\mathsf{Hyb}_{#1}}
\newcommand{\hybij}[2]{\mathsf{Hyb}_{#1}^{#2}}
\newcommand{\revealsk}{\mathtt{Reveal}_{\sk}}
\newcommand{\sfnull}{\mathsf{null}}
\newcommand*{\sk}{\keys{sk}}
\newcommand*{\pk}{\keys{pk}}
\newcommand*{\msk}{\keys{msk}}
\newcommand{\ct}{\keys{CT}}
\newcommand*{\SK}{\keys{SK}}
\newcommand*{\dk}{\keys{dk}}
\newcommand*{\vk}{\keys{vk}}
\newcommand*{\ek}{\keys{ek}}
\newcommand*{\td}{\keys{td}}
\newcommand*{\tlct}{\widetilde{\ct}}
\newcommand{\CT}{\keys{CT}}
\newcommand*{\msg}{\keys{msg}}
\newcommand*{\keys}[1]{\mathsf{#1}}
\newcommand*{\algo}[1]{\ensuremath{\mathsf{#1}}}
\newcommand*{\entity}[1]{\mathcal{#1}}
\newcommand{\compclass}[1]{\textbf{\textrm{#1}}}
\newenvironment{boxfig}[2]{\begin{figure}[#1]\fbox{\begin{minipage}{0.97\linewidth}
                        \vspace{0.2em}
                        \makebox[0.025\linewidth]{}
                        \begin{minipage}{0.95\linewidth}
            {{
                        #2 }}
                        \end{minipage}
                        \vspace{0.2em}
                        \end{minipage}}}{\end{figure}}
\newcommand{\pprotocol}[4]{
\begin{boxfig}{h!}{\footnotesize 
\centering{\textbf{#1}}
    #4
\vspace{0.2em} } \caption{\label{#3} #2}
\end{boxfig}
}
\newcommand{\protocol}[4]{
\pprotocol{#1}{#2}{#3}{#4} }
\newcommand{\bit}{\{0,1\}}
\newcommand{\Setup}{\algo{Setup}}
\newcommand{\Gen}{\algo{Gen}}
\newcommand{\KeyGen}{\algo{KeyGen}}
\newcommand{\keygen}{\algo{KeyGen}}
\newcommand{\Enc}{\algo{Enc}}
\newcommand{\Dec}{\algo{Dec}}
\newcommand{\Sign}{\algo{Sign}}
\newcommand{\Vrfy}{\algo{Vrfy}}
\newcommand{\vrfy}{\algo{Vrfy}}
\newcommand\SKE{\algo{SKE}}
\newcommand{\ske}{\mathsf{ske}}
\newcommand{\oske}{\mathsf{oske}}
\newcommand\PKE{\algo{PKE}}
\newcommand{\pke}{\mathsf{pke}}
\newcommand\ABE{\algo{ABE}}
\newcommand{\abe}{\mathsf{abe}}
\newcommand{\negl}{{\mathsf{negl}}}
\newcommand{\poly}{{\mathrm{poly}}}
\newcommand{\zo}[1]{\{0,1\}^{#1}}
\newcommand{\xor}{\oplus}
\newcommand{\Ppoly}{\compclass{P}/\compclass{poly}}
\let\oldvec\vec
\let\vec\oldvec
\renewcommand*\l@author[2]{}
\renewcommand*\l@title[2]{}
\theoremstyle{remark}
\title{
\textbf{Quantum Encryption with Certified Deletion, Revisited:\\ Public Key, Attribute-Based, and Classical Communication}
\thanks{This is a major update version of the paper by Nishimaki and Yamakawa~\cite{EPRINT:NisYam21} with many new results.}
}
\begin{document}

\ifnum\anonymous=1
\author{\empty}\institute{\empty}
\else
%
%
\ifnum\llncs=1
\author{
	Taiga Hiroka\inst{1} \and Tomoyuki Morimae\inst{1,2} \and Ryo Nishimaki\inst{3} \and Takashi Yamakawa\inst{3}
}
\institute{
	Yukawa Institute for Theoretical Physics, Kyoto University, Japan \and PRESTO, JST, Japan \and NTT Secure Platform Laboratories
}
\else
%
%
\ifnum\noclassic=1
\author[1]{Ryo Nishimaki}
\author[1]{Takashi Yamakawa}
\affil[1]{{\small NTT Secure Platform Laboratories, Tokyo, Japan}\authorcr{\small \{ryo.nishimaki.zk,takashi.yamakawa.ga\}@hco.ntt.co.jp}}
\else
\author[1]{Taiga Hiroka}
\author[1,2]{\hskip 1em Tomoyuki Morimae}
\author[3]{\hskip 1em Ryo Nishimaki}
\author[3]{\hskip 1em Takashi Yamakawa}
\affil[1]{{\small Yukawa Institute for Theoretical Physics, Kyoto University, Japan}\authorcr{\small \{taiga.hiroka,tomoyuki.morimae\}@yukawa.kyoto-u.ac.jp}}
\affil[2]{{\small PRESTO, JST, Japan}}
\affil[3]{{\small NTT Secure Platform Laboratories, Tokyo, Japan}\authorcr{\small \{ryo.nishimaki.zk,takashi.yamakawa.ga\}@hco.ntt.co.jp}}
\fi
\renewcommand\Authands{, }
\fi 
\fi

\ifnum\llncs=1
\date{}
\else
\date{\today}
\fi
\maketitle
\thispagestyle{fancy}
\rhead{YITP-21-40}

\begin{abstract}
Broadbent and Islam (TCC '20) proposed a quantum cryptographic primitive called \emph{quantum encryption with certified deletion}.
In this primitive, a receiver in possession of a quantum ciphertext can generate a classical certificate that the encrypted message is deleted.
Although their construction is information-theoretically secure, it is limited to the setting of one-time symmetric key encryption (SKE), where a sender and receiver have to share a common key in advance and the key can be used only once. Moreover, the sender has to generate a quantum state and send it to the receiver over a quantum channel in their construction.
\revise{Although deletion certificates are privately verifiable, which means a verification key for a certificate has to be kept secret, in the definition by Broadbent and Islam, we can also consider public verifiability.}

In this work, we present various constructions of encryption with certified deletion.
\begin{itemize}
\item Quantum communication case: We achieve (reusable-key) public key encryption (PKE) and attribute-based encryption (ABE) with certified deletion.
Our PKE scheme with certified deletion is constructed assuming the existence of IND-CPA secure PKE, and our ABE scheme with certified deletion is constructed assuming the existence of indistinguishability obfuscation and one-way function. \revise{These two schemes are privately verifiable.}
\item Classical communication case: We also achieve PKE  with certified deletion that uses only classical communication. 
We give two schemes, a privately verifiable one and a publicly verifiable one. The former is constructed assuming the LWE assumption in the quantum random oracle model. The latter is constructed assuming the existence of one-shot signatures and extractable witness encryption.
\end{itemize}
\end{abstract}

\ifnum\cameraready=1
\else
\ifnum\llncs=1
\else
\newpage
  \setcounter{tocdepth}{2}      
  \setcounter{secnumdepth}{2}   
  \setcounter{page}{0}          
  \tableofcontents
  \thispagestyle{empty}
  \clearpage
\fi
\fi

\section{Introduction}\label{sec:intro}
The no-cloning theorem,  which means that an unknown quantum state cannot be copied in general, is one of the most fundamental principles in quantum physics. 
As any classical information can be trivially copied, this indicates a fundamental difference between classical and quantum information.
The no-cloning theorem has been the basis of many quantum cryptographic protocols, including quantum money \cite{Wiesner83} and quantum key distribution~\cite{BB84}.  

Broadbent and Islam \cite{TCC:BroIsl20} used the principle to construct \emph{quantum encryption with certified deletion}. 
In this primitive,  
a sender encrypts a classical message to generate a quantum ciphertext. 
A receiver in possession of the quantum ciphertext and a classical decryption key can either decrypt the ciphertext or ``delete" the encrypted message by generating a classical certificate. 
After generating a valid certificate of deletion, 
no adversary can recover the message \emph{even if the decryption key is given}.\footnote{We note that if the adversary is given the decryption key before the deletion, it can decrypt the ciphertext to obtain the message and keep it even after the deletion, but such an ``attack" is unavoidable.}
We remark that this functionality is classically impossible to achieve since one can copy a classical ciphertext and keep it so that s/he can decrypt it at any later time.
They prove the security of their construction without relying on any computational assumption, 
which ensures information-theoretical security.
Although they achieved the exciting new functionality, their construction is limited to the one-time symmetric key encryption (SKE) setting. In one-time SKE, a sender and receiver have to share a common key in advance, and the key can be used only once.

A possible application scenario of quantum encryption with certified deletion is the following. 
A user uploads encrypted data on a quantum cloud server. Whenever the user wishes to delete the data, the cloud generates a deletion certificate and sends it to the user.
After the user verifies the validity of the certificate, s/he is convinced that the data cannot be recovered even if the decryption key is accidentally leaked later.
Such quantum encryption could prevent data retention and help to implement the right to be forgotten~\cite{GDPR16}.
In this scenario, one-time SKE is quite inconvenient. By the one-time restriction, the user has to locally keep as many decryption keys as the number of encrypted data in the cloud, in which case there seems to be no advantage of uploading the data to the cloud server:
If the user has such large storage, 
s/he could have just locally kept the messages rather than uploading encryption of them to the cloud. 
Also, in some cases, a party other than the decryptor may want to upload data to the cloud. 
This usage would be possible if we can extend the quantum encryption with certified deletion to public key encryption (PKE). Remark that the one-time restriction is automatically resolved for PKE by a simple hybrid argument.
Even more flexibly, a single encrypted data on the cloud may be supposed to be decrypted by multiple users according to some access control policy.    
Such an access control has been realized by attribute-based encryption (ABE)~\cite{EC:SahWat05,CCS:GPSW06} in classical cryptography.
Thus, it would be useful if we have ABE with certified deletion. Our first question in this work is:
\begin{center}
    \emph{Can we achieve PKE and ABE with certified deletion?}
\end{center}

Moreover, a sender needs to send quantum states (random BB84 states~\cite{BB84}) over a quantum channel in the construction by Broadbent and Islam~\cite{TCC:BroIsl20}.
Although generating and sending random BB84 states are not so difficult tasks (and in fact they are
already possible with current technologies), 
a classical sender and communication over only a classical channel are of course much easier.
Besides, communicating over a classical channel is desirable in the application scenario above since many parties want to upload data to a cloud.
In addition to these practical motivations, furthermore, achieving classical channel certified deletion is also an interesting theoretical research direction given the fact that many quantum cryptographic protocols have been ``dequantized" recently~\cite{FOCS:Mahadev18a,AC:CCKW19,CoRR:RadSat19,STOC:AGKZ20,EPRINT:KitNisYam20}.
Thus, our second question in this work is:
\begin{center}
    \emph{Can we achieve PKE with certified deletion, a classical sender, and classical communication?}
\end{center}

In the definition by Broadbent and Islam~\cite{TCC:BroIsl20}, a verification key for a deletion certificate must be kept secret (privately verifiable). If the verification key is revealed, the security is no longer guaranteed in their scheme.
We can also consider public verifiability, which means the security holds even if a verification key is revealed to adversaries.
Broadbent and Islam left the following question as an open problem:
\begin{center}
\emph{Is publicly verifiable encryption with certified deletion possible?}
\end{center} 

\subsection{Our Result}
We solve the three questions above affirmatively in this work.

\paragraph{PKE and ABE with certified deletion and quantum communication.}
We present formal definitions of PKE and ABE with certified deletion, and present constructions of them:
\begin{itemize}
    \item We construct a PKE scheme with certified deletion assuming the existence of (classical) IND-CPA secure PKE.
    We also observe that essentially the same construction gives a reusable SKE scheme with certified deletion if we use IND-CPA secure SKE, which exists under the existence of one-way function (OWF), instead of PKE. 
    \item We construct a (public-key) ABE scheme with certified deletion assuming the existence of indistinguishability obfuscation (IO)~\cite{JACM:BGIRSVY12} and OWF.
    This construction satisfies the collusion resistance and adaptive security, i.e., it is secure against adversaries that adaptively select a target attribute and obtain arbitrarily many decryption keys.
\end{itemize}
We note that our constructions rely on computational assumptions and thus not information-theoretically secure, unlike the construction in \cite{TCC:BroIsl20}.
This is unavoidable since even plain PKE or ABE cannot be information-theoretically secure. We also note that the constructions above are privately verifiable as the definition of one-time SKE by Broadbent and Islam~\cite{TCC:BroIsl20}.

Our main technical insight is that we can combine the one-time secure SKE with certified deletion of \cite{TCC:BroIsl20} and plain PKE to construct PKE with certified deletion by a simple hybrid encryption technique
if the latter satisfies \emph{receiver non-committing} (RNC) security~\cite{STOC:CFGN96,EC:JarLys00,TCC:CanHalKat05}.
Since it is known that PKE/SKE with RNC security can be constructed from any IND-CPA secure PKE/SKE \cite{TCC:CanHalKat05,C:KNTY19}, our first result follows. 

For the second result, we first give a suitable definition of RNC security for  ABE that suffices for our purpose. 
Then we construct an ABE scheme with RNC security based on the existence of IO and OWF. 
By combining this with one-time SKE with certified deletion by hybrid encryption, we obtain an ABE scheme with certified deletion.

\paragraph{PKE with certified deletion, a classical sender, and classical communication.}
We also present formal definitions of PKE with certified deletion and classical communication, and present two constructions:
\begin{itemize}
\item
We construct a PKE scheme with privately verifiable certified deletion and classical communication in the quantum random oracle model (QROM)~\cite{AC:BDFLSZ11}. 
Our construction is secure under the LWE assumption in the QROM.
\item
We construct a PKE scheme with publicly verifiable certified deletion and classical communication.
Our construction uses one-shot signatures~\cite{STOC:AGKZ20} and extractable witness encryption~\cite{STOC:GGSW13,C:GKPVZ13}.
This solves the open problem by Broadbent and Islam~\cite{TCC:BroIsl20}. 
\end{itemize}
In both constructions, a sender is a classical algorithm, but needs to interact with a receiver during ciphertext generation.

In the classical communication case, an encryption algorithm must be interactive even if we consider computationally bounded adversaries (and even in the QROM). The reason is that a malicious QPT receiver can generate two copies of a quantum ciphertext from classical messages sent from a sender,
and one is used for generating a deletion certificate and the other is used for decryption.

Moreover, both constructions rely on computational assumptions and thus not information-theoretically secure, unlike the construction by Broadbent and Islam~\cite{TCC:BroIsl20}.
This is unavoidable even if an encryption algorithm is interactive (and even in the QROM). The reason is that a computationally unbounded malicious receiver can classically simulate its honest behavior to get a
classical description of the quantum ciphertext. 

For the first construction, we use a new property of noisy trapdoor claw-free (NTCF) functions, \emph{the cut-and-choose adaptive hardcore property} (\cref{lem:cut_and_choose_adaptive_hardcore}), which we introduce in this work. We prove that the cut-and-choose adaptive hardcore property is reduced to the adaptive hardcore bit property~\cite{FOCS:BCMVV18}
and injective invariance~\cite{FOCS:Mahadev18a}.  Those properties hold under the LWE assumption~\cite{FOCS:BCMVV18,FOCS:Mahadev18a}. This new technique is of independent interest.
The idea of the second construction is to encrypt a plaintext by witness encryption so that
a valid witness is a one-shot signature for bit $0$.  
We use a valid one-shot signature for bit $1$ as a deletion certificate. The one-shot property of one-shot signatures prevents decryption of witness encryption after issuing a valid deletion certificate.
Georgiou and Zhandry~\cite{EPRINT:GeoZha20} used a similar combination of one-shot signatures and witness encryption to construct unclonable decryption keys.

\subsection{Related work}\label{sec:related_work}
Before the work by Broadbent and Islam~\cite{TCC:BroIsl20}, Fu and Miller~\cite{FM18} and Coiteux-Roy and Wolf~\cite{CRW19} also studied the concept of certifying deletion of information in different settings. (See~\cite{TCC:BroIsl20} for the comparison with these works.)

The quantum encryption scheme with certified deletion by Broadbent and Islam~\cite{TCC:BroIsl20} is based on Wiesner's conjugate coding, which is the backbone of quantum money~\cite{Wiesner83} and quantum key distribution~\cite{BB84}.
A similar idea has been used in many constructions in quantum cryptography that include (but not limited to) revocable quantum timed-release encryption~\cite{JACM:Unruh15}, uncloneable quantum encryption~\cite{TQC:BroLor20}, single-decryptor encryption~\cite{EPRINT:GeoZha20}, and copy protection/secure software leasing~\cite{CMP20}. 
Among them, revocable quantum timed-release encryption is conceptually similar to quantum encryption with certified deletion. In this primitive, a receiver can decrypt a quantum ciphertext only after spending a certain amount of time $T$.
The receiver can also choose to return the ciphertext before the time $T$ is over, in which case it is ensured that the message can no longer be recovered. As observed by Broadbent and Islam~\cite{TCC:BroIsl20}, an essential difference from quantum encryption with certified deletion is that the revocable quantum timed-release encryption does not have a mechanism to generate a \emph{classical} certificate of deletion. 
Moreover, the construction by Unruh~\cite{JACM:Unruh15}  heavily relies on the random oracle heuristic~\cite{C:BelRog97,AC:BDFLSZ11}, and there is no known construction without random oracles.

Kundu and Tan~\cite{KunduTan} constructed (one-time symmetric key) quantum encryption with certified deletion with the device-independent security, i.e., the security holds even if quantum devices are untrusted. Moreover, they show that their construction satisfies composable security.

The notion of NTCF functions was first introduced by Brakerski et al.~\cite{FOCS:BCMVV18}, and further extended to construct a classical verification of quantum computing by Mahadev~\cite{FOCS:Mahadev18a}. (See also a related primitive so-called QFactory~\cite{AC:CCKW19}.)
The adaptive hardcore bit property of NTCF functions was also used for semi-quantum money~\cite{CoRR:RadSat19} and secure software leasing with classical communication~\cite{EPRINT:KitNisYam20}.

Ananth and Kaleoglu concurrently and independently present reusable secret key and public key uncloneable encryption schemes~\cite{EPRINT:AnaKal21}.
Uncloneable encryption~\cite{TQC:BroLor20} is related to but different from quatum encryption with certified deletion. Uncloneable encryption prevents adversaries from creating multiple ciphertexts whose plaintext is the same as that of the original ciphertext.
Their constructions are based on a similar idea to one of our main ideas. Specifically, their construction is obtained by combining one-time secret key uncloneable encryption and standard SKE/PKE with the ``fake-key property", which is similar to the RNC security.


\subsection{Technical Overview Part I: Quantum Communication Case}\label{sec:technical_overview_quantum}
We provide an overview of how to achieve PKE and ABE with certified deletion using quantum communication in this section.
To explain our idea, we introduce the definition of PKE with certified deletion.
\paragraph{Definition of quantum encryption with certified deletion.}
A PKE with certified deletion consists of the following algorithms.
\begin{description}
\item [$\keygen(1^\secp) \ra (\pk,\sk)$:] This is a key generation algorithm that generates a pair of public and secret keys.
\item[$\Enc(\pk,m)\ra (\vk,\CT)$:] This is an encryption algorithm that generates a ciphertext of plaintext and a verification key for this ciphertext.
\item[$\Dec(\sk,\CT) \ra m^\prime$:] This is a decryption algorithm that decrypts a ciphertext.
\item[$\Delete(\CT)\ra \cert$:] This is a deletion algorithm that generates a certificate to guarantee that the ciphertext $\CT$ was deleted.
\item[$\Vrfy(\vk,\cert)\ra \top$ or $\bot$:] This is a verification algorithm that checks the validity of a certificate $\cert$ by using a verification key. As correctness, we require that this algorithm returns $\top$ (i.e., it accepts) if $\cert$ was honestly generated by $\Delete(\CT)$ and $(\vk,\CT)$ was honestly generated by $\Enc$.
\end{description}
Roughly speaking, certified deletion security requires that no quantum polynomial time (QPT) adversary given $\pk$ and $\CT$ can obtain any information about the plaintext in $\CT$ \emph{even if $\sk$ is given after a valid certificate $\cert \lrun \Delete(\CT)$ is generated}.
The difference between PKE and reusable SKE with certified deletion is that, in reusable SKE, $\keygen$ outputs only $\sk$.
In the one-time SKE case by Broadbent and Islam~\cite{TCC:BroIsl20}, $\Enc$ does not output $\vk$ and $\Vrfy$ uses $\sk$ instead of $\vk$.

\paragraph{Our idea for PKE.}
We use the construction of one-time SKE with certified deletion by Broadbent and Islam~\cite{TCC:BroIsl20}. However, we do not need to know the detail of the SKE scheme since we use it in a black-box way in our PKE scheme.
What we need to understand about the SKE scheme are the following abstracted properties:
(1) A secret key and a plaintext are classical strings.
(2) A ciphertext is a quantum state.
(3) The encryption algorithm does not output a verification key since the verification key is equal to the secret key.
(4) It satisfies the verification correctness and certified deletion security explained above.

Our idea is to convert the SKE with certified deletion scheme into a PKE with certified deletion scheme by combining with a standard PKE scheme (standard hybrid encryption technique). This conversion is possible since a secret key of the SKE scheme is a classical string.
Let $\PKE.(\keygen,\Enc,\Dec)$ and $\SKE.(\keygen,\Enc,\Dec,\Delete,\Vrfy)$ be normal PKE and one-time SKE with certified deletion schemes, respectively. Our PKE with certified deletion scheme is described as follows.
\begin{description}
\item [$\keygen(1^\secp)$:] This outputs $(\pke.\pk,\pke.\sk)\lrun \PKE.\keygen(1^\secp)$.
\item[$\Enc(\pk,m)$:] This generates $\ske.\sk \lrun \SKE.\keygen(1^\secp)$, $\ske.\CT \lrun \SKE.\Enc(\ske.\sk,m)$,  
and $\pke.\CT\lrun \PKE.\Enc(\pke.\pk,\allowbreak \ske.\sk)$, and outputs $\vk \seteq \ske.\sk$ and $\CT \seteq (\ske.\CT,\pke.\CT)$.
\item[$\Dec(\sk,\CT)$:] This computes $\ske.\sk^\prime \lrun \PKE.\Dec(\pke.\sk,\pke.\CT)$ and $m^\prime \lrun \SKE.\Dec(\ske.\sk^\prime,\ske.\CT)$, and outputs $m^\prime$.
\item[$\Delete(\CT)$:] This generates and outputs $\cert \lrun \SKE.\Delete(\ske.\CT)$.
\item[$\Vrfy(\vk,\cert)$:] This outputs the output of $\SKE.\Vrfy(\ske.\sk,\cert)$ (note that $\vk =\ske.\sk$).
\end{description}
At first glance, this naive idea seems to work since even if $\pke.\sk$ is given to an adversary after a valid $\cert$ is generated, $\ske.\CT$ does not leak information about the plaintext by certified deletion security of the SKE scheme. Note that PKE is used to encrypt $\ske.\sk$ (not $m$). One-time SKE is sufficient since $\ske.\sk$ is freshly generated in $\Enc$.
The proof outline is as follows. First, we use IND-CPA security of normal PKE to erase information about $\ske.\sk$. Then, we use the one-time certified deletion security of $\SKE$.
Unfortunately, we do not know how to prove the first step above because we must give $\pke.\sk$ to an adversary in a security reduction.
In the first step, we need to show that if a distinguisher detects that $\PKE.\Enc(\pke.\pk, \ske.\sk)$ is changed to $\PKE.\Enc(\pke.\pk,0^{\abs{\ske.\sk}})$, we can break IND-CPA security of the normal PKE. However, to run the distinguisher, we need to give $\pke.\sk$ to the distinguisher after it sends a valid certificate for deletion. The reduction has no way to give $\pke.\sk$ to the distinguisher since the reduction is trying to break the PKE scheme!

To solve this problem, we use RNC encryption (RNCE)~\cite{EC:JarLys00,TCC:CanHalKat05}. RNCE consists of algorithms $(\keygen,\Enc,\Dec,\Fake,\Reveal)$. The key generation algorithm outputs not only a key pair $(\pk,\sk)$ but also an auxiliary trapdoor information $\aux$. The fake ciphertext generation algorithm $\Fake(\pk,\sk,\aux)$ can generate a fake ciphertext $\tlct$ that does not include information about a plaintext. The reveal algorithm $\Reveal(\pk,\sk,\aux,\tlct,m)$ can generate a fake secret key that decrypts $\tlct$ to $m$. The RNC security notion requires that $(\tlct=\Fake(\pk,\sk,\aux),\Reveal(\pk,\sk,\aux,\tlct,m))$ is computationally indistinguishable from $(\Enc(\pk,m),\sk)$.

RNCE perfectly fits the scenario of certified deletion. We use an RNCE scheme $\NCE.(\keygen,\Enc,\Dec,\Fake,\Reveal)$ instead of a normal PKE in the PKE with certified deletion scheme above.
To erase $\ske.\sk$, we use the RNC security. We change $\NCE.\Enc(\nce.\pk,\ske.\sk)$ and $\nce.\sk$ into $\nce.\tlct =\NCE.\Fake(\nce.\pk,\nce.\sk,\nce.\aux)$ and $\NCE.\Reveal(\nce.\pk,\nce.\sk,\nce.\aux,\nce.\tlct,\ske.\sk)$, respectively. Thus, as long as $\ske.\sk$ is given after a valid certification is generated, we can simulate the secret key of the PKE with certified deletion scheme. Using RNCE
solves the problem above since the reduction obtains both a target ciphertext and a secret key (real or fake) in the RNC security game. To complete the security proof, we use the certified deletion security of $\SKE$. Here, the point is that the reduction can simulate a secret key by $\Reveal$ since the reduction is given $\ske.\sk$ after a valid certificate is sent in the certified deletion security game.

If we use secret key RNCE instead of public key RNCE, we can achieve reusable SKE with certified deletion via the design idea above.
Secret/public key RNCE
can be constructed from IND-CPA SKE/PKE, respectively~\cite{TCC:CanHalKat05,C:KNTY19}, and SKE with certified deletion exists unconditionally~\cite{TCC:BroIsl20}. Thus, we can achieve PKE (resp. reusable SKE) with certified deletion from IND-CPA PKE (resp. OWFs).

Note that the RNCE technique above is the fundamental technique in this work.\footnote{Ananth and Kaleoglu concurrently and independently present essentially the same technique in the uncloneable encryption setting~\cite{EPRINT:AnaKal21}.} We use this technique both in the quantum communication case and in the classical communication case.

\paragraph{Our idea for ABE.}
We can extend the idea for PKE to the ABE setting. In this work, we focus on key-policy ABE, where a policy (resp. attribute) is embedded in a secret key (resp. ciphertext). The crucial tool is (receiver) non-committing ABE (NCABE), which we introduce in this work.

Although the definition of NCABE is basically a natural extension of that of RNCE,
we describe algorithms of NCABE for clarity. It helps readers who are not familiar with normal ABE. The first four algorithms below are algorithms of normal ABE.
\begin{description}
\item[$\Setup(1^\secp)\ra (\pk,\msk)$:] This is a setup algorithm that generates a public key and a master secret key.
\item[$\keygen(\msk,P)\ra \sk_P$:] This is a key generation algorithm that generates a secret key for a policy $P$.
\item[$\Enc(\pk,X,m)\ra \CT_X$:] This is an encryption algorithm that generates a ciphertext of $m$ under an attribute $X$.
\item[$\Dec(\sk_P,\CT_X)\ra m^\prime$ or $\bot$:] This is a decryption algorithm that decrypts $\CT_X$ if $P(X)=\top$. If $P(X)=\bot$, it outputs $\bot$.
\item[$\FakeSetup(1^\secp)\ra (\pk,\aux)$:] This is a fake setup algorithm that generates a public key and a trapdoor auxiliary information $\aux$.   
\item[$\FakeCT(\pk,\aux,X)\ra \tlct_X$:] This is a fake ciphertext generation algorithm that generates a fake ciphertext $\tlct_X$ under an attribute $X$. 
\item[$\FakeSK(\pk,\aux,P)\ra \tlsk_P$:] This is a fake key generation algorithm that generates a fake secret key $\tlsk_P$ for $P$.
\item[$\Reveal(\pk,\aux,\tlct,m)\ra \tlmsk$:] This is a reveal algorithm that generates a fake master secret key $\tlmsk$.
\end{description}
Roughly speaking, the NCABE security notion requires that the fake public key, master secret key, ciphertext, and secret keys are computationally indistinguishable from the normal public key, master key, ciphertext, and secret keys.
It is easy to see that the hybrid encryption approach works in the ABE setting as well. Thus, the goal is achieving an NCABE scheme.

Our NCABE construction follows the RNCE construction based on IND-CPA PKE~\cite{TCC:CanHalKat05,C:KNTY19}. However, the crucial difference between the PKE and ABE settings is that, in the ABE setting, adversaries are given many secret keys for queried policies (that is, we consider collusion-resistance). There is an obstacle to achieving collusion resistance because secret keys for policies depend on a master secret key.
Note that adversaries can send secret key queries \emph{both before and after} the target ciphertext is given.

First, we explain the RNCE scheme from PKE. Although we explain the $1$-bit plaintext case, it is easy to extend to the multi-bit case. The idea is the simple double encryption technique by Naor and Yung~\cite{STOC:NaoYun90}, but we do not need non-interactive zero-knowledge (NIZK). We generate two key pairs $(\pk_0,\sk_0)$ and $(\pk_1,\sk_1)$ and set $\pk\seteq (\pk_0,\pk_1)$, $\sk\seteq \sk_z$, and $\aux=(\sk_0,\sk_1,z^\ast)$ where $z,z^\ast\chosen\zo{}$. A ciphertext consists of $\Enc(\pk_0,b)$ and $\Enc(\pk_1,b)$. We can decrypt the ciphertext by using $\sk_z$.
A fake ciphertext $\tlct$ is $(\Enc(\pk_{z^\ast},0),\Enc(\pk_{1-z^\ast},1))$.
To generate a fake secret key for a plaintext $m^\ast$, the reveal algorithm outputs $\sk_{z^\ast \xor m^\ast}$. It is easy to see decrypting $\tlct$ by $\sk_{z^\ast \xor m^\ast}$ yields $m^\ast$.

Our NCABE is based on the idea above. That is, we use two key pairs $(\pk_0,\msk_0)$ and $(\pk_1,\msk_1)$ of a normal ABE scheme $\ABE.(\Setup,\keygen,\Enc,\Dec)$, and a ciphertext consists of $(\ABE.\Enc(\pk_0,X,b),\ABE.\Enc(\pk_1,X,b))$ where $X$ is an attribute. Our reveal algorithm outputs $\msk_{z^\ast \xor m^\ast}$ for a plaintext $m^\ast$ as in the PKE case. The problem is a secret key for a policy $P$. A naive idea is that a key generation algorithm outputs $\sk_P \lrun \ABE.\keygen(\msk_z,P)$ where $z \chosen \zo{}$ is chosen in the setup algorithm, and a fake key generation algorithm outputs $\tlsk_P \lrun \ABE.\keygen(\msk_{z^\ast \xor m^\ast},P)$. However, this apparently does not work since $\tlsk_P$ depends on $m^\ast$. Unless $\tlsk_P$ is independent of $m^\ast$, we cannot use NCABE to achieve ABE with certified deletion because $\ske.\sk$ of SKE with certified deletion is sent \emph{after} a valid certification is generated ($\ske.\sk$ would be a plaintext of ABE in the hybrid encryption). To make a fake key generation be independent of $m^\ast$, we need to hide which master secret key is used to generate a secret key for $P$. If a secret key leaks information about which secret key (extracted from $\msk_0$ or $\msk_1$) is used, we cannot adaptively select a fake master secret key in the reveal algorithm.

IO helps us to overcome this hurdle. Our idea is as follows. A key generation algorithm outputs an obfuscated circuit of a circuit $\sfD[\sk_z]$ that takes a ciphertext $(\abe.\CT_0,\abe.\CT_1)\seteq (\ABE.\Enc(\pk_0,X,b),\ABE.\Enc(\pk_1,X,b))$ and outputs $\ABE.\Dec(\sk_z,\abe.\CT_z)$ where $z\chosen \zo{}$ and $\sk_z \lrun \ABE.\keygen(\msk_z,P)$ is hard-coded in $\sfD$. A fake key generation algorithm outputs an obfuscated circuit of a circuit $\sfD_0[\sk_0]$ that takes $(\abe.\CT_0,\abe.\CT_1)$ and outputs $\ABE.\Dec(\sk_0,\abe.\CT_0)$ where $\sk_0 \lrun \ABE.\keygen(\msk_0,P)$ is hard-coded in $\sfD_0$. Note that the fake secret key cannot be used to decrypt a fake ciphertext $(\abe.\CT_{z^\ast},\abe.\CT_{1-z^\ast})\seteq (\ABE.\Enc(\pk_{z^\ast},X,0),\ABE.\Enc(\pk_{1-z^\ast},X,1))$ where $z^\ast \chosen \zo{}$ since $P(X)=\bot$ must hold by the requirement on ABE security. Since the decryption circuits $\sfD$ and $\sfD_0$ are obfuscated, adversaries have no idea about which secret key ($\sk_0$ or $\sk_1$) is used for decryption. This idea is inspired by the functional encryption (FE) scheme by Garg et al.~\cite{SICOMP:GGHRSW16}.

The final issue is that adversaries can detect whether a secret key is real or fake if they use an invalid ciphertext $(\ABE.\Enc(\pk_0,b),\ABE.\Enc(\pk_1,1-b))$ as an input to the obfuscated circuits. To prevent this attack, we use statistically sound NIZK to check the consistency of double encryption as the FE scheme by Garg et al.~\cite{SICOMP:GGHRSW16}. By the statistical soundness of NIZK, we can guarantee that the obfuscated decryption circuit does not accept invalid ciphertexts, and $\sfD$ and $\sfD_0$ are functionally equivalent. Note that a secret key for policy $P$ outputs $\bot$ for the target ciphertext since a target attribute $X^\ast$ in the target ciphertext satisfies $P(X)=\bot$. We do not need the simulation-soundness, unlike the FE scheme by Garg et al. due to the following reason. In the FE scheme, plain PKE schemes are used for the double encryption technique and a secret key $\sk_0$ or $\sk_1$ is hard-coded in a functional decryption key. Before we use PKE security under $\pk_b$, we need to switch decryption from by $\sk_b$ to $\sk_{1-b}$ by IO security. During this phase, we need to use a fake simulated proof of NIZK. Thus, the simulation-soundness is required. However, in our ABE setting, a secret key for $P$ (not the master secret keys $\msk_0,\msk_1$) is hard-coded in $\sfD$ (or $\sfD_0$) above. Thanks to the ABE key oracle, $\sk_0$ and $\sk_1$ for $P$ are always available in reductions. We can first use IO security to switch from $\sfD$ to $\sfD_0$. After that, we change a real NIZK proof into a fake one. Thus, our NCABE scheme does not need the simulation-soundness. 
This observation enables us to achieve the adaptive security rather than the selective security unlike the FE scheme by Garg et al.\footnote{In the initial version of this work~\cite{EPRINT:NisYam21}, we achieve only the selective security because we use statistical simulation-sound NIZK as the FE scheme by Garg et al.~\cite{SICOMP:GGHRSW16}. We improve the result.} 
See~\cref{sec:NCABE_from_IO} for the detail. Thus, we can achieve NCABE from IO and OWFs since adaptively secure standard ABE can be constructed from IO and OWFs.


\subsection{Technical Overview Part II: Classical Communication Case}\label{sec:technical_overview_classical}
We provide an overview of how to achieve 
privately verifiable and publicly verifiable
PKE with certified deletion using classical communication in this section.
We note that both of them rely on interactive encryption algorithms. 

\paragraph{Privately verifiable construction.}
For realizing a privately verifiable construction with classical communication, we rely on \emph{NTCF functions}~\cite{FOCS:BCMVV18,FOCS:Mahadev18a}.
In this overview, we consider an ideal version, noise-free claw-free permutations for simplicity. 
A trapdoor claw-free permutation is $f:\bit \times \bit^w \rightarrow \bit^w$ such that 
\revise{
(1) $f(0,\cdot)$ and $f(1,\cdot)$ are permutations over $\bit^w$, (2) given the description of $f$, it is hard to find $x_0$ and $x_1$ such that $f(0,x_0)=f(1,x_1)$, and (3) there is a trapdoor $\td$ that enables one to efficiently find $x_0$ and $x_1$ such that $f(0,x_0)=f(1,x_1)=y$ for any $y$. 
}
In addition, the existing work showed that (a noisy version of) it satisfies a property called the \emph{adaptive hardcore bit property} under the LWE assumption 
~\cite{FOCS:BCMVV18}. 
To explain this, suppose that one 
generates the state $\sum_{b,x}\ket{b}|x\rangle|f(b,x)\rangle$, and measures the third register in the computational basis to get a result $y$.
Then the first and second registers collapse to the state $\frac{1}{\sqrt{2}}\left(\ket{0}\ket{x_0}+\ket{1}\ket{x_1}\right)$ with $f(0,x_0)=f(1,x_1)=y$.
If one measures the state in the computational basis, the measurement outcome is $(0,x_0)$ or $(1,x_1)$.
If, on the other hand, one measures the state in the Hadamard basis, the measurement outcome is $(e,d)$ 
such that $e=d\cdot(x_0\oplus x_1)$.
The adaptive hardcore bit property roughly means that
once one gets $(0,x_0)$ or $(1,x_1)$, it cannot output $(e,d)$ such that 
$d\neq 0$ and 
$e=d\cdot(x_0\oplus x_1)$ with probability better than $1/2+\negl(\secp)$.
Note that this is a tight bound since $e=d\cdot(x_0\oplus x_1)$ holds with probability $1/2$ if we randomly choose $e$.    
Existing works showed that this property can be amplified by parallel repetition~\cite{CoRR:RadSat19,EPRINT:KitNisYam20}: 
Specifically, let 
$(0,x_{i,0})$ and $(1,x_{i,1})$ be the preimages of $y_i$ under $f_i$ for $i\in[n]$  where  $n=\omega(\log \secp)$. 
Then once one gets a sequence $\{b_i,x_{i,b_i}\}_{i\in[n]}$ for some $b_1\concat...\concat b_n \in \bit^n$,   
it can get a sequence $\{e_i,d_i\}_{i\in [n]}$ such that 
$d_i\neq 0$ and 
$e_i=d_i\cdot(x_{i,0}\oplus x_{i,1})$ only with negligible probability.

We use this property to construct an encryption scheme with certified deletion. A natural idea would be as follows:
The sender sends $n$ functions $\{f_i\}_{i\in [n]}$ to the receiver,   the receiver generates  $\{y_i\}_{i\in [n]}$ along with states $\{\frac{1}{\sqrt{2}}\left(\ket{0}\ket{x_{i,0}}+\ket{1}\ket{x_{i,1}}\right)\}_{i\in [n]}$ as above and sends $\{y_i\}_{i\in [n]}$ to the sender, and 
the sender sends receiver a ciphertext $\ct$  decryptable only when $\{b_i,x_{i,b_i}\}_{i\in[n]}$ for some $b_1\concat...\concat b_n \in \bit^n$ is available.
We discuss how to implement such a ciphertext later. 
We use $\{e_i,d_i\}_{i\in [n]}$ such that 
$e_i=d_i\cdot (x_{i,0}\oplus x_{i,1})$ as a deletion certificate. 
The receiver can decrypt the ciphertext  by measuring the states in the computational basis, and once it outputs a valid deletion
certificate, it must ``forget" preimages by the amplified adaptive hardcore property and thus cannot decrypt the ciphertext.   
This idea can be implemented by a straightforward manner if we generate $\ct$ by (extractable) witness encryption~\cite{STOC:GGSW13,C:GKPVZ13} under the corresponding $\NP$ language. 
However, since witness encryption is a strong assumption, we want to avoid this. 
Indeed, we can find the following candidate construction using a hash function $H$ modeled as a random oracle. 
We set the ciphertext as $\ct\seteq \{\ct_{i,b}\}_{i\in[n],b\in\bit}$ 
where $\{m_i\}_{i\in[n]}$ is an $n$-out-of-$n$ secret sharing  of the message $m$ and 
$\ct_{i,b}\seteq m_i\oplus H(b\concat x_{i,b})$. 
The intuition is that an adversary has to get  $m_i$ for all $i\in[n]$ to get $m$  and it has to know $(0,x_{i,0})$ or $(1,x_{i,1})$ to know $m_i$.
Therefore, it seems that any adversary that gets any information of $m$ can be used to extract a sequence $\{b_i,x_{i,b_i}\}_{i\in[n]}$ for some $b_1\concat...\concat b_n \in \bit^n$.
If this is shown, then it is straightforward to prove that the adversary can get no information of $m$ once it submits a valid deletion certificate by the amplified adaptive hardcore property as explained above. 
However, turning this intuition into a formal proof seems difficult. 
A common technique to extract information from adversary's  random oracle queries is the one-way to hiding lemma~\cite{JACM:Unruh15,C:AmbHamUnr19}, which roughly claims that if the adversary distinguishes $H(X)$ from random, then we would get $X$ with non-negligible probability by measuring a randomly chosen query. 
Here, a problem is that we have to extract $n$ strings $\{b_i,x_{i,b_i}\}_{i\in[n]}$ simultaneously.
On the other hand, the extraction by the one-way to hiding lemma disturbs adversary's state by a  measurement, and thus we cannot use this technique sequentially.\footnote{A recent work by Coladangelo, Majenz, and Poremba \cite{CMP20} studied what is called ``simultaneous one-way to hiding lemma", but their setting is different from ours and their lemma cannot be used in our setting.}

The difficulty above comes from the fact that the sender cannot know which of $(0,x_{i,0})$ and $(1,x_{i,1})$ the receiver will get, and thus it has to send a ciphertext that can be decrypted in either case.
To resolve this issue, we rely on the injective invariance, which roughly says that there is an injective function $g$ that is computationally indistinguishable from $f$~\cite{FOCS:Mahadev18a}.  
First, suppose that we just use $g$ instead of $f$ in the above idea.  
Since $g$ is injective, there is a unique preimage $(b_i,x_i)$ of $y_i$, in which case the sender knows that the receiver will get $\{(b_i,x_i)\}_{i\in[n]}$ by the standard basis measurement.
In this case, the aforementioned problem can be easily resolved by setting $\ct\seteq m\oplus H(b_1\concat x_1\concat...\concat b_n\concat x_n)$ as the ciphertext.
In this case, it is easy to prove that 
we can extract $\{b_i,x_i\}_{i\in[n]}$ if an adversary obtains some information of $m$
by applying the standard one-way to hiding lemma. 
However, an obvious problem is that  the deletion certificate no longer works for $g$ since 
the receiver's state collapses to a classical state after the measurement of $\{y_i\}_{i\in [n]}$ and thus 
the Hadamard basis measurement results in just uniform bits.

Our idea is to take advantages of both of them. 
Specifically, the sender sends  functions $\{\eta_i\}_{i\in[n]}$, where
$\eta_i$ is the $g$-type function for $i\in S$ and it is the $f$-type function for $i\in [n]\setminus S$ with a certain set $S\subset[n]$.
The receiver generates a set of states each of which is a superposition of two preimages of a $f$-type function or
a state encoding the unique preimage of a $g$-type function.
The preimages of $g$-type functions are used for encryption/decryption, 
and 
the Hadamard measurement results  are used for deletion certificate, whose validity is only checked on positions where $f$-type functions are used. 
We also include an encryption of the description of the subset $S$ in the ciphertext so that a legitimate receiver can know which position should be used in the decryption. 
More precisely,  we set $\ct\seteq (
\Enc(S), 
m\oplus H(\{b_i,x_i\}_{i\in [S]}))$
where $\Enc$ is a PKE scheme with the RNC security.\footnote{We require $\Enc$ to satisfy the RNC security due to a similar reason to that in~\cref{sec:technical_overview_quantum}, which we omit to explain here.}\footnote{In the actual construction, there is an additional component that is needed for preventing an adversary from decrypting the ciphertext \emph{before} outputting a valid deletion certificate without the decryption key.
This is just a security as standard PKE and can be added easily.
Thus, we omit this and focus on the security \emph{after} outputting a valid deletion certificate.
}   A deletion certificate $\{e_i,d_i\}_{i\in[n]}$ is valid if we have 
$d_i\neq 0$ and 
$e_i=d_i\cdot (x_{i,0}\oplus x_{i,1})$ for all $i\in [n]\setminus S$.
For the security proof of this construction, 
the amplified adaptive hardcore property cannot be directly used, 
because it is a property about $f$-type functions whereas the above construction mixes
$f$-type functions and $g$-type functions, and what we want to have is the mutually-exclusive property between
preimages of $g$-type functions and
deletion certificates of $f$-type functions.
To solve the problem, we introduce a new property  which we call 
{\it the cut-and-choose adaptive hardcore property} (\cref{lem:cut_and_choose_adaptive_hardcore}).
The cut-and-choose adaptive hardcore property intuitively means that once the receiver issues a deletion certificate $\{e_i,d_i\}_{i\in [n]}$ that is valid for all $i\in [n]\setminus S$ before knowing $S$, 
it can no longer generate correct preimages $\{b_i,x_i\}_{i\in [S]}$ even if it receives $S$ later. 
Intuitively, this holds because the only way to obtain such $\{e_i,d_i\}_{i\in [n]}$ before knowing $S$ would be to measure the states in the Hadamard basis for all $i\in[n]$, in which case the receiver should forget all preimages. 
We show that the cut-and-choose adaptive hardcore property can be reduced to the adaptive hardcore
bit property and injective invariance.
The new property we show itself is of independent interest, and we believe it will be useful in many other applications
of quantum cryptography.

Because the only known construction of NTCF functions~\cite{FOCS:BCMVV18,FOCS:Mahadev18a}
assumes the LWE assumption, our construction of the PKE with privately verifiable certified deletion with classical communication
is also  based on the LWE assumption and our security proof is done in the QROM.
We note that the construction only achieves private verification because verification of  deletion certificates requires both of two preimages of $f$-type functions, which cannot be made public.

\paragraph{Publicly verifiable construction.}
The above construction is not publicly verifiable because the verification of the validity of $(e_i,d_i)$  requires both preimages $x_{i,0}$ and $x_{i,1}$, which cannot be made public.
One might notice that the validity check of the preimage  can be done publicly,
and might suggest the following construction: preimages  are used for deletion certificate, and Hadamard measurement outcomes $\{e_i,d_i\}_{i\in [n]}$ are used as the decryption key of the encryption. Because a valid $\{e_i,d_i\}_{i\in [n]}$  is a witness of an $\NP$ statement,
we could use  (extractable) witness encryption~\cite{STOC:GGSW13,C:GKPVZ13} to ensure that a receiver can decrypt the message only if it knows a valid $\{e_i,d_i\}_{i\in [n]}$.
This idea, however, does not work, because the statement of the witness encryption contains private information (i.e., preimages), and witness encryption ensures nothing about privacy of the statement under which a message is encrypted.

Our idea to solve the problem is 
to use the one-shot signature~\cite{STOC:AGKZ20}.
Roughly speaking, one-shot signature (with a message space $\bit$) enables one to generate a classical public key $\pk$ along with a quantum secret key $\sk$, which can be used to generate either of  a signature $\sigma_0$ for message $0$ or $\sigma_1$ for message $1$, but not both. 
We note that a signature can be verified publicly. 

We combine one-shot signatures with  extractable witness encryption.\footnote{We note that a combination of one-shot signatures and extractable witness encryption appeared in the work of Georgiou and Zhandry \cite{EPRINT:GeoZha20} in a related but different context.} The encryption $\Enc(m)$ of a message $m$ 
in our construction is a ciphertext of witness encryption of message
$m$ under the statement corresponding to the verification of one-shot signature for  message $0$. 
The deletion certificate is, on the other hand, a one-shot signature for message $1$.
Once a valid signature of $1$ is issued, a valid signature of $0$, which is a decryption key of our witness encryption, 
is no longer possible to generate due to the security of the one-shot signature. 
This intuitively ensures the certified deletion security of our construction.
Because signatures are publicly verifiable,
the verification of our construction is also publicly verifiable.  
In the actual construction, in order to prevent an adversary from decrypting the ciphertext before issuing the deletion certificate, we add an additional layer of encryption, for which we use RNCE   due to the similar reason to that in \cref{sec:technical_overview_quantum}. 

Unfortunately, the only known construction of the one-shot signature needs classical oracles. It is an open question whether
we can construct a PKE with publicly verifiable certified deletion with classical communication
based on only standard assumptions such as the LWE assumption.

\section{Preliminaries}\label{sec:prelim}

\subsection{Notations and Mathematical Tools}\label{sec:notation}
We introduce basic notations and mathematical tools used in this paper.

In this paper,  $x \chosen X$ denotes selecting an element from a finite set $X$ uniformly at random, and $y \gets \algo{A}(x)$ denotes assigning to $y$ the output of a probabilistic or deterministic algorithm $\algo{A}$ on an input $x$. When we explicitly show that $\algo{A}$ uses randomness $r$, we write $y \gets \algo{A}(x;r)$. When $D$ is a distribution, $x \chosen D$ denotes sampling an element from $D$. Let $[\ell]$ denote the set of integers $\{1, \cdots, \ell \}$, $\secp$ denote a security parameter, and $y \seteq z$ denote that $y$ is set, defined, or substituted by $z$. For a string $s \in \zo{\ell}$, $s[i]$ denotes $i$-th bit of $s$.
QPT stands for quantum polynomial time. 
PPT stands for (classical) probabilistic polynomial time.
For a subset $S\subseteq W$ of a set $W$, $\overline{S}$ is the complement of $S$, i.e., $\overline{S}:=W\setminus S$.

A function $f: \N \ra \R$ is a negligible function if for any constant $c$, there exists $\secp_0 \in \N$ such that for any $\secp>\secp_0$, $f(\secp) < \secp^{-c}$. We write $f(\secp) \leq \negl(\secp)$ to denote $f(\secp)$ being a negligible function.
A function $g: \N \ra \R$ is a noticeable function if there exist constants $c$ and $\secp_0$ such that for any $\secp \ge \secp_0$, $g(\secp) \ge \secp^{-c}$.
The trace distance between two states $\rho$ and $\sigma$ is given by $\parallel \rho-\sigma 
\parallel_{tr}$, where 
$\parallel A\parallel_{tr}\seteq \mathrm{Tr}\sqrt{A^{\dagger}A}$
is the trace norm.
We call a function $f$ a density on $X$ if $f:X\rightarrow[0,1]$ such that $\sum_{x\in X}f(x)=1$.
For two densities $f_0$ and $f_1$ over the same finite domain $X$, the Hellinger distance between $f_0$ and $f_1$ is ${\bf H}^2(f_0,f_1)\seteq 1-\sum_{x\in X}\sqrt{f_0(x)f_1(x)}$.

\subsection{Cryptographic Tools}\label{sec:basic_crypto}

In this section, we review cryptographic tools used in this paper.

\paragraph{Public key encryption.}

\begin{definition}[Public Key Encryption (Syntax)]\label{definition:public key}
A public key encryption scheme $\Sigma=(\keygen,\Enc,\Dec)$ is a triple of PPT algorithms, a key generation algorithm $\keygen$,
an encryption algorithm $\Enc$ and a decryption algorithm $\Dec$, with plaintext space $\cM$.
\begin{description}
    \item[$\keygen(1^\secp)\ra (\pk,\sk)$:] The key generation algorithm takes as input the security parameter $1^\secp$ and outputs a public key $\pk$ and a secret key $\sk$.
    \item[$\Enc(\pk,m) \ra \ct$:] The encryption algorithm takes as input $\pk$ and a plaintext $m \in \Ms$, and outputs a ciphertext $\ct$.
    \item[$\Dec(\sk,\ct) \ra m^\prime \mbox{ or } \bot$:] The decryption algorithm takes as input $\sk$ and $\ct$, and outputs a plaintext $m^\prime$ or $\bot$.
\end{description}
\end{definition}

\begin{definition}[Correctness for PKE]\label{def:pke_correctness}
For any $\secp\in \N$, $m\in\Ms$,
\begin{align}
\Pr\left[
\Dec(\sk,\ct)\ne m
\ \middle |
\begin{array}{ll}
(\pk,\sk)\lrun \keygen(1^\secp)\\
\ct \lrun \Enc(\pk,m)
\end{array}
\right] 
\le\negl(\secp).
\end{align}
\end{definition}

\begin{definition}[OW-CPA security]\label{definition:OW-CPA}
Let $\Sigma =(\keygen,\Enc,\Dec)$ be a PKE scheme. For QPT adversaries $\cA$, we define the following security experiment $\expa{\Sigma,\cA}{ow}{cpa}(\secp)$.
\begin{enumerate}
    \item The challenger generates $(\pk,\sk)\lrun \keygen(1^{\secp})$,
    chooses $m\lrun \Ms$, computes $\ct\lrun \Enc(\pk,m)$,  
    and sends $(\pk,\ct)$ to $\cA$.
    \item $\cA$ outputs $m'$. The experiment outputs $1$ if $m'=m$ and otherwise $0$.
\end{enumerate}
We say that the $\Sigma$ is OW-CPA  secure if for any QPT $\cA$, it holds that
\begin{align}
\advb{\Sigma,\cA}{ow}{cpa}(\secp) \seteq \Pr[\expa{\Sigma,\cA}{ow}{cpa}(\secp)=1]\leq \negl(\secp).
\end{align}
Note that we assume $1/\abs{\Ms}$ is negligible.
\end{definition}

\begin{definition}[IND-CPA security]\label{definition:IND-CPA}
Let $\Sigma =(\keygen,\Enc,\Dec)$ be a PKE scheme. For QPT adversaries $\cA$, we define the following security experiment $\expa{\Sigma,\cA}{ind}{cpa}(\secp,b)$.

\begin{enumerate}
    \item The challenger generates $(\pk,\sk)\lrun \keygen(1^{\secp})$, and sends $\pk$ to $\cA$.
    \item $\cA$ sends $(m_0,m_1)\in\cM^2$ to the challenger.
    \item The challenger computes $\ct_b \lrun \Enc(\pk,m_b)$, and sends $\ct_b$ to $\cA$.
    \item $\cA$ outputs $b'\in\bit$. This is the output of the experiment.
\end{enumerate}
We say that the $\Sigma$ is IND-CPA secure if for any QPT $\cA$, it holds that
\begin{align}
\advb{\Sigma,\cA}{ind}{cpa}(\secp) \seteq \abs{\Pr[\expa{\Sigma,\cA}{ind}{cpa}(\secp,0)=1]  - \Pr[\expa{\Sigma,\cA}{ind}{cpa}(\secp,1)=1]} \leq \negl(\secp).
\end{align}
\end{definition}

It is well-known that the IND-CPA security implies the OW-CPA security. 
There are many IND-CPA secure PKE schemes against QPT adversaries under standard cryptographic assumptions.
A famous one is Regev PKE scheme, which is IND-CPA secure if the learning with errors (LWE) assumption holds against QPT adversaries~\cite{JACM:Regev09}. See the references for the LWE assumption and constructions of post-quantum secure PKE~\cite{JACM:Regev09,STOC:GenPeiVai08}.

\begin{definition}[Indistinguishability Obfuscator~\cite{JACM:BGIRSVY12}]\label{def:io}
A PPT algorithm $\iO$ is an IO for a circuit class $\{\cC_\secp\}_{\secp \in \N}$ if it satisfies the following two conditions.

\begin{description}
\item[Functionality:] For any security parameter $\secp \in \N$, circuit $C \in \cC_\secp$, and input $x$, we have that
\begin{align}
\Pr[C'(x)=C(x) \mid C' \lrun \iO(C)] = 1\enspace.
\end{align}

\item[Indistinguishability:] For any QPT distinguisher $\cD$ and
for any pair of circuits $C_0, C_1 \in \cC_\secp$ such that for any input $x$, $C_0(x) = C_1(x)$ and $\abs{C_0}=\abs{C_1}$,
it holds that
 \begin{align}
\adva{\iO,\cD}{io}(\secp) \seteq \abs{
 \Pr\left[\cD(\iO(C_0))= 1\right] -
 \Pr\left[\cD(\iO(C_1))= 1\right]
 } \leq \negl(\secp)\enspace.
 \end{align}
\end{description}
\end{definition}
There exist candidate constructions of IO against QPT adversaries~\cite{STOC:GayPas21,EC:WeeWic21,EPRINT:BDGM20b}.

\paragraph{Attribute-based encryption.}
We review the notion of (key-policy)  attribute-based encryption (ABE)~\cite{EC:SahWat05,CCS:GPSW06}.
\begin{definition}[Attribute-Based Encryption (Syntax)]\label{def:abe_syntax}
An ABE scheme is a tuple of PPT algorithms $(\Setup,\keygen,\Enc,\Dec)$ with plaintext space $\Ms$, attribute space $\Xs$, and
 policy space $\Ps$.
\begin{description}
    \item[$\Setup(1^\secp)\ra (\pk,\msk)$:] The setup algorithm takes as input the security parameter $1^\secp$ and outputs a public key $\pk$ and a master secret key $\msk$.
    \item[$\keygen (\msk,P) \ra \sk_P$:] The key generation algorithm takes as $\msk$ and a policy $P \in \Ps$, and outputs a secret key $\sk_P$.
    \item[$\Enc(\pk,X,m) \ra \ct_X$:] The encryption algorithm takes as input $\pk$, an attribute $X\in\Xs$, and a plaintext $m \in \Ms$, and outputs a ciphertext $\ct_X$.
    \item[$\Dec(\sk_P,\ct_X) \ra m^\prime \mbox{ or } \bot$:] The decryption algorithm takes as input $\sk_P$ and $\ct_X$, and outputs a plaintext $m^\prime$ or $\bot$.
\end{description}
\end{definition}

\begin{definition}[Perfect Correctness for ABE]\label{def:abe_correctness}
For any $\secp\in \N$, $m\in\Ms$, $P\in\Ps$, and $X\in\Xs$ such that $P(X)=\top$,
\begin{align}
\Pr\left[
\Dec(\sk_P,\ct_X)\ne m
\ \middle |
\begin{array}{ll}
(\pk,\msk)\lrun \Setup(1^\secp)\\
\sk_P \lrun \keygen(\msk,P)\\
\ct_X \lrun \Enc(\pk,X,m)
\end{array}
\right] 
=0.
\end{align}
\begin{remark}
Though we allow negligible decryption error for other primitives in this paper, we require perfect correctness for ABE.
This is needed for the security proof of the non-committing ABE scheme in~\cref{sec:NCABE_from_IO}.  
\end{remark}
\end{definition}

\begin{definition}[(Adaptive) IND-CPA Security for ABE]\label{def:abe_ind-cpa}
Let $\Sigma=(\Setup, \keygen, \Enc, \Dec)$ be an ABE scheme with plaintext space $\Ms$, attribute space $\Xs$, and policy space $\Ps$.
We consider the following security experiment $\expa{\Sigma,\cA}{ind}{cpa}(\secp,b)$.

\begin{enumerate}
    \item The challenger computes $(\pk,\msk) \lrun \Setup(1^\secp)$ and sends $\pk$ to $\cA$.
    \item $\cA$ sends a query $P\in \Ps$ to the challenger and it returns $\sk_P \lrun \keygen(\msk,P)$ to $\cA$. This process can be repeated polynomially many times.
    \item $\cA$ sends 
    $X^\ast \in \Xs$ and 
    $(m_0,m_1) \in \Ms^2$ to the challenger where $X^\ast$ must satisfy $P(X^\ast)=\bot$ for all key queries $P$ sent so far.
    \item The challenger computes $\ct_b \la \Enc(\pk,X^\ast,m_b)$ and sends $\ct_b$ to $\cA$.
    \item Again, $\cA$ can send key queries $P$ that must satisfy $P(X^\ast)=\bot$.
    \item Finally, $\cA$ outputs $b'\in \bit$.
\end{enumerate}
We say that the $\Sigma$ is IND-CPA secure if for any QPT adversary $\cA$, it holds that
\begin{align}
\advb{\Sigma,\cA}{ind}{cpa}(\secp)\seteq \abs{\Pr[ \expa{\Sigma,\cA}{ind}{cpa}(\secp, 0)=1] - \Pr[ \expa{\Sigma,\cA}{ind}{cpa}(\secp, 1)=1] }\leq \negl(\secp).
\end{align}
\end{definition}
If $\Xs =\zo{\ell}$ where $\ell$ is some polynomial and $\Ps$ consists of all polynomial-sized Boolean circuits, we say ABE for circuits in this paper.

It is known that \emph{selectively secure} ABE for circuits exists under the LWE assumption, which can be upgraded into adaptively secure one by complexity leveraging if we assume subexponential hardness of the LWE problem ~\cite{JACM:GorVaiWee15}. 
Alternatively, if there exist IO and OWFs, there exists adaptively secure functional encryption for $\Ppoly$~\cite{C:Waters15,C:ABSV15}, which can be trivially downgraded into (adaptively) IND-CPA secure ABE for circuits.  
\begin{theorem}\label{thm:ABE_circuits_from_LWE_or_IO}
If one of the following holds, there exists (adaptively) IND-CPA secure (key-policy) ABE scheme for circuits against QPT adversaries.
\begin{itemize}
    \item the LWE problem is subexponentially hard against QPT adversaries~\cite{JACM:GorVaiWee15}.
    \item there exist IO and OWFs secure against QPT adversaries~\cite{C:Waters15,C:ABSV15}.
\end{itemize}
\end{theorem}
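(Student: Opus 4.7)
The plan is to invoke each cited construction in a black-box manner and treat the two bullets as separate reductions.

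For the LWE branch, I would start from the selectively IND-CPA secure key-policy ABE for polynomial-size circuits of Gorbunov--Vaikuntanathan--Wee and upgrade it to adaptive security by standard complexity leveraging. The attribute space is $\zo{\ell}$ for some polynomial $\ell=\ell(\secp)$, so a reduction that guesses the challenge attribute $X^\ast$ uniformly at random at the outset converts any adaptive adversary into a selective one while losing only a factor $2^\ell$. Because $\ell$ is polynomial in $\secp$, choosing the LWE modulus-to-noise ratio so that the underlying scheme is $2^{-\ell(\secp)-\omega(\log\secp)}$-secure against QPT adversaries -- which is exactly what subexponential quantum hardness of LWE affords -- makes the resulting adaptive advantage negligible.

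For the IO branch, I would invoke the adaptively IND-CPA secure functional encryption scheme for $\Ppoly$ of Waters (or the variant by Ananth--Brakerski--Segev--Vaikuntanathan), both of which are built from IO together with OWFs secure against QPT adversaries. Given such an FE scheme, key-policy ABE for circuits is an immediate downgrade: encrypt the pair $(X,m)$ under FE, and for each policy $P$ issue an FE functional key for the circuit $f_P$ defined by $f_P(X,m)=m$ if $P(X)=\top$ and $f_P(X,m)=\bot$ otherwise. Adaptive IND-CPA of ABE then reduces directly to adaptive IND-CPA of FE, because for any admissible key query $P$ we have $P(X^\ast)=\bot$, and hence $f_P(X^\ast,m_0)=f_P(X^\ast,m_1)=\bot$, which is precisely the admissibility condition in the FE experiment.

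The only step that calls for genuine bookkeeping is the complexity leveraging in the LWE case -- verifying that the $2^\ell$ loss is tolerated by the subexponential assumption and that GVW is stated for key-policy ABE over the class of polynomial-size circuits with polynomial attribute length. The IO branch is a near-trivial syntactic reduction from FE to ABE, so I expect no real obstacle there.
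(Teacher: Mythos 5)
Your proposal matches the paper's argument exactly: the paper also obtains the LWE branch by complexity-leveraging the selectively secure GVW key-policy ABE for circuits under subexponential LWE, and the IO branch by taking the adaptively secure FE for $\Ppoly$ of Waters/ABSV and downgrading it trivially to key-policy ABE. Your explicit spelling-out of the FE-to-ABE downgrade (encrypt $(X,m)$, issue keys for $f_P$ with $f_P(X,m)=m$ if $P(X)=\top$ and $\bot$ otherwise, and observe that admissibility transfers) is exactly the ``trivial downgrade'' the paper alludes to.
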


\paragraph{Encryption with certified deletion.} Broadbent and Islam introduced the notion of encryption with certified deletion~\cite{TCC:BroIsl20}.
Their notion is for secret key encryption (SKE).
They consider a setting where a secret key is used only once (that is, one time SKE).
Although it is easy to extend the definition to reusable secret key setting, we describe the definition for the one-time setting in this section.
We provide a definition that is accommodated to the reusable setting in~\cref{sec:reusable_SKE_cd}.  

\begin{definition}[One-Time SKE with Certified Deletion (Syntax)]\label{def:sk_cert_del}
A one-time secret key encryption scheme with certified deletion is a tuple of QPT algorithms $(\keygen,\Enc,\Dec,\Delete,\Vrfy)$ with plaintext space $\Ms$ and key space $\Ks$.
\begin{description}
    \item[$\keygen (1^\secp) \ra \sk$:] The key generation algorithm takes as input the security parameter $1^\secp$ and outputs a secret key $\sk \in \Ks$.
    \item[$\Enc(\sk,m) \ra \ct$:] The encryption algorithm takes as input $\sk$ and a plaintext $m\in\Ms$ and outputs a ciphertext $\ct$.
    \item[$\Dec(\sk,\ct) \ra m^\prime \mbox{ or }\bot$:] The decryption algorithm takes as input $\sk$ and $\ct$ and outputs a plaintext $m^\prime \in \Ms$ or $\bot$.
    \item[$\Delete(\ct) \ra \cert$:] The deletion algorithm takes as input $\ct$ and outputs a certification $\cert$.
    \item[$\Vrfy(\sk,\cert)\ra \top \mbox{ or }\bot$:] The verification algorithm takes $\sk$ and $\cert$ and outputs $\top$ or $\bot$.
\end{description}
\end{definition}

\begin{definition}[Correctness for One-Time SKE with Certified Deletion]\label{def:sk_cd_correctness}
There are two types of correctness. One is decryption correctness and the other is verification correctness.
\begin{description}
\item[Decryption correctness:] For any $\secp\in \N$, $m\in\Ms$, 
\begin{align}
\Pr\left[
\Dec(\sk,\ct)\ne m
\ \middle |
\begin{array}{ll}
\sk\lrun \keygen(1^\secp)\\
\ct \lrun \Enc(\sk,m)
\end{array}
\right] 
\le\negl(\secp).
\end{align}

\item[Verification correctness:] For any $\secp\in \N$, $m\in\Ms$, 
\begin{align}
\Pr\left[
\Vrfy(\sk,\cert)=\bot
\ \middle |
\begin{array}{ll}
\sk\lrun \keygen(1^\secp)\\
\ct \lrun \Enc(\sk,m)\\
\cert \lrun \Delete(\ct)
\end{array}
\right] 
\le\negl(\secp).
\end{align}
\end{description}
\end{definition}

\begin{definition}[Certified Deletion Security for One-Time SKE]\label{def:sk_certified_del}
Let $\Sigma=(\keygen, \Enc, \Dec, \Delete, \Vrfy)$ be a secret key encryption with certified deletion.
We consider the following security experiment $\expb{\Sigma,\cA}{otsk}{cert}{del}(\secp,b)$.

\begin{enumerate}
    \item The challenger computes $\sk \la \keygen(1^\secp)$.
    \item $\cA$ sends $(m_0,m_1)\in\cM^2$ to the challenger.
    \item The challenger computes $\ct_b \la \Enc(\sk,m_b)$ and sends $\ct_b$ to $\cA$.
    \item $\cA$ sends $\cert$ to the challenger.
    \item The challenger computes $\Vrfy(\sk,\cert)$. If the output is $\bot$, the challenger sends $\bot$ to $\cA$.
    If the output is $\top$, the challenger sends $\sk$ to $\cA$. 
    \item $\cA$ outputs $b'\in \bit$.
\end{enumerate}
We say that the $\Sigma$ is OT-CD secure if for any QPT $\cA$, it holds that
\begin{align}
\advc{\Sigma,\cA}{otsk}{cert}{del}(\secp)\seteq \abs{\Pr[ \expb{\Sigma,\cA}{otsk}{cert}{del}(\secp, 0)=1] - \Pr[ \expb{\Sigma,\cA}{otsk}{cert}{del}(\secp, 1)=1] }\leq \negl(\secp).
\end{align}
\end{definition}
We sometimes call it one-time SKE with certified deletion if it satisfies OT-CD security.

\begin{remark}
\cref{def:sk_certified_del} intuitively means that once the valid certificate is issued, 
decrypting the ciphertext becomes impossible.
One might think that it would be also possible to define the inverse: once the chiphertext is decrypted, the valid certificate can no longer be issued.
This property is, however, impossible to achieve due to the decryption correctness (\cref{def:sk_cd_correctness}). In fact, if the quantum decryption algorithm $\Dec$ on a quantum ciphertext $\ct$ succeeds with probability at least $1-\negl(\secp)$, then the gentle measurement lemma guarantees that $\ct$ is only negligibly disturbed, from which the valid certificate can be issued.
\end{remark}

We emphasize that in the existing construction of SKE with certified deletion, a secret key is a classical string though a ciphertext must be a quantum state. Broadbent and Islam prove the following theorem.
\begin{theorem}[\cite{TCC:BroIsl20}]\label{thm:ske_cert_del_no_assumption}
There exists OT-CD secure SKE with certified deletion with $\Ms= \bit^{\ell_m}$ and $\Ks= \bit^{\ell_k}$ where $\ell_m$ and $\ell_k$ are some polynomials, unconditionally.
\end{theorem}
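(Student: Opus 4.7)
The plan is to follow a BB84-style conjugate coding construction. The secret key $\sk$ comprises a basis string $\theta \in \bit^n$, a data string $x \in \bit^n$ (both with $n = \poly(\secp, \ell_m)$), a one-time pad $u \in \bit^{\ell_m}$, and a two-universal hash seed $s$. To encrypt $m$, prepare the BB84 state $\bigotimes_{i=1}^n H^{\theta_i}|x_i\rangle$ and append the classical string $c = H_s(x|_{\overline{I}}) \oplus u \oplus m$, where $I = \{i : \theta_i = 1\}$ and $\overline{I} = \{i : \theta_i = 0\}$. Decryption measures the quantum register in basis $\theta$ to recover $x|_{\overline{I}}$ and unmasks $m$. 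Deletion measures all qubits in the Hadamard basis and returns the outcome $y$; verification accepts iff $y|_{I} = x|_{I}$. Decryption correctness is immediate since measuring $H^{\theta_i}|x_i\rangle$ in basis $\theta_i$ yields $x_i$, and verification correctness holds since measuring $H|x_i\rangle$ in the Hadamard basis returns $x_i$.

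For security, I would proceed in two stages. First, establish a smoothed min-entropy bound: conditioned on the certificate verifying, $x|_{\overline{I}}$ retains high min-entropy from the adversary's view even after $\sk$ is later revealed. The key tools are a quantum sampling argument in the spirit of Bouman--Fehr combined with a Tomamichel--Renner style entropic uncertainty relation, exploiting that the partition into $I$ and $\overline{I}$ induced by $\theta$ is hidden from the adversary at the moment the certificate is produced. Second, apply the Leftover Hash Lemma against quantum side information (Renner--König) to conclude that $H_s(x|_{\overline{I}})$ is statistically close to uniform given the adversary's post-certificate quantum state together with $\theta$ and $s$; then the one-time pad $u$ perfectly masks $m$, so $\expb{\Sigma,\cA}{otsk}{cert}{del}(\secp,0)$ and $\expb{\Sigma,\cA}{otsk}{cert}{del}(\secp,1)$ are statistically indistinguishable.

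The main obstacle will be the sampling-to-entropy step. The certificate is produced without knowledge of the partition, so if its Hadamard-basis outcomes on the random subset $I$ match $x|_I$ with non-negligible probability, one must argue via a sampling bound that Hadamard-basis outcomes on $\overline{I}$ would likewise behave as if directly measured in the Hadamard basis; the uncertainty relation then forces the complementary computational-basis outcomes on $\overline{I}$ to be nearly uniform from the adversary's view. Choosing $n$ polynomially large in $\secp$ and $\ell_m$ and balancing the hash output length against the sampling-induced min-entropy loss drives every statistical gap to be negligible, yielding the unconditional security claim.
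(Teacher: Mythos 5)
This theorem is cited from Broadbent and Islam~\cite{TCC:BroIsl20}; the paper itself does not reprove it, so the comparison is against the source. Your outline is in the right spirit: the Broadbent--Islam scheme is indeed Wiesner/BB84 conjugate coding with a Hadamard-basis measurement as the deletion certificate, and their security proof does run through Bouman--Fehr quantum sampling, an entropic uncertainty relation, and the quantum-proof leftover hash lemma, much as you describe. However, the construction as you wrote it is trivially insecure.

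You place the full data string $x$ inside the secret key, $\sk = (\theta, x, u, s)$. In the OT-CD experiment of \cref{def:sk_certified_del}, once the certificate verifies the challenger hands the entire $\sk$ to the adversary. At that point the adversary holds $c = H_s(x|_{\overline{I}}) \oplus u \oplus m_b$ together with $x$, $u$, and $s$, and simply outputs $m_b = c \oplus H_s(x|_{\overline{I}}) \oplus u$, regardless of what certificate it sent and regardless of what it did to the quantum register. Your claimed min-entropy bound on $x|_{\overline{I}}$ ``even after $\sk$ is later revealed'' is false for the scheme as written, since $\sk$ literally contains $x|_{\overline{I}}$. The fix is that $x|_{\overline{I}}$ --- the computational-basis positions that feed the privacy-amplification hash --- must be sampled fresh by $\Enc$ and must not appear in $\sk$; only $\theta$, $x|_{I}$ (needed by $\Vrfy$), $u$, and $s$ belong in the key. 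After that correction, an adversary that passes verification must recover $x|_{\overline{I}}$ from its residual quantum state alone, and your sampling/uncertainty/leftover-hash argument becomes the Broadbent--Islam analysis.
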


\paragraph{Receiver non-committing encryption.}
We introduce the notion of (public key) receiver non-committing encryption (RNCE)~\cite{STOC:CFGN96,EC:JarLys00,TCC:CanHalKat05}, which is used in~\cref{sec:const_pk_cd_from_sk,sec:PKE_cd_cc_construction,sec:pk_pv_cd_cc_construction}. We sometimes simply write NCE to mean RNCE since we consider only RNCE in this paper. See~\cref{sec:reusable_SKE_cd} for the definition of secret key NCE.
\begin{definition}[RNCE (Syntax)]\label{def:nce_syntax}
An NCE scheme is a tuple of PPT algorithms $(\keygen,\Enc,\Dec,\Fake,\Reveal)$ with plaintext space $\Ms$.
\begin{description}
    \item [$\keygen(1^\secp)\ra (\pk,\sk,\aux)$:] The key generation algorithm takes as input the security 
    parameter $1^\secp$ and outputs a key pair $(\pk,\sk)$ and an auxiliary information $\aux$.
    \item [$\Enc(\pk,m)\ra \ct$:] The encryption algorithm takes as input $\pk$ and a plaintext $m\in\cM$ and outputs a ciphertext $\ct$.
    \item [$\Dec(\sk,\ct)\ra m^\prime \mbox{ or }\bot$:] The decryption algorithm takes as input $\sk$ and $\ct$ and outputs a plaintext $m^\prime$ or $\bot$.
    \item [$\Fake(\pk,\sk,\aux)\ra \tlct$:] The fake encryption algorithm takes $\pk$, $\sk$ and $\aux$, and outputs a fake ciphertext $\tlct$.
    \item [$\Reveal(\pk,\sk,\aux,\tlct,m)\ra \tlsk $:] The reveal algorithm takes $\pk,\sk,\aux,\tlct$ and $m$, and outputs a fake secret key $\tlsk$.
\end{description}  
\end{definition}

Correctness is the same as that of PKE.
\begin{definition}[Receiver Non-Committing (RNC) Security]\label{def:nce_security}
An NCE scheme is RNC secure if it satisfies the following.
Let $\Sigma=(\keygen, \Enc, \Dec, \Fake,\Reveal)$ be an NCE scheme.
We consider the following security experiment $\expa{\Sigma,\cA}{rec}{nc}(\secp,b)$.

\begin{enumerate}
    \item The challenger computes $(\pk,\sk,\aux) \lrun \keygen(1^\secp)$ and sends $\pk$ to $\cA$.
    \item $\cA$ sends a query $m \in \Ms$ to the challenger.
    \item The challenger does the following.
    \begin{itemize}
    \item If $b =0$, the challenger generates $\ct \lrun \Enc(\pk,m)$ and returns $(\ct,\sk)$ to $\cA$.
    \item If $b=1$, the challenger generates $\tlct \lrun \Fake(\pk,\sk,\aux)$ and $\tlsk \lrun \Reveal(\pk,\sk,\aux,\tlct,m)$ and returns $(\tlct,\tlsk)$ to $\cA$.
    \end{itemize}
    \item $\cA$ outputs $b'\in \bit$.
\end{enumerate}
Let $\advb{\Sigma,\cA}{rec}{nc}(\secp)$ be the advantage of the experiment above.
We say that the $\Sigma$ is RNC secure if for any QPT adversary, it holds that
\begin{align}
\advb{\Sigma,\cA}{rec}{nc}(\secp)\seteq \abs{\Pr[ \expa{\Sigma,\cA}{rec}{nc}(\secp, 0)=1] - \Pr[ \expa{\Sigma,\cA}{rec}{nc}(\secp, 1)=1] }\leq \negl(\secp).
\end{align}
\end{definition}

\begin{theorem}[{\cite{C:KNTY19}}]\label{thm:indcpa-pke_to_rnc-pke}
If there exists an IND-CPA secure SKE/PKE scheme (against QPT adversaries), there exists an RNC secure secret/public key NCE scheme (against QPT adversaries) with plaintext space $\bit^{\ell}$, where $\ell$ is some polynomial, respectively.
\end{theorem}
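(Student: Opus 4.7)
I would give an explicit construction and reduction. Since the public-key and secret-key cases are structurally identical, I describe the public-key version; the secret-key version is obtained by replacing the underlying PKE with an IND-CPA secure SKE throughout and omitting the public-key syntax.

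\textbf{Construction (single-bit plaintext).} The plan is to use the Naor--Yung-style ``double encryption'' trick, which already matches the recipe sketched in the technical overview (\cref{sec:technical_overview_quantum}). $\keygen$ samples two independent PKE key pairs $(\pke.\pk_0, \pke.\sk_0)$ and $(\pke.\pk_1, \pke.\sk_1)$ together with a uniform bit $z\chosen\bit$, and outputs $\pk:=(\pke.\pk_0,\pke.\pk_1)$, $\sk:=(z,\pke.\sk_z)$, and $\aux:=(\pke.\sk_0,\pke.\sk_1)$. Encryption is $\Enc(\pk,m):=(\PKE.\Enc(\pke.\pk_0,m),\PKE.\Enc(\pke.\pk_1,m))$, and decryption uses $\pke.\sk_z$ on the $z$-th coordinate. $\Fake$ samples a fresh bit $z^{*}\chosen\bit$ and outputs $\tlct:=(\PKE.\Enc(\pke.\pk_{z^{*}},0),\PKE.\Enc(\pke.\pk_{1-z^{*}},1))$, while $\Reveal(\pk,\sk,\aux,\tlct,m)$ outputs $\tlsk:=(z^{*}\oplus m,\pke.\sk_{z^{*}\oplus m})$; note that the revealed key decrypts the component of $\tlct$ that encrypts exactly $m$, giving correctness of the fake view. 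For plaintexts in $\bit^{\ell}$ the plan is to run this single-bit scheme in parallel with independently sampled keys and coins for each of the $\ell$ coordinates.

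\textbf{Single-bit reduction.} The crucial observation is that if we couple the real experiment's uniform bit $z$ with the fake experiment's $z^{*}$ via $z=z^{*}\oplus m$, which is uniform whenever $z^{*}$ is, then both experiments reveal the \emph{same} pair $(z,\pke.\sk_{z})$ and the \emph{same} ``visible'' ciphertext component $\PKE.\Enc(\pke.\pk_z,m)$; the two worlds differ only in the ``hidden'' component, which is $\PKE.\Enc(\pke.\pk_{1-z},m)$ in the real world versus $\PKE.\Enc(\pke.\pk_{1-z},1-m)$ in the fake world. Since $\pke.\sk_{1-z}$ is never revealed to $\cA$, I can reduce this exactly to one IND-CPA challenge on $\pke.\pk_{1-z}$: the reduction samples $z\chosen\bit$ and a fresh pair $(\pke.\pk',\pke.\sk')$, embeds the IND-CPA challenge key $\pk^{*}$ as $\pke.\pk_{1-z}$ and sets $\pke.\pk_{z}:=\pke.\pk'$, forwards $\pk$ to $\cA$, on receiving $m$ submits the challenge pair $(m,1-m)$ to the IND-CPA oracle, and assembles the ciphertext by encrypting $m$ under $\pke.\pk_z$ in the visible slot and placing the challenge ciphertext in the hidden slot; finally it outputs $\bigl((z,\pke.\sk'),\ct\bigr)$ as in the RNC experiment. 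This simulates the real (resp.\ fake) experiment perfectly when the IND-CPA bit is $0$ (resp.\ $1$).

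\textbf{Hybrid for $\ell$ bits and main obstacle.} For the $\ell$-bit scheme I will use hybrids $G_0,\dots,G_{\ell}$, where $G_i$ runs the fake procedure on coordinates $1,\dots,i$ and the real procedure on coordinates $i+1,\dots,\ell$; then $G_0$ is $\expa{\Sigma,\cA}{rec}{nc}(\secp,0)$ and $G_{\ell}$ is $\expa{\Sigma,\cA}{rec}{nc}(\secp,1)$, and each adjacent pair is indistinguishable by applying the single-bit reduction above to the $i$-th instance (all other coordinates are sampled by the reduction itself). Summing over $i\in[\ell]$ with $\ell=\poly(\secp)$ gives a negligible bound. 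There is no genuinely hard technical step; the only point requiring care is verifying the coupling $z=z^{*}\oplus m$ so that the revealed key's distribution is identical in the two worlds and the hybrid reduction only has to deal with the single hidden component, rather than simultaneously swapping two ciphertexts and two secret keys. The secret-key case is argued identically, using the IND-CPA oracle of the SKE scheme and dropping the public-key components from the view of $\cA$.
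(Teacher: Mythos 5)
Your proposal is correct and uses the same Naor--Yung double-encryption construction that the paper sketches in its technical overview and attributes to \cite{TCC:CanHalKat05,C:KNTY19} (the paper cites this theorem rather than re-proving it). The coupling $z = z^{*}\oplus m$ is exactly the observation needed to reduce to a single IND-CPA challenge on the hidden component, and the per-coordinate hybrid correctly extends it to $\ell$-bit plaintexts; the only minor bookkeeping point is that $\Reveal$ needs $z^{*}$, so either store it in $\aux$ as the paper does or note (as is implicit in your construction) that $\Reveal$ can recover it by test-decrypting $\tlct$ with the two keys in $\aux$.
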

Note that Kitagawa, Nishimaki, Tanaka, and Yamakawa~\cite{C:KNTY19} prove the theorem above for the SKE case, but it is easy to extend their theorem to the PKE setting.
We also note that the core idea of Kitagawa et al. is based on the observation by Canetti, Halevi, and Katz~\cite{TCC:CanHalKat05}.

\paragraph{Non-interactive zero-knowledge.}
We review non-interactive zero-knowledge (NIZK) which is used in~\cref{sec:NCABE_from_IO}.
\begin{definition}[Non-Interactive Zero-Knowledge Proofs (Syntax)] \label{def:NIZKP}
A non-interactive zero-knowledge (NIZK) proof for an $\NP$ language $\Lang$ consists of PPT algorithms $(\Setup, \Prove, \Vrfy)$. 

\begin{description}
    \item[$\Setup(1^\secp) \ra \crs$:] The setup algorithm takes as input the security parameter $1^\secp$ and outputs a common reference string $\crs$.

    \item[$\Prove(\crs, x, w) \ra \pi$:] The prover's algorithm takes as input a common reference string $\crs$, a statement $x$, and a witness $w$ and outputs a proof $\pi$.
    
    \item[$\Vrfy (\crs, x, \pi) \ra \top \mbox{ or }\bot$:] The verifier's algorithm takes as input a common reference string $\crs$, a statement $x$, and a proof $\pi$ and outputs $\top$ to indicate acceptance of the proof and $\bot$ otherwise.
\end{description}

A non-interactive proof must satisfy the following requirements.
\begin{description}
\item[Completeness:] For all $\secp\in\N$ and all pairs $(x, w) \in  \Rela_\Lang$, where $\Rela_\Lang$ is the witness relation corresponding to $\Lang$, we have
        \begin{align}
            \Pr[\Vrfy(\crs, x, \pi) = \top \mid \crs \lrun \Setup(1^\secp), \pi  \lrun \Prove(\crs, x, w)] = 1.
        \end{align}
\item[Statistical Soundness:] For all unbounded time adversaries $\cA$, if we run $\crs \lrun \Setup(1^\secp)$, then we have
        \begin{align}
            \Pr [x \not \in \Lang \land \Vrfy(\crs, x, \pi) = \top  \mid (x, \pi)  \lrun \cA(1^\secp, \crs)] \leq \negl(\secp).
        \end{align}  
\item[(Computational) Zero-Knowledge:]
If there exists a PPT simulator $\Sim = (\Sim_1,\Sim_2)$ such that for all QPT adversaries $\cA$ and for all $(x,w)\in\Rela_\Lang$, we have 
        \begin{align}
            \left| \Pr\left[
            \cA(1^\secp,   \crs,x,\pi) = 1
            \ \middle |
\begin{array}{ll}
 \crs \lrun \Setup(1^\secp),\\
\pi\lrun\Prove(\crs,x,w)
\end{array}
 \right] - 
           \Pr\left[
            \cA(1^\secp,   \tlcrs,x,\pi) = 1
            \ \middle |
\begin{array}{ll}
 (\tlcrs,\td)\lrun \Sim_1(1^\secp,x),\\
\pi\lrun \Sim_2(\tlcrs,\td,x)
\end{array}
\right]
                \right| \leq \negl(\secp).
\end{align}
\end{description}
\end{definition}

\begin{theorem}\label{thm:NIZK_from_LWE_or_IO}
If one of the following holds, then there exists computational NIZK proof for $\NP$ against QPT adversaries.
\begin{itemize}
\item the LWE assumption holds against QPT adversaries~\cite{C:PeiShi19}.
\item there exist IO and OWFs~\cite{TCC:BitPan15}.
\end{itemize}
\end{theorem}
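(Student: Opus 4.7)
The plan is to directly invoke the two cited constructions rather than to give a self-contained construction. For the first case, I would appeal to the NIZK proof system of Peikert and Shiehian~\cite{C:PeiShi19}, which is built under the LWE assumption and satisfies perfect completeness, statistical soundness, and computational zero-knowledge, exactly matching \cref{def:NIZKP}. Since their simulator and the ZK reduction are fully classical and reduce to LWE in a black-box manner, the hypothesis that LWE is hard against QPT adversaries immediately lifts zero-knowledge to hold against QPT distinguishers as well. Statistical soundness is unconditional, so it transfers with no additional effort.

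For the second case, I would invoke the NIZK construction of Bitansky and Paneth~\cite{TCC:BitPan15} from IO and OWFs. Their construction likewise achieves statistical soundness and computational zero-knowledge in the CRS model. The security reduction is again black-box in the underlying IO and OWF adversaries, so postulating QPT-security of IO and OWFs suffices to obtain QPT-security of the resulting NIZK. In both cases the proof amounts to checking that the syntactic and security definitions of the cited works specialize to \cref{def:NIZKP}.

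The step I expect to require the most care is making sure the precise flavor of zero-knowledge supplied by each cited work (single-theorem versus multi-theorem, non-adaptive versus adaptive choice of statement) is at least as strong as what \cref{def:NIZKP} demands and what the later sections of the paper consume. If a cited construction only guarantees single-theorem ZK for a fixed statement, I would apply the standard Feige--Lapidot--Shamir-style transformation, instantiated with a PRF applied to the CRS, to upgrade it to multi-theorem adaptive ZK without introducing new assumptions; a pseudorandom function exists from OWFs and from LWE, so neither hypothesis of the theorem is weakened. Beyond such bookkeeping, the theorem follows essentially by citation.
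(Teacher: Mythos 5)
Your proposal is correct and matches the paper's approach: the theorem is stated as a citation of Peikert--Shiehian and Bitansky--Paneth, and the paper gives no further proof. Your extra remarks about checking the ZK flavor and post-quantum black-box reductions are reasonable diligence but not something the paper spells out.
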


\paragraph{One-shot signature.} We review the one-shot signature scheme that will be used in 
\cref{sec:pk_pv_cd_cc_construction}.
The one-shot signature scheme was introduced 
and a construction relative to a classical oracle was given in \cite{STOC:AGKZ20}.

\begin{definition}[One-Shot Signature (Syntax)]\label{definition:OSS}
A one-shot signature scheme is a tuple of QPT  algorithms $(\Setup,\keygen,\allowbreak\Sign,\Vrfy)$ with
the following syntax:
\begin{description}
\item[$\Setup(1^\lambda)\ra \crs$:] The setup algorithm takes as input a security parameter $1^\secp$,
and outputs a classical common reference string $\crs$.
    \item[$\keygen(\crs)\ra (\pk,\sk)$:] The key generation algorithm takes as input 
a common reference string $\crs$, and outputs a classical public key $\pk$ and a
quantum secret key $\sk$.
    \item[$\Sign(\sk,m) \ra \sigma$:] The signing algorithm takes as input the secret key $\sk$ and
    a message $m$, and outputs a classical signature $\sigma$.
    \item[$\Vrfy(\crs,\pk,\sigma,m) \ra \top \mbox{ or } \bot$:] The verification algorithm takes as input 
    the common reference string $\crs$, 
    the public key $\pk$, the signature $\sigma$, and the message $m$, and outputs 
    $\top$ or $\bot$.
\end{description}
\end{definition}

\begin{definition}[Correctness for One-Shot Signature]
We say that a one-shot signature scheme is correct if
\begin{align}
{\rm Pr}[\Vrfy(\crs,\pk,\Sign(\sk,m),m)=\top|
(\pk,\sk)\leftarrow \keygen(\crs),\crs\leftarrow\Setup(1^\secp)  
]\ge1-\negl(\secp)
\end{align}
for any message $m$. 
\end{definition}

\begin{definition}[Security for One-Shot Signature]
We say that a one-shot signature scheme is secure if
for any QPT adversary $\cA$,
\begin{align}
{\rm Pr}[
m_0\neq m_1 \wedge 
\Vrfy(\crs,\pk,\sigma_0,m_0)=\top \wedge \Vrfy(\crs,\pk,\sigma_1,m_1)=\top]\le \negl(\secp)
\end{align}
is satisfied
over $\crs\leftarrow \Setup(1^\secp)$
and $(\pk,m_0,\sigma_0,m_1,\sigma_1)\leftarrow \cA(\crs)$.
\end{definition}

\paragraph{Witness encryption.} We review the notion of witness encryption, which will be used
in \cref{sec:pk_pv_cd_cc_construction}.
The witness encryption was introduced in \cite{STOC:GGSW13}. 
The extractable witness encryption was introduced in \cite{C:GKPVZ13}.

\begin{definition}[Witness Encryption for NP (Syntax)]\label{definition:WE}
A witness encryption scheme for an $\NP$ language $\Lang$ 
is a pair of algorithms $(\Enc,\Dec)$ with the following syntax:
\begin{description}
    \item[$\Enc(1^\lambda,x,m) \ra \ct$:] The encryption algorithm takes as input the security parameter $1^\lambda$, a statement $x$, and a message $m$, and outputs a ciphertext $\ct$.
    \item[$\Dec(\ct,w) \ra m \mbox{ or }\bot$:] The decryption algorithm takes as input 
    the ciphertext $\ct$ and a witness $w$, and
    outputs $m$ or $\bot$.
\end{description}
\end{definition}

\begin{definition}[Correctness for Witness Encryption]
We say that a witness encryption scheme is correct if
the following holds with overwhelming probability over the randomness of 
$\Enc$ and $\Dec$. For any $(x,w)\in R_\Lang$, where $R_\Lang$ is the corresponding witness relation, and any message $m$, it holds that
$\Dec(\Enc(1^\secp,x,m),w)=m$. 
\end{definition}

\begin{definition}[Security for Witness Encryption]
We say that a witness encryption scheme is secure if
for any instance $x\notin \Lang$ and any two messages $m_0$ and $m_1$, it holds that
for any QPT adversary $\cA$, 
\begin{align}
|\Pr[\cA(\Enc(1^\secp,x,m_0))=1]-\Pr[\cA(\Enc(1^\secp,x,m_1)) = 1]| \le \negl(\secp).
\end{align}
\end{definition}

\begin{definition}[Extractable Security for Witness Encryption]
For any QPT adversary $\cA$, polynomial $p$ and messages $m_0$ and $m_1$, there is a 
QPT extractor $\cE$
and polynomial $q$ such that for any mixed state $\aux$ potentially entangled with an
external register, if
$|\Pr[\cA(\aux,\Enc(1^\secp,x,m_0))=1]-\Pr[\cA(\aux,\Enc(1^\secp,x,m_1)) = 1]| \ge \frac{1}{p(\secp)}$
then
$\Pr[(x,\cE(1^\secp,x,\aux))\in R_\Lang] \ge\frac{1}{q(\secp)}$.
\end{definition}

\ifnum\noclassic=1
\else

\subsection{Noisy Trapdoor Claw-Free Functions}\label{sec:NTCF}
We define noisy trapdoor claw-free function family and its injective invariance following \cite{FOCS:BCMVV18,FOCS:Mahadev18a}. 
\begin{definition}[NTCF Family]\label{def:NTCF}
Let $\mathcal{X}$, $\mathcal{Y}$ be finite sets, $\mathcal{D_Y}$ the set of probability distribution 
over $\mathcal{Y}$, and $\mathcal{K_F}$ a finite set of keys. A family of functions 
\begin{align}
\mathcal{F}=\{f_{\fk,b}:\mathcal{X}\rightarrow\mathcal{D_Y} \}_{\fk\in\mathcal{K_F},b\in\{0,1\}}
\end{align}
is a noisy trapdoor claw-free function (NTCF) family if the following holds.\\

\begin{itemize}
\item{\bf Efficient Function Generation:}
 There exists a PPT algorithm $\GenF$
 which takes the security parameter $1^\secp$ as input and 
 outputs  a key $\fk\in\mathcal{K_F}$ and a trapdoor $\td$.
 
 \item{\bf Trapdoor Injective Pair:} For all keys $\fk\in\mathcal{K_F}$, the following holds.
 
 \begin{enumerate}
 \item Trapdoor: For all $b\in\{0,1\}$ and $x\neq x' \in\mathcal{X}$, 
 $\Supp(f_{\fk,b}(x))\cap \Supp(f_{\fk,b}(x'))=\emptyset$.
 In addition, there exists an efficient deterministic algorithm $\InvF$ 
 such that for all $b\in\{0,1\}$, $x\in\mathcal{X}$ and $y\in\Supp(f_{\fk,b}(x)),\InvF(\td,b,y)=x$.
 \item Injective pair: There exists a perfect matching relation
  $\mathcal{R}_{\fk}\subseteq \mathcal{X}\times\mathcal{X}$
such that $f_{\fk,0}(x_0)=f_{\fk,1}(x_1)$ if and only if  $(x_0,x_1)\in\mathcal{R}_{\fk}$.
 \end{enumerate}

\item{\bf Efficient Range Superposition:} For all keys $\fk\in\mathcal{K_F}$ and 
$b\in\{0,1\}$, there exists a function $f'_{\fk,b}:\mathcal{X}\rightarrow\mathcal{D_Y}$
such that the following holds
\begin{enumerate}
\item For all $(x_0,x_1)\in \mathcal{R}_{\fk}$ and
 $y\in\Supp(f'_{\fk,b}(x_b))$, $\InvF(\td,b,y)=x_b$ and
  $\InvF(\td,b\oplus1,y)=x_{b\oplus1}$.
\item
There exists an efficient deterministic algorithm $\Chk_{\mathcal{F}}$ 
that takes as input $\fk,b\in\{0,1\}$, $x\in\mathcal{X}$, and $y\in\mathcal{Y}$ and 
outputs 1 if $y\in\Supp(f'_{\fk,b}(x))$ and 0 otherwise. Note that this algorithm does not take the trapdoor $\td$ as input.
\item
For all $\fk\in\mathcal{K}_{\mathcal{F}}$ and $b\in\{0,1\}$, 
\begin{align}
\mathbb{E}_{x\leftarrow\mathcal{X}}[ {\bf H}^2(f_{\fk,b}(x),f'_{\fk,b}(x))]\leq \negl(\lambda).
\end{align}
Here {\bf H}$^2$ is the Hellinger distance(See Section~\ref{sec:notation}). In addition, there exists a QPT algorithm 
$\Samp_{\mathcal{F}}$ that takes as input $\fk$ and $b\in\{0,1\}$ and prepares the quantum state 
\begin{align}
|\psi '\rangle= \frac{1}{\sqrt{|\mathcal{X}|}}\sum_{x\in\mathcal{X},
y\in\mathcal{Y}}\sqrt{(f'_{\fk,b}(x))(y)}|x\rangle|y\rangle.
\end{align}
This property immediately means that 
\begin{align}
\parallel |\psi\rangle\langle\psi|- 
|\psi'\rangle\langle\psi'|\parallel_{tr}\leq \negl(\lambda)
\end{align}
where $|\psi\rangle=\frac{1}{\sqrt{|\mathcal{X}|}}
\sum_{x\in\mathcal{X},y\in\mathcal{Y}}\sqrt{(f_{\fk,b}(x))(y)}|x\rangle|y\rangle$.
 \end{enumerate}
\item{\bf Adaptive Hardcore Bit:} For all keys $\fk\in\cK_\cF$, the following holds. For some integer $w$ that is a polynomially bounded function of $\secp$,
\begin{enumerate}
\item For all $b\in\bit$ and $x\in\mathcal{X}$, there exists a set $G_{\fk,b,x}\subseteq \zo{w}$ such that $\Pr_{d\chosen \zo{w}}[d \notin G_{\fk,b,x}] \le \negl(\secp)$. In addition, there exists a PPT algorithm that checks for membership in $G_{\fk,b,x}$ given $\fk,b,x$, and $\td$.
\item  There is an efficiently computable injection $J: \cX \ra \zo{w}$ such that $J$ can be inverted efficiently on its range, and such that the following holds. Let
\begin{align}
 H_\fk  & \seteq \setbracket{(b,x_b,d, d\cdot(J(x_0)\xor J(x_1))) \mid b\in\zo{},(x_0,x_1)\in \cR_\fk, d \in G_{\fk,0,x_0} \cap G_{\fk,1,x_1}},\\
 \overline{H}_\fk & \seteq \setbracket{(b,x_b,d, e)\mid (b,x,d,e\xor 1)\in H_\fk},
\end{align}
then for any QPT $\A$, it holds that
\begin{align}
\abs{\Pr_{(\fk,\td)\gets \GenF (1^\secp)}[\A(\fk)\in H_\fk] - \Pr_{(\fk,\td)\gets \GenF (1^\secp)}[\A(\fk)\in \overline{H}_\fk]} \le \negl(\secp).
\end{align}
\end{enumerate}
 \end{itemize}
\end{definition}

It is known that we can amplify the adaptive hardcore property by parallel repetition in the following sense. 
\begin{lemma}[Amplified Adaptive Hardcore Property]\cite{CoRR:RadSat19,EPRINT:KitNisYam20}\label{lem:amplified_adaptive_hardcore}
Any NTCF family $\mathcal{F}$ satisfies the amplified adaptive hardcore property,
which means that for any QPT adversary $\cA$ and $n=\omega(\log\lambda)$, 
 \begin{align}
\Pr\left[
 \begin{array}{ll}
 \forall i\in[n]~
 \Chk_{\mathcal{F}}(\fk_i,b_i,x_i,y_i)=1,\\ 
d_i\in  G_{\fk_i,0,x_{i,0}} \cap G_{\fk_i,1,x_{i,1}},\\
e_i=d_i\cdot (J(x_{i,0})\oplus J(x_{i,1}))
 \end{array}
 \middle | 
 \begin{array}{ll}
 (\fk_i,\td_i)\gets \GenF(1^\secp) \text{~for~} i\in[n] \\
 \{(b_i,x_i,y_i,d_i,e_i)\}_{i\in[n]}\gets \A(\{\fk_i\}_{i\in[n]})\\
 x_{i,\beta} \gets \InvF (\td_i,\beta,y_i)\text{~for~}(i,\beta)\in[n]\times \bit
 \end{array}
 \right]\leq \negl(\secp).
 \end{align}
\end{lemma}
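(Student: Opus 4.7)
The plan is to bootstrap the single-instance adaptive hardcore bit property into the $n$-fold statement via a hybrid argument whose success probability halves at each step; since $n = \omega(\log \secp)$, this drives the overall probability to negligible.

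For $k \in \{0, 1, \ldots, n\}$, let $W_k$ denote the event that the $k$-th component of $\cA$'s output satisfies the three winning conditions listed in the lemma (equivalently, after recovering $x_{k,0}, x_{k,1}$ from $y_k$ via $\InvF$, the tuple lies in $H_{\fk_k}$), and let $L_k$ denote the analogous event for $\overline{H}_{\fk_k}$ (the same tuple with $e_k$ flipped). Write $p_k := \Pr[W_1 \cap \cdots \cap W_k]$ with $p_0 := 1$. Then $p_n$ is exactly the probability appearing in the lemma, so it suffices to establish the recurrence $p_k \le p_{k-1}/2 + \negl(\secp)$: unrolling yields $p_n \le 2^{-n} + \negl(\secp)$, which is negligible for $n = \omega(\log \secp)$.

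To prove the recurrence, for each $k \in [n]$ I would construct a QPT single-instance adversary $\cA^{(k)}$ against the adaptive hardcore bit property as follows. On input a challenge $\fk^*$, $\cA^{(k)}$ samples $(\fk_j, \td_j) \gets \GenF(1^\secp)$ for every $j \ne k$, sets $\fk_k := \fk^*$, runs $\cA(\fk_1, \ldots, \fk_n)$, and — using the locally generated trapdoors $\td_j$ for $j < k$ — checks whether $W_j$ holds by inverting $y_j$ under both branches, invoking $\Chk_\cF$, verifying $d_j \in G_{\fk_j, 0, x_{j,0}} \cap G_{\fk_j, 1, x_{j,1}}$, and comparing $e_j$ against $d_j \cdot (J(x_{j,0}) \oplus J(x_{j,1}))$. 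If every such check passes, $\cA^{(k)}$ outputs the $k$-th tuple; otherwise it outputs a fixed sentinel lying outside $H_{\fk^*} \cup \overline{H}_{\fk^*}$. By construction, $\Pr[\cA^{(k)}(\fk^*) \in H_{\fk^*}] = p_k$ and $\Pr[\cA^{(k)}(\fk^*) \in \overline{H}_{\fk^*}] = \Pr[W_1 \cap \cdots \cap W_{k-1} \cap L_k]$, so applying the single-instance adaptive hardcore bit property to $\cA^{(k)}$ gives $|p_k - \Pr[W_1 \cap \cdots \cap W_{k-1} \cap L_k]| \le \negl(\secp)$. Combining with the disjointness bound $p_k + \Pr[W_1 \cap \cdots \cap W_{k-1} \cap L_k] \le p_{k-1}$ yields $2 p_k \le p_{k-1} + \negl(\secp)$, which is exactly the desired recurrence.

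The main point requiring care is making $\cA^{(k)}$ a clean, efficient reduction and verifying that its two target probabilities coincide exactly with the claimed expressions. Efficiency is immediate: $\GenF$, $\Chk_\cF$, $\InvF$, and the $G$-membership test are all polynomial-time (the last one specifically given the trapdoor, which is why $\cA^{(k)}$ must sample the auxiliary $\fk_j$'s itself rather than receive them). The probability matching follows from the fact that on a failed early check the sentinel output lies outside both $H_{\fk^*}$ and $\overline{H}_{\fk^*}$, so only runs in which all of $W_1, \ldots, W_{k-1}$ occur contribute to either target probability; beyond this hybrid bookkeeping no additional technical machinery is needed.
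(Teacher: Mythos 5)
The paper does not actually supply a proof of this lemma: it is cited from \cite{CoRR:RadSat19,EPRINT:KitNisYam20}, and the surrounding remark only explains how the stated form (with $\Chk_\cF(\fk_i,b_i,x_i,y_i)=1$) follows from the variant in \cite{EPRINT:KitNisYam20} (with $x_i = x_{i,b_i}$). So there is no in-paper argument to compare against, but your halving-by-hybrids strategy is exactly the standard amplification route and, to my knowledge, is consistent with how the cited works prove the statement.

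Your proof is correct in substance, but be careful about one claimed equivalence. You write that $W_k$ is ``equivalently'' the event that the $k$-th four-tuple lands in $H_{\fk_k}$, and then assert $\Pr[\cA^{(k)}(\fk^*)\in H_{\fk^*}] = p_k$ and $\Pr[\cA^{(k)}(\fk^*)\in \overline{H}_{\fk^*}] = \Pr[W_1\cap\cdots\cap W_{k-1}\cap L_k]$. Only one direction of that implication actually holds. The condition $\Chk_\cF(\fk_k,b_k,x_k,y_k)=1$ is a genuine extra constraint on $y_k$: it forces $y_k \in \Supp(f'_{\fk_k,b_k}(x_k))$, which (via the efficient range superposition property) is what guarantees $x_k = x_{k,b_k}$ and $(x_{k,0},x_{k,1})\in\cR_{\fk_k}$. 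Conversely, $\cA$ could output a four-tuple $(b_k,x_k,d_k,e_k)$ that happens to lie in $H_{\fk_k}$ while pairing it with a garbage $y_k$ that makes $\Chk_\cF$ reject and makes $\InvF(\td_k,\cdot,y_k)$ output something unrelated. So writing $q_k := \Pr[\cA^{(k)}\in H_{\fk^*}]$ and $q'_k := \Pr[\cA^{(k)}\in\overline{H}_{\fk^*}]$, the correct relations are $q_k \ge p_k$ and $q'_k \ge \Pr[W_1\cap\cdots\cap W_{k-1}\cap L_k]$, not equalities. Fortunately the argument closes exactly the same way: disjointness of $H_{\fk^*}$ and $\overline{H}_{\fk^*}$ gives $q_k + q'_k \le \Pr[W_1\cap\cdots\cap W_{k-1}] = p_{k-1}$, the single-instance adaptive hardcore bit gives $q_k - q'_k \le \negl(\secp)$, hence $2q_k \le p_{k-1}+\negl(\secp)$, and finally $p_k \le q_k \le p_{k-1}/2 + \negl(\secp)$, which is the recurrence you wanted. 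Two smaller cosmetic points: the ``sentinel outside $H_{\fk^*}\cup\overline{H}_{\fk^*}$'' can simply be an abort symbol $\bot$ (no need to exhibit an explicit bad tuple), and the reduction's final output should be the four-tuple $(b_k,x_k,d_k,e_k)$ with $y_k$ dropped, since that is the format the single-instance property expects.
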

 We call the procedures in the right half of the above probability the \emph{amplified adaptive hardcore game} and we
 say that $\A$ wins the game if the conditions in the left half of the probability are satisfied.
 By using this terminology, the above lemma says that no QPT adversary wins the amplified adaptive hardcore game with non-negligible probability.  
\begin{remark}
A similar lemma is presented in \cite{EPRINT:KitNisYam20} with the difference that the first condition $\Chk_{\mathcal{F}}(\fk_i,b_i,x_i,y_i)=1$ is replaced with  $x_i=x_{i,b_i}$. 
Since $\Chk_{\mathcal{F}}(\fk_i,b_i,x_i,y_i)=1$ implies $x_i=x_{i,b_i}$ by the first and second items of efficient range superposition property, the lemma in the above form also follows.
\end{remark}

Next, we define the injective invariance for an NTCF family.
For defining this, we first define a trapdoor injective function family.
\begin{definition}[Trapdoor Injective Function Family]\label{def:TIF}
Let $\mathcal{X}$, $\mathcal{Y}$ be finite sets, $\mathcal{D_Y}$ the set of probability distribution 
over $\mathcal{Y}$, and $\mathcal{K_G}$ a finite set of keys. A family of functions 
\begin{align}
\mathcal{G}=\{g_{\fk,b}:\mathcal{X}\rightarrow\mathcal{D_Y} \}_{\fk\in\mathcal{K_G},b\in\{0,1\}}
\end{align}
is a trapdoor injective function family if the following holds.\\

\begin{itemize}
\item{\bf Efficient Function Generation:}
 There exists a PPT algorithm $\GenG$
 which 
 takes the security parameter $1^\secp$ as input and 
 outputs a key $\fk\in\mathcal{K_G}$ and a trapdoor $\td$.
 
 \item{\bf Disjoint Trapdoor Injective Pair:} For all keys $\fk\in\mathcal{K_G}$,
 for all $b,b'\in \bit$, and $x,x'\in \mathcal{X}$, 
 if $(b,x)\neq (b',x')$, 
 $\Supp(g_{\fk,b}(x))\cap \Supp(g_{\fk,b'}(x'))=\emptyset$. 
 Moreover, there exists an efficient deterministic algorithm $\InvG$ 
 such that for all  
 $b\in \bit$, 
 $x\in\mathcal{X}$ and $y\in\Supp(g_{\fk,b}(x)),\InvG(\td,y)=(b,x)$.

\item{\bf Efficient Range Superposition:} For all keys $\fk\in\mathcal{K_G}$ and 
$b\in\{0,1\}$, 
\begin{enumerate}
\item
There exists an efficient deterministic algorithm $\Chk_{\mathcal{G}}$ 
that takes as input $\fk,b\in\{0,1\}$, $x\in\mathcal{X}$, and $y\in\mathcal{Y}$ and 
outputs 1 if $y\in\Supp(g_{\fk,b}(x))$ and 0 otherwise. Note that this algorithm does not take the trapdoor $\td$ as input.
\item
There exists a QPT algorithm 
$\Samp_{\mathcal{G}}$ that takes as input $\fk$ and $b\in\{0,1\}$ and outputs the quantum state 
\begin{align}
 \frac{1}{\sqrt{|\mathcal{X}|}}\sum_{x\in\mathcal{X},
y\in\mathcal{Y}}\sqrt{(g_{\fk,b}(x))(y)}|x\rangle|y\rangle.
\end{align}
 \end{enumerate}
 \end{itemize}
\end{definition}

\begin{definition}[Injective Invariance]\label{def:injective_invariance}
We say that a NTCF family $\mathcal{F}$ is injective invariant if there exists a trapdoor injective family $\mathcal{G}$ such that:
\begin{enumerate}
    \item The algorithms $\Chk_\mathcal{F}$ and $\Samp_{\mathcal{F}}$ are the same as the algorithms $\Chk_\mathcal{G}$ and $\Samp_{\mathcal{G}}$. 
    In this case, we simply write $\Chk$ and $\Samp$ to mean them. 
    \item For all QPT algorithm $\cA$, we have 
    \begin{align}
    \abs{\Pr[\cA(\fk)=1:(\fk,\td)\lrun \GenF(1^\secp)]-\Pr[\cA(\fk)=1:(\fk,\td)\lrun \GenG(1^\secp)]}\leq  \negl(\secp).
    \end{align}
\end{enumerate}
\end{definition}

\begin{lemma}[\cite{FOCS:Mahadev18a}]
If the LWE assumption holds against QPT adversaries, there exists an injective invariant NTCF family.
\takashi{May be better to mention parameter for LWE.}
\end{lemma}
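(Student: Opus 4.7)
The plan is to invoke Mahadev's construction~\cite{FOCS:Mahadev18a} and show that it admits a natural sibling trapdoor injective family whose keys are LWE-indistinguishable from NTCF keys. Concretely, I would take the NTCF family whose key is of the form $\fk=(\mathbf{A},\mathbf{A}\mathbf{s}+\mathbf{e}_0)$ for a uniform matrix $\mathbf{A}\in \Z_q^{m\times n}$, a uniform secret $\mathbf{s}\in \Z_q^n$, and a short error $\mathbf{e}_0$, and define
\[
f_{\fk,b}(x) \;=\; \mathbf{A}x+b\cdot(\mathbf{A}\mathbf{s}+\mathbf{e}_0) + \text{(discrete Gaussian noise)}.
\]
The trapdoor $\td$ is a short basis of $\Lambda^{\perp}(\mathbf{A})$ together with $\mathbf{s}$, enabling $\InvF$. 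The claw structure $(x_0,x_1)\in \cR_\fk$ is given by $x_1=x_0-\mathbf{s}$, so that $f_{\fk,0}(x_0)$ and $f_{\fk,1}(x_0-\mathbf{s})$ have overlapping distributions. The adaptive hardcore bit property and the amplified form follow from~\cite{FOCS:BCMVV18} with the set $G_{\fk,b,x}$ consisting of directions $d$ for which the ``good event'' regarding the noise bound holds.

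Next I would build the trapdoor injective family $\mathcal{G}$ by using the \emph{same} $\Samp$ and $\Chk$ procedures but sampling the key as $\fk=(\mathbf{A},\mathbf{u})$ with $\mathbf{u}\in \Z_q^m$ uniformly random, with the trapdoor being a short basis of $\Lambda^{\perp}(\mathbf{A})$. The key point is that for random $\mathbf{u}$ the two branches
\[
g_{\fk,0}(x) \;=\; \mathbf{A}x+\text{(noise)}, \qquad g_{\fk,1}(x)\;=\;\mathbf{A}x+\mathbf{u}+\text{(noise)}
\]
have statistically disjoint supports with overwhelming probability over $\mathbf{u}$ (since with high probability no short vector $v$ satisfies $\mathbf{A}v\approx \mathbf{u}$), so that a single preimage exists for each $y$. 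Inversion $\InvG$ is done by first decoding $\mathbf{A}x$ or $\mathbf{A}x+\mathbf{u}$ against the lattice using the trapdoor, and then reading off $b$ from which of the two cosets $y$ lies in.

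For injective invariance, I would show the two conditions of~\cref{def:injective_invariance}: that $\Chk$ and $\Samp$ are identical in both families (which is immediate from how the superposition is prepared in Mahadev's construction, where the sampling procedure does not depend on whether the second component of the key is an LWE sample or uniform), and that the key distributions of $\GenF$ and $\GenG$ are computationally indistinguishable. The latter is a direct reduction to the decisional LWE assumption: an adversary distinguishing the two key distributions is precisely an adversary distinguishing $(\mathbf{A},\mathbf{A}\mathbf{s}+\mathbf{e}_0)$ from $(\mathbf{A},\mathbf{u})$.

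The main technical obstacle will be choosing the parameters $(n,m,q,\sigma)$ of the underlying LWE instance so that all of the detailed NTCF properties (in particular the trapdoor injective pair structure, the bound on Hellinger distance between the real and sampled distributions, and the adaptive hardcore bit property) hold simultaneously, while still allowing the uniform-$\mathbf{u}$ variant to be statistically injective and efficiently invertible with the same trapdoor. Since these parameter choices have been worked out explicitly in~\cite{FOCS:BCMVV18,FOCS:Mahadev18a}, the proof essentially reduces to verifying that Mahadev's construction already satisfies the two bullets of~\cref{def:injective_invariance}, which is done in~\cite{FOCS:Mahadev18a} and which I would cite rather than redo in full.
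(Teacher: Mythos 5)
Your proof plan is essentially a faithful sketch of the construction in the cited reference: the paper itself gives no proof of this lemma and simply cites~\cite{FOCS:Mahadev18a}, and your outline of the LWE-based NTCF family, the uniform-$\mathbf{u}$ trapdoor injective sibling, and the reduction of injective invariance to decisional LWE is exactly the route Mahadev takes. The only thing worth flagging is that you correctly identify that the real work is in the parameter verification already done in~\cite{FOCS:BCMVV18,FOCS:Mahadev18a}, which you cite rather than redo, consistent with the paper's treatment.
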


\subsection{Quantum Random Oracle Model}\label{sec:QROM}
In \cref{sec:classical_com}, we rely on the quantum random oracle model (QROM)~\cite{AC:BDFLSZ11}. 
In the QROM, a uniformly random function with certain domain and range is chosen at the beginning, and quantum access to this function is given for all parties including an adversary.
Zhandry  showed that quantum access to random function can be efficiently simulatable by using so called the compressed oracle technique~\cite{C:Zhandry19}.

We review the one-way to hiding lemma \cite{JACM:Unruh15,C:AmbHamUnr19}, which is often useful when analyzing schemes in the QROM. 
The following form of the lemma is based on \cite{C:AmbHamUnr19}. 
\begin{lemma}[{One-Way to Hiding Lemma~\cite{C:AmbHamUnr19}}]
\label{lem:o2h}
Let $S \subseteq \mathcal{X}$ be random.
Let $G,H \colon \mathcal{X} \to \mathcal{Y}$ be random functions satisfying $\forall x \not\in S~[G(x) = H(x)]$.
Let $z$ be a random classical bit string or quantum state.
($S,G,H,z$ may have an arbitrary joint distribution.)
Let $\cA$ be an oracle-aided quantum algorithm that makes at most $q$ quantum queries. 
Let $\cB$ be an algorithm that on input $z$ chooses $i\lrun [q]$, runs $\cA^H(z)$, measures $\cA$'s $i$-th query, and outputs the measurement outcome.
Then we have
\[
\abs{\Pr[\cA^{G}(z)=1]-\Pr[\cA^H(z)=1]}\leq 2q\sqrt{\Pr[\cB^{H}(z)\in S]}.
\]
\end{lemma}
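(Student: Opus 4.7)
The plan is to prove the lemma by a standard query-by-query hybrid argument combined with a BBBV-style single-query bound on how much one oracle query can distinguish two oracles that agree outside a set $S$.

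First, I would define hybrids $P_0, P_1, \ldots, P_q$, where in $P_i$ the adversary's first $i$ queries are answered by $H$ and the remaining $q - i$ queries are answered by $G$, so that $P_0 = \cA^G(z)$ and $P_q = \cA^H(z)$. The crucial observation is that the state of $\cA$ immediately before its $(i+1)$-th query is identical in $P_i$ and in $\cA^H(z)$, since in both cases the first $i$ queries used $H$; denote this state $|\phi_i\rangle$ and let $p_{i+1}$ be the probability that measuring its query register in the computational basis returns an element of $S$. Because $\cB^H(z)$ samples $i \lrun [q]$ and measures the $i$-th query of $\cA^H(z)$, we obtain $\Pr[\cB^H(z)\in S] = \frac{1}{q}\sum_{j=1}^{q}p_j$.

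Next, I would invoke the BBBV-type single-query bound: for any two oracles agreeing outside $S$ and any input state $|\phi\rangle$, the trace distance between the two post-query states is at most $2\sqrt{p}$, where $p$ is the probability of observing an element of $S$ on measuring the query register of $|\phi\rangle$. Applying this to the single query at which $P_i$ and $P_{i+1}$ differ yields $|\Pr[P_i=1]-\Pr[P_{i+1}=1]| \leq 2\sqrt{p_{i+1}}$. Telescoping and applying Cauchy--Schwarz then gives
\begin{align}
|\Pr[\cA^G(z)=1]-\Pr[\cA^H(z)=1]| \leq 2\sum_{j=1}^{q}\sqrt{p_j} \leq 2\sqrt{q\sum_{j=1}^{q}p_j} = 2q\sqrt{\Pr[\cB^H(z)\in S]}.
\end{align}

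The main obstacle is the BBBV single-query bound itself. Its proof requires decomposing $|\phi\rangle$ into its components supported on $S$ and on the complement $\overline{S}$, using the fact that $G$ and $H$ act identically on the latter, and then bounding the difference of the two post-query states by twice the norm of the $S$-component, which equals $2\sqrt{p}$. A secondary subtlety is that $S$, $G$, $H$, and $z$ have an arbitrary joint distribution; this is handled by running the entire argument pointwise in the randomness and then applying Jensen's inequality to the concave map $\sqrt{\cdot}$ to pull the expectation inside the square root in the final step, yielding precisely the stated bound with $\Pr[\cB^H(z)\in S]$ computed over the full joint randomness.
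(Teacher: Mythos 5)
The paper does not actually prove this lemma; it imports it from Ambainis, Hamburg, and Unruh~\cite{C:AmbHamUnr19}, so there is no internal proof to compare against. Your argument is nevertheless a correct, self-contained proof of the stated bound. It is the classical hybrid/BBBV route: interpolate from $\cA^G$ to $\cA^H$ one query at a time, note that the pre-query state at position $i+1$ coincides in $P_i$, $P_{i+1}$, and $\cA^H(z)$ because all earlier queries already use $H$, bound the single-query perturbation in $\ell_2$ (hence trace) distance by $2\sqrt{p_{i+1}}$ by splitting the query register over $S$ and $\overline S$, observe that the remaining operations are identical in both branches so the distance cannot increase, and finish with telescoping, Cauchy--Schwarz, and Jensen over the joint randomness of $(S,G,H,z)$. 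This is essentially Unruh's original proof of the one-way to hiding lemma and it lands exactly on $2q\sqrt{\Pr[\cB^H(z)\in S]}$. The proof in~\cite{C:AmbHamUnr19} takes a genuinely different route via ``semiclassical'' (punctured) oracles: they first bound the advantage by $2\sqrt{d\cdot\Pr[\mathsf{Find}]}$, where $d$ is the query \emph{depth}, and then relate $\Pr[\mathsf{Find}]$ to the probability that a randomly measured query lands in $S$. That route is more modular and yields sharper, depth-aware variants, but for the specific corollary stated here your direct hybrid is simpler and entirely adequate. Two small points to make explicit in a full write-up: the fact that $\ell_2$/trace distance is non-increasing under the identical CPTP continuation of both hybrids is what lets you read the bound off at the output; and one should fix the convention that if $\cA$ makes fewer than $i$ queries then $\cB$'s outcome lies outside $S$. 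Neither affects correctness.
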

\begin{remark}
In \cite{C:AmbHamUnr19}, $z$ is assumed to be classical. However, as observed in \cite{AC:HhaXagYam19}, 
the lemma holds even if $z$ is quantum since any quantum state can be described by an exponentially large classical string, and there is no restriction on the size of $z$ in the lemma.
\end{remark}

\fi

\ifnum\noclassic=1
\else
\fi


\section{Public Key Encryption with Certified Deletion}\label{sec:pk_cd}

In this section, we define the notion of PKE with certified deletion, which is a natural extension of SKE with certified deletion and present how to achieve PKE with certified deletion from OT-CD secure SKE and IND-CPA secure (standard) PKE.

\subsection{Definition of PKE with Certified Deletion}\label{sec:pk_cd_def}
The definition of PKE with certified deletion is an extension of SKE with certified deletion. Note that a verification key for verifying a certificate is generated in the encryption algorithm.
\begin{definition}[PKE with Certified Deletion (Syntax)]\label{def:pk_cert_del}
A PKE with certified deletion is a tuple of QPT algorithms $(\keygen,\Enc,\Dec,\Delete,\Vrfy)$ with plaintext space $\Ms$.
\begin{description}
    \item[$\keygen (1^\secp) \ra (\pk,\sk)$:] The key generation algorithm takes as input the security parameter $1^\secp$ and outputs a classical key pair $(\pk,\sk)$.
    \item[$\Enc(\pk,m) \ra (\vk,\ct)$:] The encryption algorithm takes as input the public key $\pk$ and a plaintext $m\in\cM$ and outputs a classical verification key $\vk$ and a quantum ciphertext $\ct$. 
    \item[$\Dec(\sk,\ct) \ra m^\prime \mbox{ or } \bot$:] The decryption algorithm takes as input the secret key $\sk$ and the ciphertext $\ct$, and outputs a classical plaintext $m^\prime$ or $\bot$.
    \item[$\Delete(\ct) \ra \cert$:] The deletion algorithm takes as input the ciphertext $\ct$ and outputs a classical certificate $\cert$.
    \item[$\Vrfy(\vk,\cert)\ra \top \mbox{ or }\bot$:] The verification algorithm takes the verification key $\vk$
    and the certificate $\cert$, and outputs $\top$ or $\bot$.
\end{description}
\end{definition}

\begin{definition}[Correctness for PKE with Certified Deletion]\label{def:pk_cd_correctness}
There are two types of correctness. One is decryption correctness and the other is verification correctness.
\begin{description}
\item[Decryption correctness:] For any $\secp\in \N$, $m\in\Ms$, 
\begin{align}
\Pr\left[
\Dec(\sk,\ct)\ne m
\ \middle |
\begin{array}{ll}
(\pk,\sk)\lrun \keygen(1^\secp)\\
(\vk,\ct) \lrun \Enc(\pk,m)
\end{array}
\right] 
\le\negl(\secp).
\end{align}

\item[Verification correctness:] For any $\secp\in \N$, $m\in\Ms$, 
\begin{align}
\Pr\left[
\Vrfy(\vk,\cert)=\bot
\ \middle |
\begin{array}{ll}
(\pk,\sk)\lrun \keygen(1^\secp)\\
(\vk,\ct) \lrun \Enc(\pk,m)\\
\cert \lrun \Delete(\ct)
\end{array}
\right] 
\le\negl(\secp).
\end{align}

\end{description}
\end{definition}

\begin{definition}[Certified Deletion Security for PKE]\label{def:pk_certified_del}
Let $\Sigma=(\keygen, \Enc, \Dec, \Delete, \Vrfy)$ be a PKE with certified deletion scheme.
We consider the following security experiment $\expb{\Sigma,\cA}{pk}{cert}{del}(\secp,b)$.

\begin{enumerate}
    \item The challenger computes $(\pk,\sk) \la \keygen(1^\secp)$ and sends $\pk$ to $\cA$.
    \item $\cA$ sends $(m_0,m_1)\in \Ms^2$ to the challenger.
    \item The challenger computes $(\vk_b,\ct_b) \la \Enc(\pk,m_b)$ and sends $\ct_b$ to $\cA$.
    \item At some point, $\cA$ sends $\cert$ to the challenger.
    \item The challenger computes $\Vrfy(\vk_b,\cert)$. If the output is $\bot$, it sends $\bot$ to $\cA$.
    If the output is $\top$, it sends $\sk$ to $\cA$.
    \item $\cA$ outputs its guess $b'\in \bit$.
\end{enumerate}
Let $\advc{\Sigma,\cA}{pk}{cert}{del}(\secp)$ be the advantage of the experiment above.
We say that the $\Sigma$ is IND-CPA-CD secure  
if for any QPT adversary $\cA$, it holds that
\begin{align}
\advc{\cE,\cA}{pk}{cert}{del}(\secp)\seteq \abs{\Pr[ \expb{\Sigma,\cA}{pk}{cert}{del}(\secp, 0)=1] - \Pr[ \expb{\Sigma,\cA}{pk}{cert}{del}(\secp, 1)=1] }\leq \negl(\secp).
\end{align}
\end{definition}

\subsection{PKE with Certified Deletion from PKE and SKE with Certified Deletion}\label{sec:const_pk_cd_from_sk}
In this section, we present how to construct a PKE scheme with certified deletion from an SKE scheme with certified deletion and an NCE scheme, which can be constructed from standard IND-CPA PKE schemes.

\paragraph{Our PKE Scheme.}
We construct $\Sigma_{\pkcd} =(\keygen,\Enc,\Dec,\Delete,\Vrfy)$ with plaintext space $\Ms$ from an SKE with certified deletion scheme $\Sigma_{\skcd}=\SKE.(\Gen,\Enc,\Dec,\Delete,\Vrfy)$ with plaintext space $\Ms$ and key space $\Ks$ and a public key NCE scheme $\Sigma_{\nce}=\NCE.(\keygen,\Enc,\Dec,\Fake,\Reveal)$ with plaintext space $\Ks$.

\begin{description}
\item[$\keygen(1^\secp)$:] $ $
\begin{itemize}
\item Generate $(\nce.\pk,\nce.\sk,\nce.\aux)\lrun \NCE.\keygen(1^\secp)$ and output $(\pk,\sk) \seteq (\nce.\pk,\nce.\sk)$.
\end{itemize}
\item[$\Enc(\pk,m)$:] $ $
\begin{itemize}
	\item Parse $\pk = \nce.\pk$.
\item Generate $\ske.\sk \lrun \SKE.\Gen(1^\secp)$.
\item Compute $\nce.\ct \lrun \NCE.\Enc(\nce.\pk,\ske.\sk)$ and $\ske.\ct \lrun \SKE.\Enc(\ske.\sk,m)$.
\item Output $\ct \seteq (\nce.\ct,\ske.\ct)$ and $\vk \seteq \ske.\sk$.
\end{itemize}
\item[$\Dec(\sk,\ct)$:] $ $
\begin{itemize}
\item Parse $\sk = \nce.\sk$ and $\ct = (\nce.\ct,\ske.\ct)$.
\item Compute $\sk^\prime \lrun \NCE.\Dec(\nce.\sk,\nce.\ct)$.
\item Compute and output $m^\prime \lrun \SKE.\Dec(\sk^\prime,\ske.\ct)$.
\end{itemize}
\item[$\Delete(\ct)$:] $ $
\begin{itemize}
\item Parse $\ct= (\nce.\ct,\ske.\ct)$.
\item Generate $\ske.\cert \lrun \SKE.\Delete(\ske.\ct)$.
\item Output $\cert \seteq \ske.\cert$.
\end{itemize}
\item[$\Vrfy(\vk,\cert)$:] $ $
\begin{itemize}
\item Parse $\vk = \ske.\sk$ and $\cert = \ske.\cert$.
\item Output $b \lrun \SKE.\Vrfy(\ske.\sk,\ske.\cert)$.
\end{itemize}
\end{description}

\paragraph{Correctness.}
The decryption and verification correctness easily follow from the correctness of $\Sigma_{\nce}$ and $\Sigma_{\skcd}$.
\paragraph{Security.}
We prove the following theorem.
\begin{theorem}\label{thm:pke_cd_from_sk_cd_and_pke}
If $\Sigma_{\nce}$ is RNC secure and $\Sigma_{\skcd}$ is OT-CD secure, $\Sigma_{\pkcd}$ is IND-CPA-CD secure.
\end{theorem}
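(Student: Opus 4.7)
The plan is a two-step hybrid argument whose structure is dictated by the observation in the technical overview: a direct reduction of $\Sigma_{\pkcd}$'s certified deletion security to the IND-CPA security of an ordinary PKE fails because, after $\cA$ submits a valid deletion certificate, the reduction must hand $\cA$ a secret key consistent with the challenge ciphertext yet does not possess one. Swapping the underlying PKE for an RNC-secure scheme is precisely the device that resolves this, because in the RNC game the reduction receives either a real $(\ct,\sk)$ pair or a fake $(\tlct,\tlsk)$ pair in a single query. The first step exploits this to decouple the NCE ciphertext from $\ske.\sk$, and the second step then uses OT-CD security to flip the challenge bit inside $\ske.\ct$.

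Formally, let $\hybi{b}$ denote $\expb{\Sigma_{\pkcd},\cA}{pk}{cert}{del}(\secp,b)$ and let $\hybi{b}'$ be the same experiment except that the challenger computes $\tlct \la \NCE.\Fake(\nce.\pk,\nce.\sk,\nce.\aux)$ in place of $\nce.\ct \la \NCE.\Enc(\nce.\pk,\ske.\sk)$, and, upon successful verification of the certificate, returns $\tlsk \la \NCE.\Reveal(\nce.\pk,\nce.\sk,\nce.\aux,\tlct,\ske.\sk)$ in place of $\nce.\sk$. For each fixed $b \in \bit$, I would build a QPT distinguisher $\cB_b$ against RNC security of $\Sigma_{\nce}$: upon receiving $\nce.\pk$, $\cB_b$ forwards it to $\cA$ as the public key; when $\cA$ returns $(m_0,m_1)$, $\cB_b$ samples $\ske.\sk \la \SKE.\Gen(1^\secp)$, computes $\ske.\ct \la \SKE.\Enc(\ske.\sk,m_b)$, queries $\ske.\sk$ as its RNC message, and places the received ciphertext next to $\ske.\ct$ to form the challenge for $\cA$. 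Since $\vk = \ske.\sk$ is chosen by $\cB_b$, it can run $\SKE.\Vrfy$ on $\cA$'s certificate itself and, only if the check accepts, forward the received key to $\cA$. This simulates $\hybi{b}$ when the RNC challenger is real and $\hybi{b}'$ when it is fake, giving $|\Pr[\hybi{b}=1] - \Pr[\hybi{b}'=1]| \le \negl(\secp)$ for each $b$.

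Finally, I would reduce $|\Pr[\hybi{0}'=1] - \Pr[\hybi{1}'=1]| \le \negl(\secp)$ to the OT-CD security of $\Sigma_{\skcd}$. The reduction $\cB'$ runs $\NCE.\keygen$ and $\NCE.\Fake$ on its own to obtain $(\nce.\pk,\nce.\sk,\nce.\aux,\tlct)$, sends $\nce.\pk$ to $\cA$, forwards $\cA$'s challenge pair $(m_0,m_1)$ to its own SKE challenger, and embeds the returned $\ske.\ct_b$ into the challenge $(\tlct,\ske.\ct_b)$. When $\cA$ outputs $\cert$, $\cB'$ forwards it to the SKE challenger; if verification passes, the SKE challenger hands $\ske.\sk$ to $\cB'$, which then computes $\tlsk \la \NCE.\Reveal(\nce.\pk,\nce.\sk,\nce.\aux,\tlct,\ske.\sk)$ and delivers it to $\cA$. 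The simulation is perfect because the verification algorithms of $\Sigma_{\skcd}$ and $\Sigma_{\pkcd}$ coincide on $\vk = \ske.\sk$, and every component of $\cA$'s view that depends on $\ske.\sk$ is either routed through the SKE oracle or recomputed from the oracle's eventual output. Combining the two bounds via the triangle inequality yields IND-CPA-CD security. I do not expect any substantial obstacle beyond the conceptual one already overcome by the use of RNCE; the only care needed is in threading $\ske.\sk$ through the two reductions so that each simulation remains faithful before and after the deletion certificate is verified.
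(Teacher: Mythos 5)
Your proposal is correct and follows essentially the same two-step hybrid argument as the paper: first an RNC-security reduction to replace $(\nce.\ct,\nce.\sk)$ with $(\NCE.\Fake,\NCE.\Reveal)$, then an OT-CD reduction on the SKE component, with the reductions $\cB_b$ and $\cB'$ matching the paper's $\cB_\nce$ and $\cB_\skcd$ in all relevant details.
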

\begin{proof}
Let $\cA$ be a QPT adversary and $b\in \bit$ be a bit. 
We define the following hybrid game $\sfhyb{}{}(b)$.
\begin{description}
\item[$\sfhyb{}{}(b)$:] This is the same as $\expb{\Sigma_{\pkcd,\cA}}{pk}{cert}{del}(\secp,b)$ except that the challenger generate the target ciphertext as follows. It generates $\ske.\sk \lrun \SKE.\Gen(1^\secp)$ and computes $\nce.\ct^\ast \lrun \NCE.\Fake(\nce.\pk,\nce.\sk,\nce.\aux)$ and $\ske.\ct^\ast \lrun \SKE.\Enc(\ske.\sk,m_b)$. The target ciphertext is $\ct^\ast \seteq (\nce.\ct^\ast,\ske.\ct^\ast)$. In addition, we reveal $\tlsk \lrun \Reveal(\nce.\pk,\nce.\sk,\nce.\aux,\nce.\ct^\ast,\ske.\sk)$ instead of $\nce.\sk$.
\end{description}

\begin{proposition}\label{prop:pkecd_hyb_one}
If $\Sigma_{\nce}$ is RNC secure, $\abs{\Pr[\expb{\Sigma_{\pkcd},\cA}{pk}{cert}{del}(\secp, b)=1] - \Pr[\sfhyb{}{}(b)=1]} \le \negl(\secp)$.
\end{proposition}
\begin{proof}
We construct an adversary $\cB_\nce$  that breaks the RNC security of $\Sigma_{\nce}$ by assuming that $\cA$ distinguishes these two experiments.
First, $\cB_\nce$ is given $\nce.\pk$ from the challenger of $\expa{\Sigma_\nce,\cB_\nce}{rec}{nc}(\secp, b')$ for $b'\in \bit$. $\cB_\nce$ generates $\ske.\sk \lrun \SKE.\Gen(1^\secp)$ and sends $\nce.\pk$ to $\cA$.
When $\cA$ sends $(m_0,m_1)$, $\cB_\nce$ sends $\ske.\sk$ to the challenger of $\expa{\Sigma_\nce,\cB_\nce}{rec}{nc}(\secp, b')$, receives $(\nce.\ct^\ast,\tlsk)$, and generates $\ske.\ct \lrun \SKE.\Enc(\ske.\sk,m_b)$. $\cB_\nce$ sends $(\nce.\ct^\ast,\ske.\ct)$ to $\cA$ as the challenge ciphertext.
At some point, $\cA$ outputs $\cert$. If $\SKE.\Vrfy(\ske.\sk,\cert) = \top$, $\cB_\nce$ sends $\tlsk$ to $\cA$.
Otherwise, $\cB_\nce$ sends $\bot$ to $\cA$.
Finally, $\cB_\nce$ outputs whatever $\cA$ outputs.
\begin{itemize}
\item If $b'=0$, i.e., $(\nce.\ct^\ast,\tlsk) = (\NCE.\Enc(\nce.\pk,\ske.\sk),\nce.\sk)$, $\cB_\nce$ perfectly simulates $\expb{\Sigma_{\pkcd},\cA}{pk}{cert}{del}(\secp, b)$.
\item If $b'=1$, i.e., $(\nce.\ct^\ast,\tlsk) = (\NCE.\Fake(\nce.\pk,\nce.\sk,\nce.\aux),\NCE.\Reveal(\nce.\pk,\nce.\sk,\nce.\aux,\nce.\ct^\ast,\allowbreak\ske.\sk))$, $\cB_\nce$ perfectly simulates $\sfhyb{}{}(b)$.
\end{itemize}

Thus, if $\cA$ distinguishes the two experiments, $\cB_\nce$ breaks the RNC security of $\Sigma_\nce$. This completes the proof.
\end{proof}
\begin{proposition}\label{prop:pkecd_hyb_end}
If $\Sigma_{\skcd}$ is OT-CD secure, $\abs{\Pr[\sfhyb{}{}(0)=1] - \Pr[\sfhyb{}{}(1)=1] } \le \negl(\secp)$.
\end{proposition}
\begin{proof}
We construct an adversary $\cB_\skcd$  that breaks the OT-CD security of $\Sigma_{\skcd}$ assuming that $\cA$ distinguishes these two experiments.
$\cB_\skcd$ plays the experiment $\expb{\Sigma_\skcd,\cB_\skcd}{otsk}{cert}{del}(\secp,b')$ for some $b'\in \bit$.
First, $\cB_\skcd$ generates $(\nce.\pk,\nce.\sk,\nce.\aux) \lrun \NCE.\keygen(1^\secp)$ and sends $\nce.\pk$ to $\cA$.
When $\cA$ sends $(m_0,m_1)$, $\cB_\skcd$ sends $(m_0,m_1)$ to the challenger of $\expb{\Sigma_\skcd,\cB_\skcd}{otsk}{cert}{del}(\secp,b')$, receives $\ske.\ct^\ast$, and generates $\nce.\tlct \lrun \NCE.\Fake(\nce.\pk,\nce.\sk,\nce.\aux)$. $\cB_\skcd$ sends $(\nce.\tlct,\ske.\ct^\ast)$ to $\cA$ as the challenge ciphertext.
At some point, $\cA$ outputs $\cert$. $\cB_\skcd$ passes $\cert$ to the challenger of OT-CD SKE. If the challenger returns $\ske.\sk$, $\cB_\skcd$ generates $\tlsk \lrun \NCE.\Reveal(\nce.\pk,\nce.\sk,\nce.\aux,\allowbreak \nce.\tlct,\ske.\sk)$ and sends $\tlsk$ to $\cA$.
Otherwise, $\cB_\skcd$ sends $\bot$ to $\cA$. 
Finally, $\cB_\skcd$ outputs whatever $\cA$ outputs.

\begin{itemize}
\item If $b'=0$, i.e., $\ske.\ct^\ast = \SKE.\Enc(\ske.\sk,m_0)$, $\cB_\skcd$ perfectly simulates $\sfhyb{}{}(0)$.
\item If $b'=1$, i.e., $\ske.\ct^\ast = \SKE.\Enc(\ske.\sk,m_1)$, $\cB_\skcd$ perfectly simulates $\sfhyb{}{}(1)$.
\end{itemize}
Thus, if $\cA$ distinguishes the two experiments, $\cB_\skcd$ breaks the OT-CD security. This completes the proof.
\end{proof}
By~\cref{prop:pkecd_hyb_one,prop:pkecd_hyb_end}, we immediately obtain~\cref{thm:pke_cd_from_sk_cd_and_pke}.
\end{proof}

By~\cref{thm:ske_cert_del_no_assumption,thm:indcpa-pke_to_rnc-pke,thm:pke_cd_from_sk_cd_and_pke}, we immediately obtain the following corollary.

\begin{corollary}
If there exists IND-CPA secure PKE against QPT adversaries, there exists IND-CPA-CD secure PKE with certified deletion.
\end{corollary}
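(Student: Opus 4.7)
The plan is to simply chain together the three results already established in the excerpt. First, I would invoke \cref{thm:ske_cert_del_no_assumption} of Broadbent and Islam to obtain, unconditionally (and in particular from the assumed IND-CPA secure PKE), an OT-CD secure one-time SKE scheme $\Sigma_{\skcd}$ with plaintext space $\Ms$ and classical key space $\Ks = \bit^{\ell_k}$ for some polynomial $\ell_k$. This step requires no assumption at all and gives the ``inner'' ingredient of the hybrid construction of \cref{sec:const_pk_cd_from_sk}.

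Next, I would apply \cref{thm:indcpa-pke_to_rnc-pke} to the hypothesized IND-CPA secure PKE scheme to obtain an RNC secure public key NCE scheme $\Sigma_{\nce}$ whose plaintext space can be taken to be $\bit^{\ell_k} = \Ks$. Since \cref{thm:indcpa-pke_to_rnc-pke} yields an RNC secure NCE with plaintext space $\bit^\ell$ for an arbitrary polynomial $\ell$, we can instantiate it with $\ell = \ell_k$ so that it accepts SKE secret keys as plaintexts, which is exactly what the hybrid construction of \cref{sec:const_pk_cd_from_sk} requires.

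Finally, I would plug $\Sigma_{\skcd}$ and $\Sigma_{\nce}$ into the construction of $\Sigma_{\pkcd}$ from \cref{sec:const_pk_cd_from_sk}. \cref{thm:pke_cd_from_sk_cd_and_pke} then immediately guarantees that the resulting scheme $\Sigma_{\pkcd}$ is IND-CPA-CD secure. Correctness of $\Sigma_{\pkcd}$ inherits from the decryption and verification correctness of $\Sigma_{\skcd}$ and the (standard PKE) correctness of $\Sigma_{\nce}$, both of which hold by construction. No step in this argument is a real obstacle; the entire content is already absorbed into the three cited theorems, so the proof is essentially a one-line chaining argument. \qed
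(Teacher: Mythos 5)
Your proposal is correct and matches the paper's proof exactly: the paper derives this corollary precisely by chaining \cref{thm:ske_cert_del_no_assumption}, \cref{thm:indcpa-pke_to_rnc-pke}, and \cref{thm:pke_cd_from_sk_cd_and_pke} with the hybrid construction of \cref{sec:const_pk_cd_from_sk}. Your explicit note that the NCE plaintext length should be instantiated as $\ell = \ell_k$ to match the SKE key space is a small but welcome bit of care that the paper leaves implicit.
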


\paragraph{Reusable SKE with certified deletion.}
We can construct a secret key variant of $\Sigma_{\mathsf{pkcd}}$ above (that is, reusable SKE with certified deletion) by replacing $\Sigma_\nce$ with a secret key NCE scheme. We omit the proof since it is almost the same as that of~\cref{thm:pke_cd_from_sk_cd_and_pke}. By~\cref{thm:indcpa-pke_to_rnc-pke} and the fact that OWFs imply (reusable) SKE~\cite{SIAMCOMP:HILL99,JACM:GolGolMic86}, we also obtain the following theorem.

\begin{theorem}\label{thm:reusable_SKE_cd}
If there exists OWF against QPT adversaries, there exists IND-CPA-CD secure SKE with certified deletion.
\end{theorem}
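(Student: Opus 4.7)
The plan is to mirror the public-key construction of~\cref{sec:const_pk_cd_from_sk}, replacing the public key NCE scheme with a secret key NCE scheme whose existence is guaranteed by~\cref{thm:indcpa-pke_to_rnc-pke}. Concretely, I would define $\Sigma' = (\keygen, \Enc, \Dec, \Delete, \Vrfy)$ whose key generation outputs a secret key NCE key $\nce.\sk$ (with auxiliary $\nce.\aux$ kept only for the reduction), and whose encryption on input $m$ freshly samples a one-time SKE-CD key $\ske.\sk \lrun \SKE.\Gen(1^\secp)$, forms $\nce.\ct \lrun \NCE.\Enc(\nce.\sk, \ske.\sk)$ and $\ske.\ct \lrun \SKE.\Enc(\ske.\sk, m)$, and outputs ciphertext $(\nce.\ct, \ske.\ct)$ with verification key $\ske.\sk$; deletion and verification simply pass through to the inner OT-CD SKE. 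Correctness is immediate from the correctness of the two components.

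For security, I would follow the same two-hybrid structure as the proof of~\cref{thm:pke_cd_from_sk_cd_and_pke}. First, use the RNC security of the secret key NCE to replace, on the challenge query, the honest $\nce.\ct$ and the post-deletion revelation of $\ske.\sk$ by a fake ciphertext $\nce.\tlct \lrun \NCE.\Fake(\nce.\sk, \nce.\aux)$ together with $\tlsk \lrun \NCE.\Reveal(\nce.\sk, \nce.\aux, \nce.\tlct, \ske.\sk)$. Second, invoke the OT-CD security of the inner SKE to switch the inner encryption from $m_0$ to $m_1$; this is legitimate because, in the hybrid, the ephemeral $\ske.\sk$ is only surrendered (via $\NCE.\Reveal$) after a valid deletion certificate has been verified, exactly as in the one-time SKE-CD experiment.

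The main obstacle compared to the PKE case is the reusability of $\nce.\sk$: the adversary may request polynomially many additional encryption queries under the same master key. However, each such query draws its own independent $\ske.\sk$, so the reduction, which holds the real $\nce.\sk$ throughout, can answer these auxiliary queries honestly by running the real encryption procedure, and no additional hybrid over them is needed. RNC security is therefore invoked only on the challenge ciphertext, and the per-challenge verification key $\ske.\sk$ (rather than the master $\nce.\sk$) is what gets revealed after a valid certificate, which is precisely what the reusable SKE-CD experiment defined in~\cref{sec:reusable_SKE_cd} accommodates.

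Finally, the stated corollary follows by chaining existing results: OWFs imply IND-CPA secure reusable SKE against QPT adversaries via~\cite{SIAMCOMP:HILL99,JACM:GolGolMic86}; \cref{thm:indcpa-pke_to_rnc-pke} upgrades such SKE to an RNC secure secret key NCE scheme; and \cref{thm:ske_cert_del_no_assumption} provides an OT-CD secure one-time SKE unconditionally. Instantiating the construction above with these components yields~\cref{thm:reusable_SKE_cd}.
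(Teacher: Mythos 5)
Your construction and two-hybrid plan are exactly the paper's (the paper gives this construction as $\Sigma_{\mathsf{r}\skcd}$ in \cref{sec:reusable_SKE_cd} and proves it by referring back to the proof of \cref{thm:pke_cd_from_sk_cd_and_pke}), so the overall route is correct. However, your justification of how the first reduction handles the adversary's extra encryption queries is wrong in a way that matters. You claim the reduction ``holds the real $\nce.\sk$ throughout'' and can ``answer these auxiliary queries honestly by running the real encryption procedure.'' In the reduction to RNC security this is not the case: the reduction is playing the adversary in the secret-key RNC experiment (\cref{def:sk_nce_security}), and in that experiment the challenger keeps the NCE encryption key $\nce.\ek$ to itself, sending the adversary only $1^\secp$ at the outset (and $\nce.\dk$ only with the challenge ciphertext). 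The reduction therefore cannot compute $\NCE.\Enc(\nce.\ek,\cdot)$ on its own. The fix is that the secret-key RNC game deliberately provides an encryption oracle (steps 2 and 5 of \cref{def:sk_nce_security}): for each SKE-CD encryption query $m$, the reduction samples a fresh $\ske.\sk$, asks the RNC challenger for $\NCE.\Enc(\nce.\ek,\ske.\sk)$, and combines it with $\SKE.\Enc(\ske.\sk,m)$. Your observation that no additional hybrid over the queries is needed (because each uses an independent $\ske.\sk$) remains correct, and the second reduction (to OT-CD security) genuinely does hold $(\nce.\ek,\nce.\dk,\nce.\aux)$ since it generates them itself, so your reasoning is fine there. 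With the oracle-forwarding correction, your proof goes through and matches the paper's.
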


See the definition and construction of reusable SKE with certified deletion in~\cref{sec:reusable_SKE_cd}.

\section{Attribute-Based Encryption with Certified Deletion}\label{sec:abe_cd}
In this section, we define the notion of attribute-based encryption (ABE) with certified deletion, which is a natural extension of ABE and PKE with certified deletion and present how to achieve ABE with certified deletion from OT-CD secure SKE, IO, and OWFs. 
In~\cref{sec:abe_cd_def}, we present the definition of ABE with certified deletion and non-committing ABE (NCABE), which is a crucial tool to achieve ABE with certified deletion. In~\cref{sec:NCABE_from_IO}, we present how to achieve NCABE from IO and standard ABE.
In~\cref{sec:const_abe_cd_from_sk}, we present how to achieve ABE with certified deletion from NCABE and OT-CD secure SKE with certified deletion.

\subsection{Definition of ABE with Certified Deletion}\label{sec:abe_cd_def}
The definition of ABE with certified deletion is a natural combination of ABE and PKE with certified deletion.
\begin{definition}[Attribute-Based Encryption with Certified Deletion (Syntax)]\label{def:abe_cert_del}
An ABE scheme with certified deletion is a tuple of QPT algorithms $(\Setup,\keygen,\Enc,\Dec,\Delete,\Vrfy)$ with plaintext space $\Ms$, attribute space $\Xs$, and policy space $\Ps$.
\begin{description}
    \item[$\Setup(1^\secp)\ra (\pk,\msk)$:] The setup algorithm takes as input the security parameter $1^\secp$ and outputs a public key $\pk$ and a master secret key $\msk$.
    \item[$\keygen (\msk,P) \ra \sk_P$:] The key generation algorithm takes as input $\msk$ and a policy $P\in \Ps$, and outputs a secret key $\sk_P$.
    \item[$\Enc(\pk,X,m) \ra (\vk,\ct_X)$:] The encryption algorithm takes as input $\pk$, an attribute $X\in\Xs$, and a plaintext $m \in \Ms$, and outputs a verification key $\vk$ and ciphertext $\ct_X$.
    \item[$\Dec(\sk_P,\ct_X) \ra m^\prime \mbox{ or } \bot$:] The decryption algorithm takes as input $\sk_P$ and $\ct_X$, and outputs a plaintext $m^\prime \in \Ms$ or $\bot$.
    \item[$\Delete(\ct_X) \ra \cert$:] The deletion algorithm takes as input $\ct_X$ and outputs a certification $\cert$.
    \item[$\Vrfy(\vk,\cert)\ra \top \mbox{ or }\bot$:] The verification algorithm takes as input $\vk$ and $\cert$, and outputs $\top$ or $\bot$. 
\end{description}
\end{definition}

\begin{definition}[Correctness for ABE with Certified Deletion]\label{def:abe_cd_correctness}
There are two types of correctness. One is decryption correctness and the other is verification correctness.
\begin{description}
\item[Decryption correctness:] For any $\secp\in \N$, $m\in\Ms$, $P\in\Ps$, and $X\in\Xs$ such that $P(X)=\top$,
\begin{align}
\Pr\left[
\Dec(\sk_P,\ct_X)\ne m
\ \middle |
\begin{array}{ll}
(\pk,\msk)\lrun \Setup(1^\secp)\\
\sk_P \lrun \keygen(\msk,P)\\
(\vk,\ct_X) \lrun \Enc(\pk,X,m)
\end{array}
\right] 
\le\negl(\secp).
\end{align}

\item[Verification correctness:] For any $\secp\in \N$, $P\in\Ps$, $X\in\Xs$, $m\in\Ms$, 
\begin{align}
\Pr\left[
\Vrfy(\vk,\cert)=\bot
\ \middle |
\begin{array}{ll}
(\pk,\msk)\lrun \Setup(1^\secp)\\
(\vk,\ct_X) \lrun \Enc(\pk,X,m)\\
\cert \lrun \Delete(\ct_X)
\end{array}
\right] 
\le\negl(\secp).
\end{align}
\end{description}
\end{definition}

\begin{definition}[ABE Certified Deletion Security]\label{def:abe_certified_del}
Let $\Sigma=(\Setup, \keygen, \Enc, \Dec, \Delete, \Vrfy)$ be an ABE with certified deletion.
We consider the following security experiment $\expc{\Sigma,\cA}{ind}{cpa}{cd}(\secp,b)$.

\begin{enumerate}
    \item The challenger computes $(\pk,\msk) \lrun \Setup(1^\secp)$ and sends $\pk$ to $\cA$.
    \item $\cA$ sends a key query $P \in\Ps$ to the challenger and it returns $\sk_{P} \lrun \keygen(\msk,P)$ to $\cA$. This process can be repeated polynomially many times.
    \item $\cA$ sends $X^\ast \in \Xs$ and $(m_0,m_1) \in \Ms^2$ to the challenger where $X^\ast$ must satisfy $P(X^\ast)=\bot$ for all key queries $P$ sent so far.
    \item The challenger computes $(\vk_b,\ct_b) \la \Enc(\pk,X^\ast,m_b)$ and sends $\ct_b$ to $\cA$.
    \item Again, $\cA$ can send key queries $P$ that must satisfy $P(X^\ast)=\bot$.
    \item $\cA$ computes $\cert \la \Delete(\ct_b)$ and sends $\cert$ to the challenger.
    \item The challenger computes $\Vrfy(\vk_b,\cert)$. If the output is $\bot$, the challenger sends $\bot$ to $\cA$. If the output is $\top$, the challenger sends $\msk$ to $\cA$.
    \item Again, $\cA$ can send key queries $P$ that must satisfy $P(X^\ast)=\bot$.\footnote{Such queries are useless if $\cA$ obtains $\msk$ in the previous item, but may be useful if the challenger returns $\bot$ there.}
    \item $\cA$ outputs its guess $b'\in \bit$.
\end{enumerate}
We say that the $\Sigma$ is IND-CPA-CD secure if for any QPT adversary $\cA$, it holds that
\begin{align}
\advd{\Sigma,\cA}{ind}{cpa}{cd}(\secp)\seteq \abs{\Pr[ \expc{\Sigma,\cA}{ind}{cpa}{cd}(\secp, 0)=1] - \Pr[ \expc{\Sigma,\cA}{ind}{cpa}{cd}(\secp, 1)=1] }\le \negl(\secp).
\end{align}
\end{definition}

Next, we define receiver non-committing ABE, which is a receiver non-committing encryption version of ABE.
\begin{definition}[Receiver Non-Committing Attribute-Based Encryption (Syntax)]\label{def:ncabe_syntax}
A receiver non-committing (key policy) attributed-based encryption (NCABE)  
is a tuple of PPT algorithms $(\Setup,\keygen,\Enc,\Dec,\FakeSetup,\FakeSK,\allowbreak \FakeCT,\Reveal)$ with plaintext space $\Ms$, attribute space $\Xs$, and policy space $\Ps$.
\begin{description}
    \item [$\Setup(1^\secp)\ra (\pk,\msk)$:] The setup algorithm takes as input the security parameter $1^\secp$ and outputs a public key $\pk$ and a master secret key $\msk$.
    \item [$\keygen(\msk, P)\ra \sk_P$:] The key generation algorithm takes as input $\msk$ and a policy $P\in \Ps$, and outputs a secret key $\sk_P$.
    \item [$\Enc(\pk,X,m)\ra \ct$:] The encryption algorithm takes as input $\pk$, an attribute $X\in \Xs$, and a plaintext $m\in\Ms$, and outputs a ciphertext $\ct$.
    \item [$\Dec(\sk,\ct)\ra m^\prime \mbox{ or }\bot$:] The decryption algorithm takes as input $\sk$ and $\ct$ and outputs a plaintext $m^\prime\in\Ms$ or $\bot$.
    \item [$\FakeSetup(1^\secp)\ra (\pk,\aux)$:] The fake setup algorithm takes as input the security parameter $1^\secp$, and outputs a public key $\pk$ and an auxiliary information $\aux$. 
    \item [$\FakeCT(\pk,\aux,X)\ra \tlct$:] The fake encryption algorithm takes $\pk$, $\aux$, and $X\in\Xs$, and outputs a fake ciphertext $\tlct$.
    \item [$\FakeSK(\pk,\aux,P)\ra \tlsk$:] The fake key generation algorithm takes $\pk$, $\aux$, and $P\in \Ps$, and outputs a fake secret key $\tlsk$.
    \item [$\Reveal(\pk,\aux,\tlct,m)\ra \tlmsk$:] The reveal algorithm takes $\pk,\aux$, a fake ciphertext $\tlct$, and a plaintext $m\in\Ms$, and outputs a fake master secret key $\tlmsk$.
\end{description}
\end{definition}

Correctness is the same as that of ABE.
\begin{definition}[RNC Security for ABE]\label{def:ncabe_security}
An NCABE scheme is RNC secure if it satisfies the following.
Let $\Sigma=(\Setup,\keygen, \Enc, \Dec, \Fake\Setup,\Fake\ct,\Fake\SK,\Reveal)$ be an NCABE scheme.   
We consider the following  security experiment $\expa{\Sigma,\cA}{rnc}{cpa}(\secp,b)$.

\begin{enumerate}
    \item The challenger does the following. 
        \begin{itemize}
     \item If $b=0$, the challenger computes $(\pk,\msk) \lrun \Setup(1^\secp)$ and sends $\pk$ to $\cA$.
     \item If $b=1$, the challenger computes $(\pk,\aux) \lrun \FakeSetup(1^\secp)$ and sends $\pk$ to $\cA$.
     \end{itemize} 
    \item $\cA$ sends a query $P_i \in \Ps$ to the challenger. 
    \begin{itemize}
     \item If $b=0$, the challenger returns a secret key $\sk_{P} \lrun \keygen(\msk,P)$.
     \item If $b=1$, the challenger returns a fake secret key $\tlsk_{P} \lrun \FakeSK(\pk,\aux,P)$.
     \end{itemize} 
      $\cA$ can send polynomially many key queries.
    \item At some point, $\cA$ sends the target attribute
    $X^\ast\in \Xs$ and message
    $m \in \Ms$ to the challenger where $X^\ast$ must satisfy $P(X^\ast)=\bot$ for all key queries $P$ sent so far. The challenger does the following.
    \begin{itemize}
    \item If $b =0$, the challenger generates $\ct^\ast \lrun \Enc(\pk,X^\ast,m)$ and returns $(\ct^*,\msk)$ to $\cA$.
    \item If $b=1$, the challenger generates $\tlct^\ast \lrun \FakeCT(\pk,\aux,X^\ast)$ and $\tlmsk \lrun \Reveal(\pk,\aux,\tlct,m)$ and returns $(\tlct,\tlmsk)$ to $\cA$.
    \end{itemize}
    \item Again, $\cA$ can send key queries $P$ that must satisfy  $P(X^\ast)=\bot$. 
    \item $\cA$ outputs $b'\in \bit$.
\end{enumerate}
We say that $\Sigma$ is RNC secure if for any QPT adversary, it holds that
\begin{align}
\advb{\Sigma,\cA}{rnc}{cpa}(\secp)\seteq \abs{\Pr[ \expa{\Sigma,\cA}{rnc}{cpa}(\secp, 0)=1] - \Pr[ \expa{\Sigma,\cA}{rnc}{cpa}(\secp, 1)=1] }\leq \negl(\secp).
\end{align} 
\end{definition}

\subsection{Non-Committing ABE from IO}\label{sec:NCABE_from_IO}

In this section, we construct NCABE scheme with plaintext space $\bit^{\ell_m}$, attribute space $\Xs$ where $\ell_m$ are some polynomials, and policy space $\Ps$ from IO for $\Ppoly$ and ABE scheme with plaintext space $\bit$, attribute space $\Xs$, and policy space $\Ps$.
\paragraph{Our NCABE scheme.}
Let $\Sigma_{\abe}=\ABE.(\Setup,\keygen,\Enc,\Dec)$ be an IND-CPA secure ABE scheme on the message space $\bit$ and $\Pi_{\nizk}$ be a NIZK proof for the $\NP$ language $\Lang$ corresponding to the following relation $\cR$.
\begin{align}
\cR &\seteq \setbk{((\pk,\setbk{\ct_{i,0},\ct_{i,1}}_{i\in [\ell_m]},X),\setbk{(m[i],r_{i,0},r_{i,1})}_{i\in[\ell_m]}) \mid \forall i\forall b\ \ct_{i,b}=\ABE.\Enc(\abe.\pk_{i,b},X,m[i];r_{i,b})}
\end{align}
where $\pk= \setbk{\abe.\pk_{i,0},\abe.\pk_{i,1}}_{i\in \ell_m}$.

We construct an NCABE scheme $\Sigma_{\nce}=(\Setup,\keygen,\Enc,\Dec,\FakeSetup,\FakeCT,\FakeSK,\Reveal)$ as follows.
\begin{description}
\item[$\Setup(1^\secp):$]$ $
\begin{enumerate}
\item Generate $(\abe.\pk_{i,b},\abe.\msk_{i,b}) \lrun \ABE.\Setup(1^\secp)$ for every $i\in[\ell_m]$ and $b\in\bit$.
\item Choose $z \chosen \bit^{\ell_m}$.
\item Computes $\crs \lrun \NIZK.\Setup(1^\secp)$.
\item Output $\pk \seteq (\setbk{\abe.\pk_{i,b}}_{i\in[\ell_m],b\in\bit},\crs)$ and $\msk\seteq (\pk,\setbk{\abe.\msk_{i,z[i]}}_{i\in[\ell_m]},z)$.
\end{enumerate}
\item[$\keygen(\msk,P)$:]$ $
\begin{enumerate}
\item Parse $\msk = (\pk,\setbk{\abe.\msk_{i,z[i]}}_{i\in[\ell_m]},z)$.
\item Generate $\sk_{i}\la \ABE.\keygen(\abe.\msk_{i,z[i]},P)$ for every $i\in[\ell_m]$.  
\item Generate and output $\sk_P \seteq \iO(\sfD[\crs,\setbk{\sk_{i}}_{i\in[\ell_m]},z])$, where circuit $\sfD$ is described in~\cref{fig:LR_dec_circuit}.
\end{enumerate}
\item[$\Enc(\pk,X,m):$]$ $
\begin{enumerate}
\item Parse $\pk = (\setbk{\abe.\pk_{i,b}}_{i\in[\ell_m],b\in\bit},\crs)$.
\item Generate $\ct_{i,b}\lrun \ABE.\Enc(\abe.\pk_{i,b},X,m[i];r_{i,b})$ for every $i\in[\ell_m]$ and $b\in\bit$ where $r_{i,b}$ is uniformly chosen from the randomness space for $\ABE.\Enc$.
\item Generate $\pi \lrun \NIZK.\Prove(\crs,d,w)$ where $d = (\setbk{(\abe.\pk_{i,0},\abe.\pk_{i,1},\ct_{i,0},\ct_{i,1})}_{i\in[\ell_m]},X)$ and $w = (m,\setbk{r_{i,0},r_{i,1}}_{i\in[\ell_m]})$.
\item Output $\ct_X \seteq (\setbk{\ct_{i,0},\ct_{i,1}}_{i\in[\ell_m]},\pi)$.
\end{enumerate}
\item[$\Dec(\sk_P,\ct_X):$]$ $
\begin{enumerate}
\item Parse $\sk_P = \widetilde{\sfD}$.
\item Compute and output $m\seteq \widetilde{\sfD}(\ct_X)$.
\end{enumerate}
\item[$\FakeSetup(1^\secp):$]$ $
\begin{enumerate}
\item Generate $(\abe.\pk_{i,b},\abe.\msk_{i,b}) \lrun \ABE.\Setup(1^\secp)$ for every $i\in[\ell_m]$ and $b\in\bit$.
\item Choose $z^\ast \chosen \bit^{\ell_m}$.
\item Computes $(\tlcrs,\td) \lrun \Sim_1(1^\secp)$.
\item Output $\pk \seteq (\setbk{\abe.\pk_{i,b}}_{i\in[\ell_m],b\in\bit},\tlcrs)$ and $\aux\seteq (\pk,\td, \setbk{\abe.\msk_{i,b}}_{i\in[\ell_m],b\in\bit},z^\ast)$.
\end{enumerate}
\item[$\FakeSK(\pk,\aux,P):$]$ $
\begin{enumerate}
\item Parse $\aux=(\pk,\td, \setbk{\abe.\msk_{i,b}}_{i\in[\ell_m],b\in\bit},z^\ast)$.
\item Generate $\sk_i^{0} \lrun \ABE.\keygen(\abe.\msk_{i,0},P)$ for every $i \in [\ell_m]$ and set $\sk_P^0 \seteq \setbk{\sk_i^{0}}_{i\in[\ell_m]}$.
\item Generate and output $\tlsk \seteq \iO(\sfD_0[\tlcrs,\sk_P^0])$, where circuit $\sfD_0$ is described in~\cref{fig:L_dec_circuit}.
\end{enumerate}
\item[$\FakeCT(\pk,\aux,X):$]$ $
\begin{enumerate}
\item Parse $\pk= (\setbk{\abe.\pk_{i,b}}_{i\in[\ell_m],b\in\bit},\tlcrs)$ and $\aux = (\pk,\td, \setbk{\abe.\msk_{i,b}}_{i\in[\ell_m],b\in\bit},z^\ast)$.
\item Compute $\ct^\ast_{i,z^\ast [i]}\lrun \ABE.\Enc(\abe.\pk_{i,z^\ast [i]},X,0)$ and $\ct^\ast_{i,1-z^\ast [i]}\lrun \ABE.\Enc(\pk_{i,1-z^\ast [i]},X,1)$ for every $i\in[\ell_m]$.
\item Compute $\tlpi \lrun \Sim_2(\tlcrs,\td,d^\ast)$ where $d^\ast = (\setbk{(\abe.\pk_{i,0},\abe.\pk_{i,1},\ct^\ast_{i,0},\ct^\ast_{i,1})}_{i\in[\ell_m]},X)$.
\item Outputs $\tlct_X \seteq (\{\ct^\ast_{i,b}\}_{i\in[\ell_m],b\in\bit},\tlpi)$.
\end{enumerate}
\item[$\Reveal(\pk,\aux,\tlct_X,m):$]$ $
\begin{enumerate}
\item Parse $\aux =(\pk,\td, \setbk{\abe.\msk_{i,b}}_{i\in[\ell_m],b\in\bit},z^\ast)$.
\item Outputs $\tlmsk \seteq (\pk,\setbk{\abe.\msk_{i,z^*[i]\xor m[i]}}_{i\in[\ell_m]},z^\ast \xor m)$.
\end{enumerate}
\end{description}

\protocol{Left-or-Right Decryption Circuit $\sfD$
}
{The description of the left-or-right decryption circuit}
{fig:LR_dec_circuit}
{
\begin{description}
\item[Input:] A ciphertext $\ct_X $.
\item[Hardwired value:] $\crs$, $z$, and $\setbk{\sk_i}_{i\in[\ell_m]}$.
\end{description}
\begin{enumerate}
	\item Parse $\ct_X = (\setbk{\ct_{i,0},\ct_{i,1}}_{i\in[\ell_m]},\pi)$
\item If $\NIZK.\Vrfy(\crs,d,\pi)\ne \top$, output $\bot$.
\item Compute $m[i] \lrun \ABE.\Dec(\sk_{i},\ct_{i,z[i]})$ for $i\in [\ell_m]$.
\item Output $m \seteq m[1]\concat \cdots \concat m[\ell_m]$.
\end{enumerate}
}

\protocol{Left Decryption Circuit $\sfD_0$
}
{The description of the left decryption circuit}
{fig:L_dec_circuit}
{
\begin{description}
\item[Input:] A ciphertext $\ct_X $.
\item[Hardwired value:] $\tlcrs$  and $\sk_P^{0}= \setbk{\sk_{i}^{0}}_{i\in[\ell_m]}$.
\end{description}
\begin{enumerate}
	\item Parse $\ct_X = (\setbk{\ct_{i,0},\ct_{i,1}}_{i\in[\ell_m]},\pi)$
\item If $\NIZK.\Vrfy(\tlcrs,d,\pi)\ne \top$, output $\bot$.
\item Compute $m[i] \lrun \ABE.\Dec(\sk_{i}^{0},\ct_{i,0})$ for $i\in [\ell_m]$.
\item Output $m \seteq m[1]\concat \cdots \concat m[\ell_m]$.
\end{enumerate}
}

\paragraph{Correctness.}
Correctness of $\Sigma_{\nce}$ easily follows from correctness of $\Sigma_{\abe}$ and completeness of $\Pi_\nizk$.

\paragraph{Security.}
We prove the following theorem.
\begin{theorem}\label{thm:ncabe_from_abe_io}
If $\Sigma_{\abe}$ is perfectly correct and IND-CPA secure, $\iO$ is secure IO for $\Ppoly$, and $\Pi_{\nizk}$ is a NIZK proof system for $\NP$, $\Sigma_{\nce}$ is RNC secure NCABE.
\end{theorem}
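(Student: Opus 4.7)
The plan is to establish a chain $\hybi{0} \cind \hybi{1} \cind \hybi{2} \cind \hybi{3}$, where $\hybi{0}$ is $\expa{\Sigma_\nce,\cA}{rnc}{cpa}(\secp,0)$ and $\hybi{3}$ coincides with $\expa{\Sigma_\nce,\cA}{rnc}{cpa}(\secp,1)$ after the purely syntactic renaming $z^\ast \seteq z \oplus m$. The conceptual key is that in the fake experiment the revealed $z^\ast \oplus m$ and the asymmetric fake ciphertext are calibrated so that decryption using $\abe.\msk_{i,z[i]}$ still recovers $m[i]$, exactly as in the real experiment; thus the ``shape'' of the adversary's view in the fake game can be matched by suitable modifications of the real game with the choice $z \seteq z^\ast \oplus m$.

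In $\hybi{1}$, I replace, via a standard hybrid over the polynomially many secret-key queries, each key $\iO(\sfD[\crs,\{\sk_i\},z])$ by $\iO(\sfD_0[\crs,\{\sk_i^0\}])$ with $\sk_i^0 \lrun \ABE.\keygen(\abe.\msk_{i,0},P)$, invoking IO security. Here the NIZK CRS is still honestly generated, so statistical soundness of $\Pi_\nizk$ forces any input ciphertext with a verifying proof to be a valid double encryption under some attribute $X$; combined with perfect correctness of $\Sigma_\abe$ and an explicit $P(X)=\top$ check inserted at the top of $\sfD$ and $\sfD_0$, the two circuits become functionally equivalent, as IO requires. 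Next, in $\hybi{2}$, I switch to a simulated CRS $\tlcrs$ and a simulated challenge proof $\tlpi$ via $(\Sim_1,\Sim_2)$, using computational zero-knowledge of $\Pi_\nizk$; this applies because the challenge statement is still true at this stage. Performing IO first and NIZK second is precisely what avoids the need for simulation-soundness and, as noted in~\cref{sec:technical_overview_quantum}, enables adaptive rather than selective security.

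Finally, $\hybi{3}$ is reached by an inner hybrid over $i \in [\ell_m]$: for each $i$, change $\ct_{i,1-z[i]}$ from an encryption of $m[i]$ to an encryption of $1-m[i]$, reducing to IND-CPA of $\Sigma_\abe$ under $\abe.\pk_{i,1-z[i]}$. The reduction generates every other pair $(\abe.\pk_{j,b},\abe.\msk_{j,b})$ itself; for secret-key queries it assembles $\iO(\sfD_0[\tlcrs,\{\sk_j^0\}])$ using its own $\abe.\msk_{j,0}$ for $j\ne i$ and using the ABE key oracle for $j=i$ (the NCABE restriction $P(X^\ast)=\bot$ coincides with the ABE IND-CPA key-query restriction, so every allowed NCABE query is a legal ABE query); it simulates the challenge proof by $\Sim_2$ applied to the statement induced by the current ciphertext using the trapdoor $\td$; and the reveal $(\pk, \{\abe.\msk_{i',z[i']}\}_{i'}, z)$ is in hand because $z[i]\ne 1-z[i]$ leaves $\abe.\msk_{i,z[i]}$ known to the reduction. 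After all $\ell_m$ sub-steps, renaming $z^\ast \seteq z\oplus m$ (still uniformly distributed) makes the view identical to that of the fake game. The main obstacle will be establishing the functional equivalence of $\sfD$ and $\sfD_0$ on inputs with $P(X)=\bot$, where $\ABE.\Dec$ is unspecified; I resolve this by inserting the explicit $P(X)=\top$ check at the top of both circuits---feasible because $X$ is part of the NIZK statement which is extractable from the circuit's input---together with the standard padding to equalize the two circuit sizes.
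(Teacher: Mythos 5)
Your hybrid decomposition is the same as the paper's (Propositions~4.5--4.8): first replace $\sfD$ by $\sfD_0$ via IO security plus statistical soundness of the NIZK, then switch to a simulated CRS and simulated challenge proof via zero-knowledge, then change the challenge ciphertext one slot at a time via IND-CPA of $\Sigma_\abe$, and finish with the lossless renaming $z^\ast \seteq z\oplus m$. The reduction mechanics also match the paper's: for the IND-CPA step the reduction embeds the challenge key as $\abe.\pk_{j,1-z[j]}$, uses the ABE key oracle only for the single unknown master key (and only when $z[j]=1$), and reveals $\msk=(\pk,\{\abe.\msk_{i',z[i']}\}_{i'},z)$, which it can compute because it lacks only $\abe.\msk_{j,1-z[j]}$.

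The one substantive departure is your insertion of an explicit $P(X)=\top$ check at the top of $\sfD$ and $\sfD_0$, and this is a genuine repair rather than overcaution. The paper's Proposition~4.5 argues that, conditioned on statistical soundness, the two circuits are functionally equivalent ``by perfect correctness,'' but perfect correctness (Definition~2.9) constrains decryption only when $P(X)=\top$. For an input $\ct_X$ carrying a valid proof and an adversarially chosen attribute $X$ with $P(X)=\bot$, both $\ABE.\Dec(\sk_i,\ct_{i,z[i]})$ and $\ABE.\Dec(\sk_i^0,\ct_{i,0})$ are unspecified; they are computed from distinct keys and ciphertext slots, so nothing forces them to agree, and without agreement the IO step does not apply. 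Your gate (feasible since $X$ must already be recoverable from $\ct_X$ to reconstruct the NIZK statement $d$ inside the circuit, plus the usual padding to equalize sizes) makes both circuits output $\bot$ uniformly on that region and restores functional equivalence. An alternative that leaves the paper's circuits untouched is to strengthen the correctness requirement on $\Sigma_\abe$ so that $\ABE.\Dec(\sk_P,\ct_X)=\bot$ deterministically whenever $P(X)=\bot$, a property satisfied by standard ABE constructions; either fix closes the step the paper leaves implicit.
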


\begin{proof}
Let $\cA$ be a QPT adversary. 
We define the following sequence of hybrid games.
\begin{itemize}
\item $\hybi{0}$: This is the same as $\expa{\Sigma_\nce,\cA}{rnc}{cpa}(\secp,0)$. 
Let $X^\ast$  and $m$ be the target attribute and message, respectively, as in~\cref{def:ncabe_security}. 
\item $\hybi{1}$: This is the same as $\hybi{0}$ except that the challenger uses the circuit $\sfD_0[\tlcrs,\sk_P^0]$  instead of $\sfD[\crs,\setbk{\sk_i}_{i\in[\ell_m]},z]$ to generate secret keys for key queries. That is, it returns $\tlsk = \iO(\sfD_0[\tlcrs,\sk_P^0])$ instead of $\sk =\iO (\sfD[\crs,\setbk{\sk_i}_{i\in[\ell_m]},z])$. This change is indistinguishable by the IO security and statistical soundness of $\Pi_{\nizk}$. Note that secret keys do not depend on $z$ in this game. 
\item $\hybi{2}$: This is the same as $\hybi{1}$ except that the challenger 
generates 
a common reference string and proof of the NIZK by using simulators.       
That is, it generates 
$(\tlcrs,\td)\lrun \Sim_1(1^\secp)$ and
$\tlpi \lrun \Sim_2(\tlcrs,\td,d^\ast)$ where  $d^\ast=(\setbk{\abe.\pk_{i,0},\abe.\pk_{i,1},\ct^\ast_{i,0},\ct^\ast_{i,1}}_{i\in[\ell_m]},X^\ast)$. 
This change is indistinguishable by the computational zero-knowledge property of $\Pi_{\nizk}$.
\item $\hybi{3}$: This is the same as $\hybi{2}$ except that the challenger generates an inconsistent target ciphertext. That is, it generates $\ct^\ast_{i,1-z[i]}\lrun \ABE.\Enc(\abe.\pk_{i,1-z[i]},X^\ast,1-m[i])$ and $\ct^\ast_{i,z[i]}\lrun \ABE.\Enc(\abe.\pk_{i,z[i]},X^\ast,m[i])$ instead of double encryption of $m[i]$ for all $i$. Note that the NIZK proof in the target ciphertext is generated by the simulator in this game.
\item $\hybi{4}$: This is the same as $\hybi{3}$ except that the challenger chooses $z^*\chosen \zo{\ell_m}$, computes $\ct_{i,z[i]^\ast}^\ast \lrun \ABE.\Enc(\abe.\pk_{i,z^\ast[i]},X^\ast,0)$ and  $\ct_{i,1-z^\ast [i]}^\ast \lrun \ABE.\Enc(\abe.\pk_{i,1-z^\ast[i]},X^\ast,1)$, and sets $\tlmsk \seteq (z^\ast \xor m,\allowbreak\setbk{\msk_{i,z[i]^\ast \xor m[i]}}_{i\in[\ell_m]})$ as a master secret key. 
\end{itemize}
We prove~\cref{prop:ncabe_nizk_czk,prop:ncabe_hybrid_sk,prop:ncabe_abe_ind,prop:ncabe_conceptual}.
\begin{proposition}\label{prop:ncabe_hybrid_sk}
If 
$\Pi_{\abe}$ is perfectly correct, 
$\Pi_{\nizk}$ is statistically sound, and $\iO$ is secure, $\abs{\Pr[\hybi{0}=1] - \Pr[\hybi{1}=1]}\le  \negl(\secp)$.
\end{proposition}
\begin{proof}
We define more hybrid games. Let $q$ be the total number of key queries.
\begin{description}
\item [$\hybij{0}{j}$:] This is the same as $\hybi{0}$ except that
\begin{itemize}
\item for $j < k \le q$, the challenger generates $\sk = \iO(\sfD[\crs,\setbk{\sk_i}_{i\in[\ell_m]},z])$ for the $k$-th key query.
\item for $1\le k \le j$, the challenger generates $\tlsk =\iO(\sfD_0[\tlcrs,\sk_P^0])$ for the $k$-th key query.
\end{itemize}
Clearly, $\hybij{0}{0} = \hybi{0}$ and $\hybij{0}{q}=\hybi{1}$.
\end{description}
Let $\sfinvalid$ be an event that there exists $d^\dagger\notin \Lang$ and $\pi^\dagger$ such that $\NIZK.\Vrfy(\crs,d^\dagger,\pi^\dagger)=\top$.
By statistical soundness of $\Pi_{\nizk}$, this happens with negligible probability.
 If $\sfinvalid$ does not occur,
 by the definitions of $\sfD$ and $\sfD_0$ and perfect correctness of $\Pi_{\abe}$, their functionalities are equivalent for all inputs 
 Therefore, adversary's distinguishing advantage of obfuscation of these two circuits is negligible by the iO security.
The difference between $\hybij{1}{j-1}$ and $\hybij{1}{j}$ is that the $j$-th key query answer is generated by $\sfD_0$ instead of $\sfD$.
Therefore we have $\abs{\Pr[\hybij{i}{j-1}=1] - \Pr[\hybij{i}{j}=1]} \le \negl(\secp)$.
By a standard hybrid argument, we obtain~\cref{prop:ncabe_hybrid_sk}.
\end{proof}
\begin{proposition}\label{prop:ncabe_nizk_czk}
If $\Pi_{\nizk}$ is computationally zero-knowledge, 
$\abs{\Pr[\hybi{1}=1] - \Pr[\hybi{2}=1]}\le \negl(\secp)$.  
\end{proposition}
\begin{proof}   
The only difference of these two games is how to generate the common reference string and  proof of NIZK.
Thus, there is a straightforward reduction to the zero-knowledge property of $\Pi_{\nizk}$.
Specifically, 
we assume that $\abs{\Pr[\hybi{1}=1] - \Pr[\hybi{2}=1]}$ is non-negligible and 
construct an adversary $\cB_\nizk$ that breaks the zero-knowledge property of $\Pi_\nizk$.

Let $d = (\setbk{(\abe.\pk_{i,0},\abe.\pk_{i,1},\ct_{i,0},\ct_{i,1})}_{i\in[\ell_m]},X^\ast)$ and $w = (m,\setbk{r_{i,0},r_{i,1}}_{i\in[\ell_m]})$ be as in $\hybi{1}$ (or $\hybi{2}$, equivalently). 
$\cB_\nizk$ is given $(\crs^\ast,\pi^\ast)$ which is generated by the real setup and proving algorithms or simulators. 
$\cB_\nizk$ runs $\hybi{1}$ for $\cA$ while embedding   $(\crs^\ast,\pi^\ast)$ in the appropriate part. 
Finally, $\cB_\nizk$ outputs whatever $\cA$ outputs. 

\begin{itemize}
\item If $(\crs^\ast,\pi^\ast)$ is the real one, i.e., it is generated by $\crs^\ast \lrun \NIZK.\Setup(1^\secp)$ and $\pi^\ast \lrun \NIZK.\Prove(\crs,d,w)$, $\cB_\nizk$ perfectly simulates $\hybi{1}$. 
\item If $(\crs^\ast,\pi^\ast)$ is the simulated one, i.e., it is generated by $(\crs^\ast,\td) \lrun \Sim_1(1^\secp)$ and $\pi^\ast \lrun \Sim_2(\crs^\ast,\td,d)$, $\cB_\nizk$ perfectly simulates $\hybi{1}$.
\end{itemize}
Thus, if $\cA$ distinguishes these two hybrids, $\cB_\nizk$ breaks the computational zero-knowledge property of $\Pi_{\nizk}$. This completes the proof.
\end{proof}

\begin{proposition}\label{prop:ncabe_abe_ind}
If $\Pi_\abe$ is IND-CPA secure, 
$\abs{\Pr[\hybi{2}=1] - \Pr[\hybi{3}=1]}\le \negl(\secp)$.
\end{proposition}
\begin{proof}
We define more hybrid games. Recall $\ell_m$ is the length of plaintexts.
\begin{description}
\item [$\hybij{2}{j}$:] This is the same as $\hybi{2}$ except that
\begin{itemize}
\item for $j < i \le \ell_m$, the challenger generates $\ct_{i,b}^\ast \lrun \ABE.\Enc(\abe.\pk_{i,b},X^\ast,m[i])$ for $b\in\zo{}$.
\item for $1\le i \le j$, the challenger generates $\ct_{i,1- z[i]}^\ast \lrun \ABE.\Enc(\abe.\pk_{i,1-z[i]},X^\ast,1-m[i])$ and $\ct^\ast_{i,z[i]}\lrun \ABE.\Enc(\abe.\pk_{i,z[i]},m[i])$.
\end{itemize}
Clearly, $\hybij{2}{0} = \hybi{2}$ and $\hybij{2}{\ell_m} = \hybi{3}$.
\end{description}
The difference between $\hybij{2}{j}$ and $\hybij{2}{j-1}$ is the $j$-th component of the target ciphertext is valid or invalid.
We can show that this is indistinguishable 
by observing that the master secret key is set to be $\msk\seteq (\setbk{\abe.\msk_{i,z[i]}}_{i\in[\ell_m]},z)$ in these games and  $\setbk{\msk_{j,1-z[i]}}_{i\in [\ell_m]}$ is never revealed to the adversary.
Specifically, we can construct an adversary $\cB_\abe$ that breaks IND-CPA security of $\Sigma_{\abe}$ under key $\abe.\pk_{j,1-z[j]}$ assuming that $\cA$ distinguishes these two games.

$\cB_\abe$  receives $\abe.\pk$ from the challenger of $\expa{\Sigma_\abe,\cB_\abe}{ind}{cpa}(\secp,b')$ for $b'\in \bit$, and sets $\abe.\pk_{j,1-z[j]}\seteq \abe.\pk$.
For other public keys (that is, $\setbk{\abe.\pk_{i,b}}_{i\in[\ell_m],b\in\bit}\setminus \setbk{\abe.\pk_{j,1-z[j]}}$) and $\crs$, $\cB_\abe$ generates them by itself. $\cB_\abe$ sends $\pk \seteq (\setbk{\abe.\pk_{i,b}}_{i\in[\ell_m],b\in\bit},\crs)$ to $\cA$. When the distinguisher sends a key query $P$, $\cB_\abe$ passes $P$ to the challenger and receives $\sk_{j,1-z[j]} \lrun \ABE.\keygen(\msk_{j,1-z[j]},P)$.\footnote{In fact, $\cB_\abe$ need not query the challenger when $z[j]=0$ since $\sk_{j,1}$ is not needed for generating $\sfD_0[\tlcrs,\sk_P^0]$.} For other secret keys for $P$ (that is, $\setbk{\sk_{i,b}\lrun \ABE.\keygen(\msk_{i,b},P)}_{i\in[\ell_m],b\in\bit}\setminus \setbk{\sk_{j,1-z[j]}}$), $\cB_\abe$ generates them by itself since it has $\setbk{\msk_{i,b}}_{i\in[\ell_m],b\in\bit}$ except for $\msk_{j,1-z[j]}$. Thus, $\cB_\abe$ can compute $\tlsk = \iO(\sfD_0[\tlcrs,\sk_P^0])$.
At some point, 
$\cA$ declares target attribute $X^\ast$ and message $m$. 
$\cB_\abe$ sends $(m[j],1-m[j])$ to the challenger and receives $\ct_{j,1-z[j]}^\ast$. For $(i,b)\in[\ell_m]\times \zo{}\setminus (j,1-z[j])$, $\cB_\abe$ generates $\ct_{j,z[j]}^\ast \lrun \ABE.\Enc(\abe.\pk_{j,z[j]},X^\ast,m[j])$ and $\setbk{\ct_{i,b}}_{i \in [\ell_m]\setminus\setbk{j},b\in\zo{}}$ as in $\hybij{2}{j}$ and $\hybij{2}{j-1}$. Note that the difference between two games is the $j$-th component (and in particular $(j,1-z[j])$ part) of the target ciphertext. Again, $\cB_\abe$ simulates answers for secret key queries as above. $\cB_\abe$ outputs whatever $\cA$ outputs.
\begin{itemize}
\item If $b'=0$, i.e., $\ct_{j,1-z[j]}^\ast \lrun \ABE.\Enc(\abe.\pk_{j,1-z[j]},X^\ast,m^\ast[j])$, $\cB_\abe$ perfectly simulates $\hybij{i}{j-1}$.
\item If $b'=1$, i.e., $\ct_{j,1-z[j]}^\ast \lrun \ABE.\Enc(\abe.\pk_{j,1-z[j]},X^\ast,1-m^\ast[j])$, $\cB_\abe$ perfectly simulates $\hybij{i}{j}$.
\end{itemize}
Thus, if $\cA$ distinguishes these two games, $\cB_\abe$ breaks IND-CPA security of $\Sigma_{\abe}$. This completes the proof.
\end{proof}
\begin{proposition}\label{prop:ncabe_conceptual}
$\Pr[\hybi{3}=1]=\Pr[\hybi{4}=1]$.
\end{proposition}
\begin{proof}
This is a conceptual change. The advantage of distinguishing these two games is $0$ since we can see that these two games are identical if we set $z\seteq z^\ast \xor m
^\ast$. Note that secret keys do not depend on $z$ in these games.
\end{proof}
Clearly, $\hybi{4}=\expa{\Sigma_\nce,\cA}{rnc}{cpa}(\secp,1)$. Therefore, we complete the proof by~\cref{prop:ncabe_nizk_czk,prop:ncabe_hybrid_sk,prop:ncabe_abe_ind,prop:ncabe_conceptual}.
\end{proof}

\subsection{ABE with Certified Deletion from NCABE and SKE with Certified Deletion}\label{sec:const_abe_cd_from_sk}
In this section, we construct ABE with certified deletion from NCABE and OT-CD secure SKE with certified deletion.
\paragraph{Our ABE with certified deletion scheme.}
We construct an ABE with certified deletion scheme $\Sigma_{\mathsf{cd}} =(\Setup,\keygen,\allowbreak\Enc,\Dec,\Delete,\Vrfy)$ with plaintext space $\Ms$, attribute space $\Xs$, and policy space $\Ps$ from an NCABE scheme $\Sigma_{\nce}=\NCE.(\Setup,\keygen,\Enc,\Dec,\FakeSetup,\FakeSK,\FakeCT,\Reveal)$ with plaintext space $\bit^{\ell}$, attribute space $\Xs$, and policy space $\Ps$ and an SKE with certified deletion scheme $\Sigma_{\skcd}=\SKE.(\Gen,\Enc,\Dec,\Delete,\Vrfy)$ with plaintext space $\Ms$ and key space $\bit^\ell$.

\begin{description}
	\item[$\Setup(1^\secp)$:]$ $
	\begin{itemize}
	\item Generate $(\nce.\pk,\nce.\msk)\lrun \NCE.\Setup(1^\secp)$.
	\item Output $(\pk,\msk) \seteq (\nce.\pk,\nce.\msk)$.
	\end{itemize}
\item[$\keygen(\msk,P)$:] $ $
\begin{itemize}
\item Generate $\nce.\sk_P \lrun \NCE.\keygen(\nce.\msk,P)$ and output $\sk_P \seteq \nce.\sk_P$.
\end{itemize}
\item[$\Enc(\pk,X,m)$:] $ $
\begin{itemize}
	\item Parse $\pk = \nce.\pk$.
\item Generate $\ske.\sk \lrun \SKE.\Gen(1^\secp)$.
\item Compute $\nce.\ct_X \lrun \NCE.\Enc(\nce.\pk,X,\ske.\sk)$ and $\ske.\ct \lrun \SKE.\Enc(\ske.\sk,m)$.
\item Output $\ct_X \seteq (\nce.\ct_X,\ske.\ct)$ and $\vk \seteq \ske.\sk$.
\end{itemize}
\item[$\Dec(\sk_P,\ct_X)$:] $ $
\begin{itemize}
\item Parse $\sk_P = \nce.\sk_P$ and $\ct_X = (\nce.\ct_X,\ske.\ct)$.
\item Compute $\sk^\prime \lrun \NCE.\Dec(\nce.\sk_P,\nce.\ct_X)$.
\item Compute and output $m^\prime \lrun \SKE.\Dec(\sk^\prime,\ske.\ct)$.
\end{itemize}
\item[$\Delete(\ct)$:] $ $
\begin{itemize}
\item Parse $\ct_X = (\nce.\ct_X,\ske.\ct)$.
\item Generate $\ske.\cert \lrun \SKE.\Delete(\ske.\ct)$.
\item Output $\cert \seteq \ske.\cert$.
\end{itemize}
\item[$\Vrfy(\vk,\cert)$:] $ $
\begin{itemize}
\item Parse $\vk = \ske.\sk$ and $\cert = \ske.\cert$.
\item Output $b \lrun \SKE.\Vrfy(\ske.\sk,\ske.\cert)$.
\end{itemize}
\end{description}

\paragraph{Correctness.} Correctness easily follows from correctness of $\Sigma_\skcd$ and $\Sigma_\nce$.
\begin{theorem}\label{thm:abe_cd_from_sk_cd_and_ncabe}
If $\Sigma_{\nce}$ is RNC secure ABE and $\Sigma_{\skcd}$ is OT-CD secure, $\Sigma_{\mathsf{cd}}$ is IND-CPA-CD secure ABE.
\end{theorem}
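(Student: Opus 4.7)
The plan is to mirror the hybrid argument used in the proof of \cref{thm:pke_cd_from_sk_cd_and_pke}, upgrading it to handle the ABE key oracle via the RNC security of NCABE (\cref{def:ncabe_security}). For a bit $b\in\bit$, let the real game $\expc{\Sigma_{\mathsf{cd}},\cA}{ind}{cpa}{cd}(\secp,b)$ be denoted $\hybi{0}(b)$, and introduce a single intermediate hybrid $\hybi{1}(b)$ that is identical to $\hybi{0}(b)$ except that the challenger invokes the fake algorithms of $\Sigma_\nce$: it runs $(\nce.\pk,\nce.\aux)\lrun\NCE.\FakeSetup(1^\secp)$, answers every secret-key query $P$ (both pre- and post-challenge) by $\tlsk_P\lrun\NCE.\FakeSK(\nce.\pk,\nce.\aux,P)$, produces the NCE part of the target ciphertext as $\nce.\tlct\lrun\NCE.\FakeCT(\nce.\pk,\nce.\aux,X^\ast)$, and, if the submitted deletion certificate is valid, reveals $\tlmsk\lrun\NCE.\Reveal(\nce.\pk,\nce.\aux,\nce.\tlct,\ske.\sk)$ rather than the real $\nce.\msk$.

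First I would show $\abs{\Pr[\hybi{0}(b)=1]-\Pr[\hybi{1}(b)=1]}\le\negl(\secp)$ by a direct reduction $\cB_\nce$ to the RNC security of $\Sigma_\nce$. The reduction receives $\nce.\pk$ from its challenger, samples $\ske.\sk\lrun\SKE.\Gen(1^\secp)$ internally, forwards all $\cA$'s key queries $P$ verbatim to the NCABE oracle and relays the responses, and upon the challenge step sends the message $\ske.\sk$ together with target attribute $X^\ast$ to obtain $(\nce.\ct^\ast,\mathsf{K})$; it then sets the ABE-CD challenge ciphertext to $(\nce.\ct^\ast,\SKE.\Enc(\ske.\sk,m_b))$. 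When $\cA$ outputs a deletion certificate $\cert$, $\cB_\nce$ checks it locally using $\ske.\sk$ and, if valid, hands $\mathsf{K}$ to $\cA$ as the master secret key; otherwise it returns $\bot$. In the $b'=0$ branch of the RNC game $\mathsf{K}=\nce.\msk$ and $\cB_\nce$ simulates $\hybi{0}(b)$ perfectly, while in the $b'=1$ branch $\mathsf{K}=\tlmsk$ and it simulates $\hybi{1}(b)$ perfectly. Crucially, the constraint $P(X^\ast)=\bot$ in the ABE-CD game exactly matches the query constraint of the RNC experiment, so all queries are legal.

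Next I would show $\abs{\Pr[\hybi{1}(0)=1]-\Pr[\hybi{1}(1)=1]}\le\negl(\secp)$ by a reduction $\cB_\skcd$ to the OT-CD security of $\Sigma_\skcd$. Here $\cB_\skcd$ generates $(\nce.\pk,\nce.\aux)\lrun\NCE.\FakeSetup(1^\secp)$ on its own and can therefore simulate \emph{all} secret-key queries via $\FakeSK$ without knowing $\ske.\sk$; it forwards $(m_0,m_1)$ to the OT-CD challenger to obtain $\ske.\ct^\ast$ and sends $(\nce.\tlct,\ske.\ct^\ast)$ to $\cA$ with $\nce.\tlct\lrun\NCE.\FakeCT(\nce.\pk,\nce.\aux,X^\ast)$. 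When $\cA$ submits $\cert$, $\cB_\skcd$ relays it to the OT-CD challenger; if the latter returns $\ske.\sk$, then $\cB_\skcd$ computes $\tlmsk\lrun\NCE.\Reveal(\nce.\pk,\nce.\aux,\nce.\tlct,\ske.\sk)$ and forwards it to $\cA$. The bit returned by $\cA$ is output. This is a perfect simulation of $\hybi{1}(b')$, so distinguishing the hybrids translates to breaking OT-CD security. Combining the two bounds by the triangle inequality yields the theorem.

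The main conceptual obstacle is that NCABE's reveal algorithm must commit to a master secret key \emph{after} learning the plaintext (here $\ske.\sk$), which is only revealed in the ABE-CD game \emph{after} a valid deletion certificate has been verified; this is exactly the moment when the RNC challenger also produces $\tlmsk$, and it is precisely why post-challenge key queries satisfying $P(X^\ast)=\bot$ can still be answered consistently in $\hybi{1}$---the $\FakeSK$ algorithm does not depend on the plaintext. The rest of the argument is routine and parallels \cref{thm:pke_cd_from_sk_cd_and_pke}.
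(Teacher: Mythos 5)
Your proposal is correct and matches the paper's proof essentially line by line: you introduce the same single intermediate hybrid in which all four fake algorithms of the NCABE scheme are invoked, prove its closeness to the real game by a direct reduction to RNC security of $\Sigma_\nce$ (with $\ske.\sk$ as the plaintext and the fake master key withheld until a valid certificate is checked locally), and then prove the two branches of the hybrid indistinguishable by a reduction to OT-CD security of $\Sigma_\skcd$ in which the reduction runs $\FakeSetup$/$\FakeSK$/$\FakeCT$/$\Reveal$ itself. The observation you flag at the end---that $\FakeSK$ must be plaintext-independent so post-challenge key queries remain answerable before $\ske.\sk$ is known---is exactly the point the paper's construction is designed around, and your handling of it is consistent with the paper's argument.
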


\begin{proof}
Let $\cA$ be a QPT adversary and $b$ be a bit. 
We define the following hybrid game $\hybi{}(b)$.
\begin{description}
\item[$\hybi{}(b)$:] This is the same as $\expb{\Sigma_{\mathsf{cd},\cA}}{ind}{cpa}{cd}(\secp,b)$ except for the following  differences: 
\begin{enumerate}
\item The challenger generates the public key as $(\nce.\pk,\nce.\aux)\lrun \NCE.\FakeSetup(1^\secp)$. 
\item The challenger generates the challenge ciphertext as follows. It generates $\ske.\sk \lrun \SKE.\Gen(1^\secp)$, $\nce.\ct_{X^\ast}^\ast \lrun \NCE.\Fake(\nce.\pk,\nce.\aux,X^\ast)$, and $\ske.\ct^\ast \lrun \SKE.\Enc(\ske.\sk,m_b)$. The challenge ciphertext is $\ct_{X^\ast}^\ast \seteq (\nce.\ct_{X^\ast}^\ast,\ske.\ct^\ast)$. 
\item The challenger generates secret keys as  $\nce.\tlsk_P \lrun \NCE.\FakeSK(\nce.\pk,\allowbreak\nce.\aux,P)$. 
\item The challenger reveals $\nce.\tlmsk \lrun \Reveal(\nce.\pk,\nce.\aux,\nce.\ct_X^\ast,\ske.\sk)$ instead of $\nce.\msk$.
\end{enumerate}
\end{description}

\begin{proposition}\label{prop:abecd_hyb_one}
If $\Sigma_{\nce}$ is RNC secure,
$\abs{\Pr[\expb{\Sigma_{\mathsf{cd}},\cA}{ind}{cpa}{cd}(\secp, b)=1] - \Pr[\sfhyb{}{}(b)=1]} \le \negl(\secp)$.
\end{proposition}
\begin{proof}
We construct an adversary $\cB_\nce$ that breaks the RNC security of $\Sigma_{\nce}$ by using $\cA$ that distinguishes them.

$\cB_\nce$ receives $\nce.\pk$ from the challenger of $\expa{\Sigma_\nce,\cB_\nce}{rnc}{cpa}(\secp, b')$ for $b'\in \bit$ and sends $\nce.\pk$ to $\cA$. 
When $\cA$ makes a key query $P$, $\cB_\nce$  passes $P$ to the challenger, receives $\sk_P$, and passes it to $\cA$.
At some point, $\cA$ sends the target attribute $X^\ast$ and messages $(m_0,m_1)$.
$\cB_\nce$ generates $\ske.\sk \lrun \SKE.\Gen(1^\secp)$, sends the target attribute $X^\ast$ and message $\ske.\sk$ to the challenger, and receives $\nce.\ct_{X^\ast}^\ast$ and $\nce.\msk^\ast$ from the challenger.
$\cB_\nce$ generates $\ske.\ct \lrun \SKE.\Enc(\ske.\sk,m_b)$
and 
sends $\ct_{X^\ast}^\ast\seteq (\nce.\ct_{X^\ast}^\ast,\ske.\ct)$  to $\cA$.
Again, when $\cA$ makes a key query $P$, $\cB_\nce$ responds similarly as above. 
At some point, $\cA$ sends a certificate $\cert$. 
If $\SKE.\Vrfy(\ske.\sk,\cert) = \top$, $\cB_\nce$ sends $\nce.\msk^\ast$ to $\cA$.
Otherwise, $\cB_\nce$ sends $\bot$ to $\cA$. 
Finally, $\cB_\nce$ outputs whatever $\cA$ outputs.

We can see that $\cB_\nce$ perfectly simulates $\expb{\Sigma_{\mathsf{cd}},\cA}{ind}{cpa}{cd}(\secp, b)$ if $b'=0$ and  $\sfhyb{}{}(b)$ if $b'=1$. 
Thus, if $\cA$ distinguishes the two hybrids, $\cB_\nce$ breaks the RNC security. This completes the proof.
\end{proof}
\begin{proposition}\label{prop:abecd_hyb_end}
If $\Sigma_{\skcd}$ is OT-CD secure,  
$\abs{\Pr[\sfhyb{}{}(0)=1] - \Pr[\sfhyb{}{}(1)=1] } \le \negl(\secp)$.
\end{proposition}
\begin{proof}
We construct an adversary $\cB_\skcd$ that breaks the OT-CD security of $\Sigma_{\skcd}$ assuming that $\cA$ distinguishes these two hybrids.

$\cB_\skcd$ plays the experiment $\expb{\Sigma_\skcd,\cB_\skcd}{otsk}{cert}{del}(\secp,b')$ for some $b'\in \bit$.
$\cB_\skcd$ generates $(\nce.\pk,\nce.\aux) \lrun \NCE.\Setup(1^\secp)$ and sends $\nce.\pk$ to $\cA$.
When $\cA$ sends a key query $P$, $\cB_\skcd$ generates $\tlsk_P \lrun \NCE.\FakeSK(\nce.\pk,\nce.\aux,P)$ and returns it to $\cA$.
When $\cA$ sends 
the target attribute $X^\ast$ and messages
$(m_0,m_1)$, $\cB_\skcd$ sends $(m_0,m_1)$ to the challenger of $\expb{\Sigma_\skcd,\cB_\skcd}{otsk}{cert}{del}(\secp,b')$, receives $\ske.\ct^\ast$ and generates $\nce.\tlct_X \lrun \NCE.\FakeCT(\nce.\pk,\nce.\aux,X^\ast)$. $\cB_\skcd$ sends $(\nce.\tlct_X,\ske.\ct^\ast)$ to $\cA$ as the challenge ciphertext.
Again, when $\cA$ makes a key query $P$, $\cB_\nce$ responds similarly as above.   
At some point, $\cA$ outputs $\cert$. $\cB_\skcd$ passes $\cert$ to the challenger of $\expb{\Sigma_\skcd,\cB_\skcd}{otsk}{cert}{del}(\secp,b')$. If the challenger returns $\ske.\sk$, $\cB_\skcd$ generates $\tlmsk \lrun \NCE.\Reveal(\nce.\pk,\nce.\msk,\nce.\aux,\nce.\tlct_X,\ske.\sk)$ and sends $\tlmsk$ to $\cA$.
Otherwise, $\cB_\skcd$ sends $\bot$ to $\cA$.  Finally, $\cB_\skcd$ outputs whatever $\cA$ outputs. 

\begin{itemize}
	\item If 
	$b'=0$, i.e., 
	$\ske.\ct^\ast = \SKE.\Enc(\ske.\sk,m_0)$, $\cB_\skcd$ perfectly simulates $\hybi{}(0)$.
	\item If 
	$b'=1$, i.e., 
	$\ske.\ct^\ast = \SKE.\Enc(\ske.\sk,m_1)$, $\cB_\skcd$ perfectly simulates $\hybi{}(1)$.
\end{itemize}
Thus, if $\cA$ distinguishes the two hybrids, $\cB_\skcd$ breaks the OT-CD security of $\Sigma_\skcd$. This completes the proof.
\end{proof}

By~\cref{prop:abecd_hyb_one,prop:abecd_hyb_end}, we immediately obtain~\cref{thm:pke_cd_from_sk_cd_and_pke}.
\end{proof}

\paragraph{Summary of this section.}
Since IO and OWFs imply computational NIZK proof for $\NP$~\cite{TCC:BitPan15} and IND-CPA secure ABE for circuits, we immediately obtain the following corollary by using~\cref{thm:ncabe_from_abe_io,thm:abe_cd_from_sk_cd_and_ncabe,thm:ske_cert_del_no_assumption,thm:ABE_circuits_from_LWE_or_IO}.

\begin{corollary}\label{cor:existence_abe_cd}
If there exist secure IO for $\Ppoly$ and OWFs against QPT adversaries, there exists ABE with certified deletion for circuits.
\end{corollary}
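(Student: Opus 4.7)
The plan is a pure composition of the results collected in this section. Starting from IO for $\Ppoly$ and OWFs secure against QPT adversaries, I would first produce the three primitives needed to invoke \cref{thm:ncabe_from_abe_io}: (i) the second bullet of \cref{thm:ABE_circuits_from_LWE_or_IO} supplies an adaptively IND-CPA secure (key-policy) ABE scheme for circuits, (ii) the second bullet of \cref{thm:NIZK_from_LWE_or_IO} supplies a computational NIZK proof system for $\NP$, and (iii) IO is already assumed. Feeding these into \cref{thm:ncabe_from_abe_io} yields an RNC secure NCABE scheme for circuits with plaintext space $\bit^{\ell_m}$ for any desired polynomial $\ell_m$.

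Next, \cref{thm:ske_cert_del_no_assumption} gives an OT-CD secure SKE with certified deletion unconditionally, with key space $\bit^{\ell_k}$ and plaintext space $\Ms$. I would instantiate the NCABE scheme above with plaintext length $\ell_m = \ell_k(\secp)$ so that the SKE secret key fits into an NCABE plaintext; then feeding these two primitives into the construction of \cref{sec:const_abe_cd_from_sk} and invoking \cref{thm:abe_cd_from_sk_cd_and_ncabe} produces an IND-CPA-CD secure ABE scheme for circuits with plaintext space $\Ms$, which is the claimed conclusion.

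There is no genuine mathematical obstacle; the only detail to check is the perfect correctness requirement that \cref{thm:ncabe_from_abe_io} imposes on the underlying ABE. The IO-and-OWF based (functional/attribute-based) encryption constructions of~\cite{C:Waters15,C:ABSV15} have a deterministic decryption procedure whose correctness follows from the exact functionality of the obfuscated program, so perfect correctness is inherited provided the encryption algorithm never produces a ciphertext outside the intended support; this is standard for the referenced schemes. Thus the proof reduces to the bookkeeping chain $(\text{IO}+\text{OWF}) \Rightarrow (\text{NIZK},\text{ABE}) \xrightarrow{\text{\cref{thm:ncabe_from_abe_io}}} \text{NCABE}$ combined with $\text{SKE-CD}$ from \cref{thm:ske_cert_del_no_assumption}, plugged into \cref{thm:abe_cd_from_sk_cd_and_ncabe}.
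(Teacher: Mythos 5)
Your proof is correct and follows essentially the same composition chain the paper uses: derive NIZK and adaptively secure ABE from IO and OWFs, combine with IO itself via \cref{thm:ncabe_from_abe_io} to get an NCABE, and plug that together with the unconditional OT-CD SKE of \cref{thm:ske_cert_del_no_assumption} into \cref{thm:abe_cd_from_sk_cd_and_ncabe}. Your explicit note about matching $\ell_m$ to $\ell_k$ and the aside on the perfect-correctness hypothesis of \cref{thm:ncabe_from_abe_io} are reasonable bookkeeping observations that the paper leaves implicit, but they do not alter the route.
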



\section{PKE with Certified Deletion and Classical Communication}\label{sec:classical_com}

In this section, we define the notion of public key encryption with certified deletion with classical
communication, and construct it from the LWE assumption \revise{in the QROM}.
In \cref{sec:pk_cd_def_classical_com}, we present the definition of the public key encryption
with certified deletion with classical communication. In \cref{sec:cut_and_choose_adaptive_hardcore}, we introduce what we
call the {\it cut-and-choose adaptive hardcore property}, which is used in the security proof
of the PKE with certified deletion with classical communication.
In \cref{sec:PKE_cd_cc_construction}, we construct a PKE with certified deletion with classical communication, and show
its security.

\subsection{Definition of PKE with Certified Deletion and Classical Communication}\label{sec:pk_cd_def_classical_com}
We define PKE with certified deletion with classical communication.
Note that the encryption algorithm of a PKE with certified deletion with classical communication is interactive unlike PKE with certified deletion (with quantum communication) as defined in \cref{def:pk_cert_del}.
It is easy to see that the interaction is necessary if we only allow classical communication.
\begin{definition}[PKE with Certified Deletion with Classical Communication (Syntax)]\label{def:pk_cert_del_classical_com}
A public key encryption scheme with certified deletion with classical communication 
is a tuple of quantum algorithms $(\keygen,\Enc,\Dec,\Delete,\Vrfy)$ with plaintext space $\Ms$.
\begin{description}
    \item[$\keygen (1^\secp) \ra (\pk,\sk)$:] The key generation algorithm takes as input the security parameter $1^\secp$ and outputs a classical key pair $(\pk,\sk)$.
    \item[$\Enc\langle \sen(\pk,m),\rec\rangle \ra (\vk,\ct)$:] 
    This is an interactive process between a classical sender $\sen$ with input $\pk$ and a plaintext $m\in\cM$, and a quantum receiver $\rec$ without input.
    After exchanging classical messages, 
    $\sen$ outputs a classical verification key $\vk$ and $\rec$ outputs a quantum ciphertext $\ct$. 
    \item[$\Dec(\sk,\ct) \ra m^\prime \mbox{ or } \bot$:] The decryption algorithm takes as input the secret key $\sk$ and the ciphertext $\ct$, and outputs a plaintext $m^\prime$ or $\bot$.
    \item[$\Delete(\ct) \ra \cert$:] The deletion algorithm takes as input the ciphertext $\ct$ and outputs a classical certificate $\cert$.
    \item[$\Vrfy(\vk,\cert)\ra \top \mbox{ or }\bot$:] The verification algorithm takes the verification key $\vk$ and the certificate $\ct$, and outputs $\top$ or $\bot$.
\end{description}
\end{definition}

\begin{definition}[Correctness for PKE with Certified Deletion with Classical Communication]\label{def:pk_cd_correctness_classical_com}
There are two types of correctness. One is decryption correctness and the other is verification correctness.
\begin{description}
\item[Decryption correctness:] For any $\secp\in \N$, $m\in\Ms$, 
\begin{align}
\Pr\left[
\Dec(\sk,\ct)\ne m
\ \middle |
\begin{array}{ll}
(\pk,\sk)\lrun \keygen(1^\secp)\\
(\vk,\ct) \lrun \Enc\langle \sen(\pk,m),\rec\rangle
\end{array}
\right] 
\le\negl(\secp).
\end{align}

\item[Verification correctness:] For any $\secp\in \N$, $m\in\Ms$, 
\begin{align}
\Pr\left[
\Vrfy(\vk,\cert)=\bot
\ \middle |
\begin{array}{ll}
(\pk,\sk)\lrun \keygen(1^\secp)\\
(\vk,\ct) \lrun \Enc\langle \sen(\pk,m),\rec\rangle\\
\cert \lrun \Delete(\ct)
\end{array}
\right] 
\le\negl(\secp).
\end{align}

\end{description}
\end{definition}

\begin{definition}[Certified Deletion Security for PKE with Classical Communication]\label{def:pk_certified_del_classical_com}
Let $\Sigma=(\keygen, \Enc, \Dec, \Delete, \Vrfy)$ be a PKE with certified deletion scheme with classical communication.
We consider the following security experiment $\expb{\Sigma,\cA}{ccpk}{cert}{del}(\secp,b)$.

\begin{enumerate}
    \item The challenger computes $(\pk,\sk) \la \keygen(1^\secp)$ and sends $\pk$ to $\cA$.
    \item $\cA$ sends $(m_0,m_1)\in \Ms^2$ to the challenger.
    \item The challenger and $\cA$ jointly execute $(\vk_b,\ct_b) \la \Enc\langle\sen(\pk,m_b),\cA(\pk)\rangle$ where the challenger plays the role of the sender and $\cA$ plays the role of the receiver.
    \item At some point, $\cA$ sends $\cert$ to the challenger.
    \item The challenger computes $\Vrfy(\vk_b,\cert)$. If the output is $\bot$, the challenger
    sends $\bot$ to $\cA$. If the output is $\top$, the challenger sends $\sk$ to $\cA$.
    \item $\cA$ outputs its guess $b'\in \bit$.
\end{enumerate}
Let $\advc{\Sigma,\cA}{ccpk}{cert}{del}(\secp)$ be the advantage of the experiment above.
We say that the $\Sigma$ is IND-CPA-CD secure if for any QPT adversary $\cA$, it holds that
\begin{align}
\advc{\Sigma,\cA}{ccpk}{cert}{del}(\secp)\seteq \abs{\Pr[ \expb{\Sigma,\cA}{ccpk}{cert}{del}(\secp, 0)=1] - \Pr[ \expb{\Sigma,\cA}{ccpk}{cert}{del}(\secp, 1)=1] }\leq \negl(\secp).
\end{align}
\end{definition}

\if0
\begin{remark}
\cref{def:pk_certified_del_classical_com} intuitively means that once the valid certificate is issued, 
decrypting the ciphertext becomes impossible.
One might think that it would be also possible to define the inverse: once the chiphertext is decrypted, the valid certificate can no longer be issued.
This property is, however, impossible to achieve due to the decryption correctness (\cref{def:pk_cd_correctness_classical_com}). In fact, if the quantum decyption algorithm $\Dec$ on a quantum ciphertext $\ct$ succeeds with probability at least $1-\negl(\secp)$, then the gentle measurement lemma guarantees that $\ct$ is only negligibly disturbed, from which the valid certificate can be issued.
\mor{I have added this remark.}
\takashi{Can we move this remark to after~\cref{def:sk_certified_del} where certified deletion appears for the first time?}
\end{remark}
\fi

\subsection{Preparation}
\label{sec:cut_and_choose_adaptive_hardcore}
We prove that any injective invariant NTCF family satisfies a property which we call the \emph{cut-and-choose adaptive hardcore property}, which is used in the security proof of our PKE with certified deletion with classical communication. 
To prove the cut-and-choose adaptive hardcore property, we first prove a simple combinatorial lemma and its corollary.
\begin{lemma}\label{lem:subset_probability}
Let $n,k$ be a positive integers and $T\subseteq [2n]$ be a subset. 
Let $S\subseteq [2n]$ be a uniformly random subset conditioned on that $|S|=n$. 
If $k\leq |T| $, we have
\[
\Pr\left[S\cap T=\emptyset\right]\leq \left(\frac{1}{2}\right)^{k}.
\]
If $|T|\leq  k\leq n$, we have 
\[
\Pr\left[S\cap 
 T=\emptyset\right]> \left(\frac{n-k}{2n-k}\right)^{k}.
\]
\end{lemma}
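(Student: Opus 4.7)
The plan is to reduce both inequalities to the case $|T| = k$ by a monotonicity argument, then evaluate the resulting probability as a product of simple fractions and bound each factor.

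First, I would observe that $\Pr[S \cap T = \emptyset]$ is monotonically non-increasing in $T$ (adding elements to $T$ only makes disjointness harder). Thus, for the first inequality (where $k \leq |T|$), pick any subset $T_0 \subseteq T$ with $|T_0| = k$; then $\Pr[S \cap T = \emptyset] \leq \Pr[S \cap T_0 = \emptyset]$, so it suffices to bound the latter. Symmetrically, for the second inequality (where $|T| \leq k$), extend $T$ to some $T_1 \supseteq T$ with $|T_1| = k$ (possible since $k \leq n \leq 2n$); then $\Pr[S \cap T = \emptyset] \geq \Pr[S \cap T_1 = \emptyset]$, and it suffices to lower-bound the latter.

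Next, for a fixed set $U \subseteq [2n]$ with $|U| = k$, a direct counting argument gives
\[
\Pr[S \cap U = \emptyset] \;=\; \frac{\binom{2n-k}{n}}{\binom{2n}{n}} \;=\; \prod_{i=0}^{k-1} \frac{n-i}{2n-i},
\]
where the last equality follows by writing out the binomial coefficients in terms of factorials and telescoping. From here, the two bounds follow from elementary inequalities on the individual factors. For the upper bound, I observe that $\frac{n-i}{2n-i} \leq \frac{1}{2}$ for all $i \geq 0$ (equivalent to $2(n-i) \leq 2n-i$, i.e., $i \geq 0$), so the product is at most $(1/2)^k$. For the lower bound, I note that the function $f(x) = \frac{n-x}{2n-x}$ has derivative $f'(x) = \frac{-n}{(2n-x)^2} < 0$ on $[0, k]$, hence $f$ is strictly decreasing there, so $\frac{n-i}{2n-i} > \frac{n-k}{2n-k}$ for all $i < k$, and the product strictly exceeds $\left(\frac{n-k}{2n-k}\right)^k$.

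The whole argument is routine; the only mild subtlety is making sure the reductions in the first step are applied in the correct direction (upper bound via a subset of $T$, lower bound via a superset), and checking that the monotonicity holds, which is immediate from the fact that $\{S : S \cap T = \emptyset\}$ shrinks as $T$ grows. No deep technical obstacle is expected.
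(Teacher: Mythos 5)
Your proof is correct and takes essentially the same approach as the paper: both reduce to the exact computation $\Pr[S\cap U=\emptyset]=\binom{2n-|U|}{n}/\binom{2n}{n}=\prod_{i=0}^{|U|-1}\frac{n-i}{2n-i}$ and then bound each factor by $\frac{1}{2}$ from above and (strictly) by $\frac{n-k}{2n-k}$ from below. The only cosmetic difference is the order of steps: you first normalize to $|T|=k$ via monotonicity of the probability in $T$, whereas the paper first evaluates the probability for $|T|=k'$ exactly and then compares the resulting bounds $(1/2)^{k'}\le(1/2)^k$ and $\left(\frac{n-k}{2n-k}\right)^k\le\left(\frac{n-k'}{2n-k'}\right)^{k'}$ — these are interchangeable.
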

\begin{proof}
When $|T|\geq n+1$, we have $\Pr\left[S\cap T=\emptyset\right]=0$ by the pigeonhole principle and thus the first inequality trivially holds. Note that we need not  consider such a case for the second inequality since we assume $|T|\leq n$ for the second inequality. 
In the following, we assume $|T|\leq n$. 
Let $k'\seteq |T|$.
Then we have 
\begin{align}
\Pr\left[S\cap T=\emptyset\right]=\frac{\binom{2n-k'}{n}}{\binom{2n}{n}}
=\frac{(2n-k')!}{(n-k')!n!}\cdot\frac{n!n!}{(2n)!}=\prod_{i=0}^{k'-1}\frac{n-i}{2n-i}.
\end{align}
For all $0\leq i\leq k'-1$, we have 
\begin{align}
   \frac{n-k'}{2n-k'} <\frac{n-i}{2n-i}\leq \frac{1}{2}.
\end{align}
Therefore we have 
\begin{align}
\left(\frac{n-k'}{2n-k'}\right)^{k'}<\Pr\left[S\cap T=\emptyset\right]\leq \left(\frac{1}{2}\right)^{k'}.
\end{align}
Then, \cref{lem:subset_probability} follows from
$\left(\frac{1}{2}\right)^{k'}\leq \left(\frac{1}{2}\right)^{k}$ for $k\leq k'$ and 
$\left(\frac{n-k}{2n-k}\right)^{k}\leq \left(\frac{n-k'}{2n-k'}\right)^{k'}$ for $k'\leq k \leq n$.
\end{proof}
\begin{corollary}\label{cor:subset_probability_two}
Let $n,k$ be a positive integers and $T\subseteq [4n]$ be a subset.
Let $S\subseteq [4n]$ be a uniformly random subset conditioned on that $|S|=2n$  
and $U\subseteq S$ be a uniformly random subset conditioned on that $|U|=n$. 
If $k\leq |T| $, we have
\[
\Pr\left[S\cap T=\emptyset\right]\leq \left(\frac{1}{2}\right)^{k}.
\]
For any $S^*\subseteq [4n]$ such that $|S^*|=2n$, 
if $|T|<k<n$, we have
\[
\Pr\left[U\cap T=\emptyset|S=S^*\right]> \left(\frac{n-k}{2n-k}\right)^{k}.
\]
\end{corollary}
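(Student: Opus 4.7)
The plan is to derive both inequalities of \cref{cor:subset_probability_two} as direct applications of \cref{lem:subset_probability}, with suitable rescaling of the ambient set for the first part and conditioning on $S$ for the second part.

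For the first inequality, I would invoke \cref{lem:subset_probability} with its parameter ``$n$'' replaced by $2n$. Then the ambient set is $[4n]$, and $S$ is a uniformly random subset of size $2n$, exactly matching the setting of the corollary. The assumption $k \leq |T|$ is shared between the two statements, so the lemma's first bound gives $\Pr[S \cap T = \emptyset] \leq (1/2)^{k}$ immediately.

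For the second inequality, the key observation is that once we condition on $S = S^{*}$, the set $U$ is a uniformly random subset of $S^{*}$ of size $n$, and $U \cap T = U \cap (T \cap S^{*})$ since $U \subseteq S^{*}$. I would then apply \cref{lem:subset_probability} in its original form, with $[2n]$ playing the role of $S^{*}$, the random $n$-subset being $U$, and the subset ``$T$'' replaced by $T \cap S^{*}$. To invoke the second bound of the lemma, I need $|T \cap S^{*}| \leq k \leq n$; the upper bound $k \leq n$ follows from the assumption $k < n$, and the lower bound holds since $|T \cap S^{*}| \leq |T| < k$. Therefore the lemma yields
\[
\Pr\left[U \cap (T \cap S^{*}) = \emptyset \mid S = S^{*}\right] > \left(\frac{n-k}{2n-k}\right)^{k},
\]
which is the claimed bound.

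There is no real obstacle in this proof; it is essentially a change of variables plus conditioning. The only thing to be careful about is matching the strict versus non-strict inequalities in the hypotheses of the lemma to those of the corollary, which works out because $|T| < k$ implies $|T \cap S^{*}| \leq k$ and $k < n$ implies $k \leq n$.
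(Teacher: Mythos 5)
Your proof is correct and follows essentially the same route as the paper's: the first inequality is the lemma with $n$ rescaled to $2n$, and the second is obtained by conditioning on $S=S^*$, replacing $T$ with $T\cap S^*$ (since $U\subseteq S^*$), and applying the lemma's second bound after identifying $S^*$ with $[2n]$. The hypothesis-matching ($|T\cap S^*|\leq |T|<k$ and $k<n\Rightarrow k\leq n$) is exactly as in the paper.
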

\begin{proof}
The former part immediately follows from \cref{lem:subset_probability} by considering $2n$ as $n$ in  \cref{lem:subset_probability}.
For the latter part, suppose that  $S$ is fixed to be $S^*$. 
We have $U\cap T=U\cap (S^*\cap T)$ and $|S^*\cap T|\leq |T|<k<n$. 
By considering a one-to-one map from $S^*$ to $[2n]$, we can apply the latter statement of \cref{lem:subset_probability},  
where we think of the images of $U$ and $S^*\cap T$ as $S$ and $T$ in \cref{lem:subset_probability}, respectively, 
to obtain \cref{cor:subset_probability_two}. 
\end{proof}

Then we define the cut-and-choose adaptive hardcore property and prove that any injective invariant NTCF family satisfies it.
\begin{lemma}[Cut-and-Choose  Adaptive Hardcore Property]\label{lem:cut_and_choose_adaptive_hardcore}
Let $\mathcal{F}$ be an injective invariant NTCF family and $\mathcal{G}$ be the corresponding trapdoor injective family. 
Then $\mathcal{F}$ and $\mathcal{G}$ satisfy what we call the \emph{cut-and-choose  adaptive hardcore property} defined below.
For a QPT adversary $\cA$ and a positive integer $n$, 
we consider the following  experiment $\expb{(\mathcal{F},\mathcal{G}),\cA}{cut}{and}{choose}(\secp,n)$.

\begin{enumerate}
    \item The challenger chooses a uniform subset $S\subseteq [4n]$ such that $|S|=2n$.\footnote{We can also take $S\subseteq [2n]$ such that $|S|=n$, but we do as above just for convenience  in the proof.} 
    \item The challenger generates 
    $(\fk_i,\td_i)\lrun \GenG(1^\secp)$ for all $i\in S$ and 
     $(\fk_i,\td_i)\lrun \GenF(1^\secp)$ for all $i\in \overline{S}$ and
    sends $\{\fk_i\}_{i\in[4n]}$ to $\cA$. 
    \item $\cA$ sends 
    $\{y_i,d_i,e_i\}_{i\in[4n]}$ to the challenger. 
    \item The challenger
    computes $x_{i,\beta}\lrun \InvF(\td_i,\beta,y_i)$ for all $(i,\beta)\in \overline{S}\times \bit$ and
    checks if 
     $d_i\in G_{\fk_i,0,x_{i,0}} \cap G_{\fk_i,1,x_{i,1}}$ and
    $e_i=d_i\cdot (J(x_{i,0})\oplus J(x_{i,1}))$ hold for all $i\in \overline{S}$.
    If they do not hold for some $i\in \overline{S}$, the challenger immediately aborts and the experiment returns $0$. 
    \item  \label{step:reveal_S} 
    The challenger sends $S$ to $\cA$.
    \item $\cA$ sends $\{b_i,x_i\}_{i\in S}$ to the challenger. 
    \item 
    The challenger checks if $\Chk_{\cG}(\fk_i,b_i,x_i,y_i)=1$ holds for all $i\in S$.
    If this holds for all $i\in S$, the experiment returns $1$. Otherwise, it returns $0$.
\end{enumerate}
Then for any $n$ such that
$n\leq \poly(\secp)$ and 
$n=\omega(\log \secp)$, it holds that 
\begin{align}
\advc{(\mathcal{F},\mathcal{G}),\cA}{cut}{and}{choose}(\secp,n)\seteq \Pr[ \expb{(\mathcal{F},\mathcal{G}),\cA}{cut}{and}{choose}(\secp,n)=1] \leq \negl(\secp).
\end{align}
\end{lemma}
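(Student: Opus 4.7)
The plan is to prove the cut-and-choose adaptive hardcore property by reducing it to a combination of the injective invariance (\cref{def:injective_invariance}) and the amplified adaptive hardcore property (\cref{lem:amplified_adaptive_hardcore}), using the combinatorial estimates provided by \cref{cor:subset_probability_two}.

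First, I would apply a hybrid argument of $2n$ steps that sequentially replaces each $\cG$-type key at a position $i \in S$ with a $\cF$-type key. By injective invariance, each step changes $\cA$'s winning probability by at most a negligible amount, and since the algorithms $\Chk_\cG$ and $\Chk_\cF$ are identified by the definition of injective invariance, the verification in step~7 of the experiment remains well-defined throughout. In the resulting all-$\cF$ hybrid, the keys $\{\fk_i\}_{i \in [4n]}$ are generated independently of $S$, so we can equivalently sample $S$ only after receiving $\cA$'s first-round response. Let $T_{\mathsf{had}} \subseteq [4n]$ denote the set of positions where $(d_i, e_i)$ is a valid Hadamard sample (which is determined by the first-round response together with the trapdoors $\td_i$), and set $t := |T_{\mathsf{had}}|$ and $m := 4n - t$.

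I would then perform a three-case analysis of the all-$\cF$ hybrid. If $t \leq 3n$ (equivalently $m \geq n$), the first check requires $\bar{S} \cap \bar{T}_{\mathsf{had}} = \emptyset$, which by the first part of \cref{cor:subset_probability_two} holds with probability at most $(1/2)^n$; this is negligible since $n = \omega(\log \secp)$. If $t > 3n$ but $m > n/2$, the same reasoning gives the upper bound $(1/2)^{n/2}$, which is also negligible. The remaining subcase is $t > 3n$ and $m \leq n/2$, for which I would construct a reduction $\cB$ to the $n$-fold amplified adaptive hardcore game: $\cB$ samples $S \subseteq [4n]$ of size $2n$ and $U \subseteq S$ of size $n$ uniformly, embeds its $n$ challenge keys at the positions in $U$, generates the remaining $3n$ keys via $\GenF$ (keeping their trapdoors), and simulates the cut-and-choose experiment for $\cA$. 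A short calculation shows that the induced joint distribution of $(\{\fk_i\}, S)$ matches that in the real experiment, so $\cA$'s behavior is unaffected. Then $\cB$ wins the amplified hardcore game whenever $\cA$ wins the cut-and-choose game and $U \cap (\bar{T}_{\mathsf{had}} \cap S) = \emptyset$, and by the second part of \cref{cor:subset_probability_two} (with $k := m + 1$, valid since $m \leq n/2 < n - 1$) the latter probability is at least $\bigl((n - m - 1)/(2n - m - 1)\bigr)^{m+1}$, conditioned on any realization of $S$.

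The main difficulty I anticipate is balancing the reduction loss in the third subcase against the negligibility rate of \cref{lem:amplified_adaptive_hardcore}. Using the standard $(1/2)^n$-type decay for the amplified hardcore bound, the resulting upper bound on $\cA$'s winning probability in this subcase is at most $(1/2)^n \cdot \bigl((2-\alpha)/(1-\alpha)\bigr)^{\alpha n}$ with $\alpha := m/n \in [0, 1/2]$; this expression is maximized at $\alpha = 1/2$, where it equals $(\sqrt{3}/2)^n$, and is therefore negligible in $\secp$ for $n = \omega(\log \secp)$. Summing the contributions from the three subcases completes the proof.
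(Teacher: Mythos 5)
Your overall architecture matches the paper's: you first move to an all-$\cF$ hybrid via injective invariance (this step is correct — none of the trapdoors $\td_i$ for $i\in S$ are used by the challenger), you observe that in that hybrid $S$ can be sampled after the adversary's first message, you then split on the size of the ``bad'' set $\overline{T}_{\mathsf{had}}$ (the paper's $T$), and you handle the small-$|T|$ case by a reduction to \cref{lem:amplified_adaptive_hardcore} in which the $n$ challenge keys are embedded at a random $U\subseteq S$ of size $n$. The construction of the reduction $\cB$, the observation that $\cB$ wins whenever $\cA$ wins and $U\cap T=\emptyset$, and the appeal to \cref{cor:subset_probability_two} are all the right ingredients.

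The gap is in your choice of threshold and the quantitative claim you need to close the third subcase. You set the boundary at $m\leq n/2$, where $m=|\overline{T}_{\mathsf{had}}|$. The loss in the reduction to the amplified game is, by \cref{cor:subset_probability_two}, roughly $\bigl(\tfrac{n-m}{2n-m}\bigr)^{m}$, which at $m=n/2$ is about $3^{-n/2}$, i.e.\ exponentially small in $n$. To conclude anything after dividing by this loss you then invoke a ``standard $(1/2)^n$-type decay'' for the amplified adaptive hardcore advantage — but \cref{lem:amplified_adaptive_hardcore} only asserts that the advantage is $\negl(\secp)$, with no explicit exponential rate, and no such rate is established elsewhere in the paper. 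Without a concrete $2^{-\Omega(n)}$ bound, your inequality $\Pr[\text{win}\wedge m\leq n/2]\leq (1/2)^n\cdot\bigl(\tfrac{2-\alpha}{1-\alpha}\bigr)^{\alpha n}$ has no justified left-hand factor, and the whole third subcase collapses: a merely-$\negl$ amplified-hardcore advantage divided by an exponentially small reduction loss gives a trivial bound. Reformulating your argument as a proof by contradiction does not help either, since with the fixed threshold $n/2$ the reduction still succeeds with probability only $\approx \tfrac{1}{p(\secp)}\cdot 3^{-n/2}$, which is negligible and produces no contradiction.

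The paper avoids this by making the threshold \emph{adaptive to the assumed advantage}. Supposing for contradiction that $\Pr[\text{win}]\geq 1/p(\secp)$ for infinitely many $\secp$, it fixes $k$ to be the least integer with $(1/2)^k\leq 1/(2p(\secp))$, so $k=\Theta(\log p(\secp))=O(\log\secp)$. The event $\Bad$ that $|T|\geq k$ then contributes at most $1/(2p(\secp))$ to the winning probability (by the first part of \cref{cor:subset_probability_two}), so $\Pr[\text{win}\wedge\overline{\Bad}]\geq 1/(2p(\secp))$. On $\overline{\Bad}$, the reduction loss is at least $\bigl(\tfrac{n-k}{2n-k}\bigr)^k\geq (1/3)^k=1/\poly(\secp)$ for sufficiently large $\secp$ (since $k=O(\log\secp)$ and $n=\omega(\log\secp)$ give $k<n/2$), so $\cB$ wins the amplified game with noticeable probability — contradicting \cref{lem:amplified_adaptive_hardcore} without ever needing a quantitative decay rate. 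To repair your proof you would need to replace the fixed $n/2$ cutoff with this advantage-dependent logarithmic cutoff $k$.
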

\begin{proof}
We consider a modified experiment $\wtexpb{\mathcal{F},\cA}{cut}{and}{choose}(\secp,n)$ that works similarly to $\expb{(\mathcal{F},\mathcal{G}),\cA}{cut}{and}{choose}(\secp,n)$ except that the challenger generates $(\fk_i,\td_i)$ by $\GenF$ for all $i\in[4n]$. 
Since the challenger in these experiments does not use $\td_i$ for $i\in S$ at all, the injective invariance implies 
\begin{align}
 \abs
 {\Pr[ \wtexpb{\mathcal{F},\cA}{cut}{and}{choose}(\secp,n)=1]
 -\Pr[ \expb{(\mathcal{F},\mathcal{G}),\cA}{cut}{and}{choose}(\secp,n)=1]} \leq \negl(\secp)
\end{align}
by a straightforward hybrid argument.
Therefore, it suffices to prove 
\begin{align}
\Pr[ \wtexpb{\mathcal{F},\cA}{cut}{and}{choose}(\secp,n)=1] \leq \negl(\secp).
\end{align}
In the following, we reduce this to the amplified adaptive hardcore property (\cref{lem:amplified_adaptive_hardcore}).
For the sake of contradiction, we assume that $\Pr[ \wtexpb{\mathcal{F},\cA}{cut}{and}{choose}(\secp,n)=1]$ is non-negligible. Then there exists a polynomial $p$ such that 
\begin{align}
\Pr[ \wtexpb{\mathcal{F},\cA}{cut}{and}{choose}(\secp,n)=1] \geq \frac{1}{p(\secp)}
\end{align}
for infinitely many $\secp\in \mathbb{N}$.
Let $k$ be the minimal integer such that 
\[
\left(\frac{1}{2}\right)^{k}\leq \frac{1}{2p(\secp)}.
\]
In $ \wtexpb{\mathcal{F},\cA}{cut}{and}{choose}(\secp,n)$, let $T\subseteq [4n]$ be the subset consisting of $i\in [4n]$  such that 
$d_i\notin G_{\fk_i,0,x_{i,0}} \cap G_{\fk_i,1,x_{i,1}}$ or
$e_i\neq d_i\cdot (J(x_{i,0})\oplus J(x_{i,1}))$  where 
$x_{i,\beta}\lrun \InvF(\td_i,\beta,y_i)$ for $\beta \in \bit$.\footnote{Note that $x_{i,\beta}$ was defined only for
$(i,\beta)\in \overline{S}\times \bit$ in $\wtexpb{\mathcal{F},\cA}{cut}{and}{choose}(\secp,n)$, but we naturally generalize it to all $(i,\beta)\in [4n]\times \bit$.
We remark that this is well-defined since the challenger  uses $\GenF$ on all positions in $\wtexpb{\mathcal{F},\cA}{cut}{and}{choose}(\secp,n)$  unlike in $\expb{(\mathcal{F},\mathcal{G}),\cA}{cut}{and}{choose}(\secp,n)$.} Let $\Bad$ be the event such that $|T|\geq k$. 
When $\wtexpb{\mathcal{F},\cA}{cut}{and}{choose}(\secp,n)$ returns $1$, we must have 
$\overline{S}\cap T= \emptyset$.
Therefore, by \cref{cor:subset_probability_two}, we have 
\begin{align}
&\Pr[\wtexpb{\mathcal{F},\cA}{cut}{and}{choose}(\secp,n)=1 \land \Bad]\\
&\leq  \Pr[\overline{S}\cap T= \emptyset \land \Bad]\\
&\leq \Pr[\overline{S}\cap T= \emptyset | \Bad]\\
&\leq
\left(\frac{1}{2}\right)^k\\
&\leq \frac{1}{2p(\secp)}. 
\end{align}
We remark that we can apply \cref{cor:subset_probability_two} to get the third inequality above since 
$|T|\geq k$ when $\Bad$ occurs and
no information of $S$ is given to $\A$ before Step \ref{step:reveal_S} in $\wtexpb{\mathcal{F},\cA}{cut}{and}{choose}(\secp,n)$ and thus $\overline{S}$ is independent of  $\Bad$.
On the other hand, we have 
\begin{align}
\Pr[ \wtexpb{\mathcal{F},\cA}{cut}{and}{choose}(\secp,n)=1] 
=\Pr[ \wtexpb{\mathcal{F},\cA}{cut}{and}{choose}(\secp,n)=1 \land \Bad] +\Pr[ \wtexpb{\mathcal{F},\cA}{cut}{and}{choose}(\secp,n)=1 \land \overline{\Bad}] 
\geq \frac{1}{p(\secp)}
\end{align}
for infinitely many $\secp$.
Therefore we have 
\begin{align}
\Pr[\wtexpb{\mathcal{F},\cA}{cut}{and}{choose}(\secp,n)=1 \land \overline{\Bad}]\geq  \frac{1}{2p(\secp)}
\end{align}
for infinitely many $\secp$. 
This naturally gives an adversary $\cB$ that breaks the amplified adaptive hardcore property that is described below.
\begin{description}
\item [$\cB(\{\fk^*_j\}_{j\in [n]})$:] Given a problem instance $\{\fk^*_j\}_{j\in [n]}$, it works as follows.
\begin{enumerate}
    \item Choose a uniform subset $S\subseteq [4n]$ such that $|S|=2n$.
    \item Choose a uniform subset $U\subseteq S$ such that $|U|=n$. 
    Let $i_1,...,i_n$ be the elements of $U$. 
    \item Set $\fk_{i_j}:=\fk^*_{j}$ for all $j\in [n]$ and generate $(\fk_i,\td_i)\lrun \GenF(1^\secp)$ for all $i\in \overline{U}$.
    \item Send $\{\fk_i\}_{i\in [4n]}$ to $\A$ and receives the response $\{y_i,d_i,e_i\}_{i\in[4n]}$ from $\A$. 
    \item Send $S$ to $\A$ and receives the response $\{b_i,x_i\}_{i\in S}$ from $\A$. 
    \item Output $\{b_i,x_i,y_i,d_i,e_i\}_{i\in U}$. 
\end{enumerate}
\end{description}
We can see that $\cB$ perfectly simulates $\wtexpb{\mathcal{F},\cA}{cut}{and}{choose}(\secp,n)$ for $\cA$, and $\cB$ wins the amplified adaptive hardcore game whenever we have $\wtexpb{\mathcal{F},\cA}{cut}{and}{choose}(\secp,n)=1$ and $U\cap T=\emptyset$
in the simulated experiment. 

We have  $k=O(\log \secp)$ by the definition of $k$ and $n=\omega(\log \secp)$ by the assumption. 
Therefore we have $k<\frac{n}{2}$ for sufficiently large $\secp$. 
Then, for sufficiently large $\secp$,  we have 
\begin{align}
    &\Pr[U\cap T=\emptyset|\wtexpb{\mathcal{F},\cA}{cut}{and}{choose}(\secp,n)=1 \land\overline{\Bad}]\\
    &=\frac{\Pr[U\cap T=\emptyset \land \wtexpb{\mathcal{F},\cA}{cut}{and}{choose}(\secp,n)=1 \land\overline{\Bad}]}{\Pr[\wtexpb{\mathcal{F},\cA}{cut}{and}{choose}(\secp,n)=1 \land\overline{\Bad}]}\\
    &=\sum_{S^*\subseteq [4n]\text{~s.t.~}|S^*|=2n}\sum_{T^*\subseteq [4n]\text{~s.t.~}|T^*|<k}\frac{\Pr[U\cap T^*=\emptyset \land \wtexpb{\mathcal{F},\cA}{cut}{and}{choose}(\secp,n)=1 \land\overline{\Bad}\land S=S^* \land T=T^*]}{\Pr[\wtexpb{\mathcal{F},\cA}{cut}{and}{choose}(\secp,n)=1 \land\overline{\Bad}]}\\
    &=\sum_{S^*\subseteq [4n]\text{~s.t.~}|S^*|=2n}\sum_{T^*\subseteq [4n]\text{~s.t.~}|T^*|<k}\frac{\Pr[U\cap T^*=\emptyset]\cdot\Pr[\wtexpb{\mathcal{F},\cA}{cut}{and}{choose}(\secp,n)=1 \land\overline{\Bad}\land S=S^* \land T=T^*]}{\Pr[\wtexpb{\mathcal{F},\cA}{cut}{and}{choose}(\secp,n)=1 \land\overline{\Bad}]}\\
    &> \sum_{S^*\subseteq [4n]\text{~s.t.~}|S^*|=2n}\sum_{T^*\subseteq [4n]\text{~s.t.~}|T^*|<k}\left(\frac{n-k}{2n-k}\right)^{k}\cdot \frac{\Pr[\wtexpb{\mathcal{F},\cA}{cut}{and}{choose}(\secp,n)=1 \land\overline{\Bad}\land S=S^* \land T=T^*]}{\Pr[\wtexpb{\mathcal{F},\cA}{cut}{and}{choose}(\secp,n)=1 \land\overline{\Bad}]}\\
    &= \left(\frac{n-k}{2n-k}\right)^{k}\\
    &\geq \frac{1}{\poly(\secp)}
\end{align}
where the first equality follows from the definition of conditional probability, 
the second and fourth equalities follow from the fact that $|T|<k$ when $\overline{\Bad}$ occurs, 
the third equality follows from the fact that $U$ is independent of the events 
$\wtexpb{\mathcal{F},\cA}{cut}{and}{choose}(\secp,n)=1$, $\overline{\Bad}$, and $T=T^*$ when we fix $S$,  
the first inequality follows  from \cref{cor:subset_probability_two}
and  $|T|<k<n/2<n$ for sufficiently large $\secp$,
and the second inequality follows from   $k<\frac{n}{2}$ for sufficiently large $\secp$, in which case we have $\frac{n-k}{2n-k}> \frac{1}{3}$, and $k=O(\log \secp)$.

Therefore we have 
\begin{align}
    \Pr[\cB\text{~wins}]
    &\geq \Pr[\wtexpb{\mathcal{F},\cA}{cut}{and}{choose}(\secp,n)=1 \land U\cap T=\emptyset]\\
    &\geq \Pr[\wtexpb{\mathcal{F},\cA}{cut}{and}{choose}(\secp,n)=1 \land U\cap T=\emptyset \land \overline{\Bad}]\\
    &=\Pr[\wtexpb{\mathcal{F},\cA}{cut}{and}{choose}(\secp,n)=1 \land \overline{\Bad}]\cdot \Pr[
    U\cap T=\emptyset
    |\wtexpb{\mathcal{F},\cA}{cut}{and}{choose}(\secp,n)=1 \land \overline{\Bad}]\\
    &\geq \frac{1}{\poly(\secp)}
\end{align}
for infinitely many $\secp$.
This contradicts the amplified adaptive hardcore property (\cref{lem:amplified_adaptive_hardcore}).
Therefore, our assumption that $\Pr[ \wtexpb{\mathcal{F},\cA}{cut}{and}{choose}(\secp,n)=1]$ is non-negligible is false, which completes the proof of \cref{lem:cut_and_choose_adaptive_hardcore}.
\end{proof}

\subsection{Construction}
\label{sec:PKE_cd_cc_construction}
We construct a PKE scheme with certified deletion with classical communication $\Sigma_{\cccd} =(\keygen,\Enc,\Dec,\Delete,\Vrfy)$ with plaintext space $\Ms=\bit^\ell$ from 
an NTCF family $\mathcal{F}$ with the corresponding trapdoor injective family $\mathcal{G}$ for which we use notations given in~\cref{sec:NTCF},
a public key NCE scheme $\Sigma_{\nce}=\NCE.(\keygen,\Enc,\Dec,\Fake,\Reveal)$ with plaintext space $\{S\subseteq [4n]: |S|=2n\}$  
where $n$ is a positive integer such that $n\leq \poly(\secp)$ and $n=\omega(\log \secp)$ and
we just write $S$ to mean the description of the set $S$ by abuse of notation,  
a OW-CPA secure PKE scheme $\Sigma_{\ow}=\OW.(\keygen,\Enc,\Dec)$ with plaintext space $\bit^\secp$,
and a hash function $H$ 
from $\bit^{\secp} \times (\bit\times \mathcal{X})^{2n}$ to $\bit^\ell$
modeled as a quantumly-accessible random oracle.
\begin{description}
\item[$\keygen(1^\secp)$:] $ $
\begin{itemize}
\item Generate $(\nce.\pk,\nce.\sk,\nce.\aux)\lrun \NCE.\keygen(1^\secp)$ 
and $(\ow.\pk,\ow.\sk)\lrun \OW.\keygen(1^\secp)$ 
and output $(\pk,\sk) \seteq ((\nce.\pk,\ow.\pk),(\nce.\sk,\ow.\sk))$.
\end{itemize}
\item[$\Enc\langle\sen(\pk,m),\rec \rangle$:]  This is an interactive protocol between a sender $\sen$ with input $(\pk,m)$ and a receiver $\rec$ without input that works as follows. 
\begin{itemize}
\item $\sen$ parses $\pk=(\nce.\pk,\ow.\pk)$. 
\item $\sen$ chooses a uniform subset $S\subseteq [4n]$ such that $|S|=2n$,
    generates 
\begin{align}
  (\fk_i,\td_i)\lrun
  \begin{cases}
   \GenG(1^\secp)~~~&i \in S\\
   \GenF(1^\secp)~~~~&i \in \overline{S}
  \end{cases}
\end{align}    
for $i\in [4n]$, 
     and
    sends $\{\fk_i\}_{i\in[4n]}$ to $\rec$.
    \item 
    For $i\in [4n]$, 
    $\rec$ generates a quantum state
\begin{align}
\ket{\psi'_{i}} =
\begin{cases}
 \frac{1}{\sqrt{|\mathcal{X}|}}\sum_{x\in\mathcal{X},
y\in\mathcal{Y},b\in\bit}\sqrt{(g_{\fk_i,b}(x))(y)}|b,x\rangle|y\rangle ~~~&(i\in S)\\
\frac{1}{\sqrt{\abs{\cX}}}\sum_{x\in \cX, y\in\cY, b\in\zo{}}\sqrt{(f'_{\fk_i,b}(x))(y)}\ket{b,x}\ket{y} & (i \in \overline{S})
\end{cases}
\end{align}
by using $\Samp$, measure the last register to obtain $y_i\in\cY$, and let $\ket{\phi'_i}$ be the post-measurement state where the measured register is discarded.
Note that this can be done without knowing $S$ since $\Samp_{\mathcal{F}}=\Samp_{\mathcal{G}}$, which is just denoted by $\Samp$, as required in \cref{def:injective_invariance}.
By \cref{def:NTCF,def:TIF}, we can see that for all $i\in [4n]$, $\ket{\phi'_i}$ has a negligible trace distance from the following state:
\begin{align}
\ket{\phi_{i}} =
\begin{cases}
\ket{b_i}\ket{x_{i}} ~~~&(i\in S)\\
\frac{1}{\sqrt{2}}\left(\ket{0}\ket{x_{i,0}}+\ket{1}\ket{x_{i,1}}\right)  & (i \in \overline{S})
\end{cases}
\end{align}
where $(x_{i},b_i)\lrun \InvG(\td_i,y_i)$ for $i\in S$ and $x_{i,\beta}\lrun \InvF(\td_i,\beta,y_i)$ for $(i,\beta)\in \overline{S}\times \bit$.\footnote{Indeed, $\ket{\psi'_i}=\ket{\psi_i}$ for $i\in S$.}
    $\rec$ sends 
    $\{y_i\}_{i\in[4n]}$ to $\sen$ and keeps $\{\ket{\phi'_i}\}_{i\in[4n]}$.
\item $\sen$ chooses $K\lrun \bit^\secp$ and computes 
$(b_i,x_i)\lrun \InvG(\td_i,y_i)$ for all $i\in S$.
If $\Chk_{\cG}(\fk_i,b_i,x_i,y_i)=0$ for some $i\in S$, $\sen$ returns $\bot$ to $\rec$. 
Otherwise, let $i_1,...,i_{2n}$ be the elements of $S$ in the ascending order. 
$\sen$ 
sets $Z\seteq (K,(b_{i_1}, x_{i_1} ),(b_{i_2},x_{i_2}) ,...,(b_{i_{2n}},x_{i_{2n}}))$, 
computes 
\begin{align}
&\nce.\ct\lrun \NCE.\Enc(\nce.\pk,S),\\
&\ow.\ct\lrun \OW.\Enc(\ow.\pk,K),\\
&\ctmsg\seteq m\xor H(Z),
\end{align}
and sends $(\nce.\ct,\ow.\ct,\ctmsg)$ to $\rec$.
\item $\sen$ outputs $\vk\seteq \{\td_i,y_i\}_{i\in \overline{S}}$ 
and $\rec$ outputs $\ct\seteq (\{\ket{\phi'_i}\}_{i\in[4n]},\nce.\ct,\ow.\ct,\ctmsg)$.
\end{itemize}
\item[$\Dec(\sk,\ct)$:] $ $
\begin{itemize}
\item Parse $\sk = (\nce.\sk,\ow.\sk)$ and $\ct = (\{\ket{\phi'_i}\}_{i\in[4n]},\nce.\ct,\ow.\ct,\ctmsg)$.
\item Compute $S^\prime \lrun \NCE.\Dec(\nce.\sk,\nce.\ct)$.
\item Compute $K^\prime \lrun \OW.\Dec(\ow.\sk,\ow.\ct)$.
\item For all $i\in S'$, measure $\ket{\phi'_i}$ in the computational basis and let $(b^\prime_i,x^\prime_i)$ be the outcome. 
\item Compute and output $m^\prime \seteq \ctmsg \xor H(K^\prime,(b_{i_1}^\prime, x_{i_1}^\prime),(b_{i_2}^\prime, x_{i_2}^\prime),...,(b_{i_{2n}}^\prime, x_{i_{2n}}^\prime))$
where $i_1,...,i_{2n}$ are the elements of $S'$ in the ascending order.\footnote{If $S'=\bot$ or $K'=\bot$, output $\bot$.} 
\end{itemize}
\item[$\Delete(\ct)$:] $ $
\begin{itemize}
\item Parse $\ct = (\{\ket{\phi'_i}\}_{i\in[4n]},\nce.\ct,\ow.\ct,\ctmsg)$.
\item For all $i\in [4n]$, evaluate the function $J$ on the second register of $\ket{\phi'_i}$. That is, apply an isometry that maps $\ket{b,x}$ to $\ket{b,J(x)}$ to $\ket{\phi'_i}$. (Note that this can be done efficiently since $J$ is injective and efficiently invertible.) 
Let $\ket{\phi''_i}$ be the resulting state.
\item For all $i\in [4n]$, measure $\ket{\phi''_i}$ in the Hadamard basis and let $(e_i,d_i)$ be the outcome.
\item Output $\cert \seteq \{(e_i,d_i)\}_{i\in [4n]}$.
\end{itemize}
\item[$\Vrfy(\vk,\cert)$:] $ $
\begin{itemize}
\item Parse $\vk = \{\td_i,y_i\}_{i\in \overline{S}}$ 
and $\cert =\{(e_i,d_i)\}_{i\in [4n]}$.
   \item 
   Compute $x_{i,\beta}\lrun \InvF(\td_i,\beta,y_i)$ for all $(i,\beta)\in \overline{S}\times \bit$.
   \item Output $\top$ if 
     $d_i\in G_{\fk_i,0,x_{i,0}} \cap G_{\fk_i,1,x_{i,1}}$ and
    $e_i=d_i\cdot (J(x_{i,0})\oplus J(x_{i,1}))$ hold for all $i\in \overline{S}$ and output $\bot$ otherwise.
\end{itemize}
\end{description}

\paragraph{Correctness.}
As observed in the description, $\ket{\phi'_i}$ in the ciphertext has a negligible trace distance from $\ket{\phi_i}$.
Therefore, it suffices to prove correctness assuming that $\ket{\phi'_i}$ is replaced with $\ket{\phi_i}$.
After this replacement, decryption correctness clearly holds assuming correctness of $\Sigma_{\nce}$ and $\Sigma_{\ow}$.

We prove verification correctness below.
For $i\in \overline{S}$, if we apply $J$ to the second register of $\ket{\phi_i}$ and then apply Hadamard transform for both registers as in $\Delete$, then the resulting state can be written as 
\begin{align}
&2^{- \frac{w+2}{2}}\sum_{d,b,e} (-1)^{d\cdot J(x_{i,b})\xor e b }\ket{e}\ket{d}\\
& = 2^{-\frac{w}{2}}\sum_{d \in \zo{w}}(-1)^{d\cdot J(x_{i,0})}\ket{d\cdot (J(x_{i,0})\xor J(x_{i,1}))}\ket{d}.
\end{align}
Therefore, the measurement result is $(e_i,d_i)$ such that 
$e_i=d_i\cdot (J(x_{i,0})\xor J(x_{i,1}))$ 
for a uniform $d_i\lrun \bit^w$. 
By the first item of the adaptive hardcore property in~\cref{def:NTCF}, it holds that $d_i \in G_{\fk_i,0,x_{i,0}}\cap G_{\fk_i,1,x_{i,1}}$ 
except for a negligible probability.
Therefore, the certificate $\cert=\{(e_i,d_i)\}_{i\in[4n]}$ passes the verification by  $\Vrfy$ with overwhelming probability. 
\paragraph{Security.}
We prove the following theorem.
\begin{theorem}\label{thm:pke_cccd}
If $\Sigma_{\nce}$ is RNC secure,  $\Sigma_{\ow}$ is OW-CPA secure,
and $\mathcal{F}$ is an injective invariant NTCF family with the corresponding injective trapdoor family $\mathcal{G}$, 
$\Sigma_{\cccd}$ is IND-CPA-CD secure in the QROM where $H$ is modeled as a quantumly-accessible random oracle.
\end{theorem}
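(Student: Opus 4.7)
The plan follows the strategy from \cref{sec:technical_overview_classical} via a sequence of hybrids. \emph{Hybrid 0} is the real experiment $\expb{\Sigma_{\cccd},\cA}{ccpk}{cert}{del}(\secp,b)$. In \emph{Hybrid 1}, the challenger replaces $\nce.\ct$ with $\NCE.\Fake(\nce.\pk,\nce.\sk,\nce.\aux)$ and, on a valid certificate, returns $\tlsk \lrun \NCE.\Reveal(\nce.\pk,\nce.\sk,\nce.\aux,\nce.\ct,S)$ in place of $\nce.\sk$; indistinguishability from Hybrid 0 follows from a direct reduction to the RNC security of $\Sigma_{\nce}$. In \emph{Hybrid 2}, the challenger sets $\ctmsg$ to a fresh uniform string, so the adversary's view becomes independent of $m_b$ and its advantage is zero.

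The core of the proof is bounding the gap between Hybrids 1 and 2. I view Hybrid 2 as Hybrid 1 with the random oracle $H$ reprogrammed at the single point $Z = (K,(b_{i_1},x_{i_1}),\ldots,(b_{i_{2n}},x_{i_{2n}}))$, where $i_1<\cdots<i_{2n}$ enumerate $S$, to a fresh uniform value. Applying the one-way to hiding lemma (\cref{lem:o2h}) with difference set $\{Z\}$ bounds the gap by $2q\sqrt{p}$, where $p$ is the probability that the O2H extractor outputs $Z$; it remains to show $p \leq \negl(\secp)$.

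To bound $p$ I use two reductions, chosen by the location of the measured query. When the measured query lies in the pre-reveal phase, or in the post-reveal phase following an invalid certificate, an OW-CPA reduction applies: embed the OW-CPA challenge as $\ow.\ct$, simulate the rest of the game internally (answering the reveal with $\bot$), and output the $K$-component of the measured query as the guess. When the measured query lies in the post-reveal phase after a valid certificate, a cut-and-choose reduction applies: forward $\{\fk_i\}$ from the \cref{lem:cut_and_choose_adaptive_hardcore} challenger to $\cA$, relay $\{y_i,d_i,e_i\}$ back, use the returned $S$ to compute $\tlsk$ for the reveal, and output the preimage components of the measured query parsed in ascending order of $S$. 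Combining both cases yields $p \le \negl(\secp)$.

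The hard part is orchestrating the cut-and-choose reduction. The simulator learns $S$ only after relaying $\cA$'s certificate, yet it must already have sent $\cA$ the component $\ctmsg$ before then, so it cannot compute $\ctmsg = m_b \xor H(Z)$ honestly. I handle this by running the reduction in the Hybrid 2 regime using the symmetric form of O2H (with the extractor running against the reprogrammed oracle): in Hybrid 2 the value $\ctmsg$ is uniform and independent of $H$, so the simulator commits to a uniform $\ctmsg$ upfront and simulates $H$ lazily, only programming $H(Z) \mapsto \ctmsg \xor m_b$ once $S$ becomes known.
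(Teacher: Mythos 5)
Your hybrid sequence and the underlying intuitions (RNC to swap the NCE ciphertext/key, OW-CPA to handle the pre-reveal phase, cut-and-choose for the post-reveal phase) track the paper closely, but there is a genuine gap in the way you set up the one-way to hiding step, and it is exactly the obstacle the paper's three-hybrid structure is designed to avoid.

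You collapse the two reprogramming steps of the paper (\cref{prop:pkecccd_hyb_one} and \cref{prop:pkecccd_hyb_three}) into a \emph{single} application of \cref{lem:o2h} with difference set $\{Z\}$. The problem is that $Z = (K, (b_{i_1},x_{i_1}),\ldots,(b_{i_{2n}},x_{i_{2n}}))$ is determined only after $\cA$ sends $\{y_i\}_{i\in[4n]}$, and $\cA$'s choice of $\{y_i\}$ depends on its oracle queries. \cref{lem:o2h} requires $(S, G, H, z)$ to be jointly sampled \emph{before} the oracle algorithm runs; it does not cover a reprogramming set that is itself a function of the run. If you try to define $Z$ by first running the game with the un-reprogrammed $H$ and then setting $H' \seteq H_{Z \to r}$, you lose the identity $\cA^{H'}(z) \equiv \hybi{2}$: when $\cA$ queries $H'$ instead of $H$ in the pre-challenge phase it may output different $\{y_i\}$, so the challenger's ``actual'' $Z'$ in that run can differ from the pre-sampled $Z$, and then $\ctmsg = m_b \xor H(Z)$ from $z$ no longer matches the game. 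The paper avoids this circularity by splitting the reprogramming into two pieces whose supports are well-defined at sampling time: in $\mathsf{Hyb}_1$ the pre-reveal oracle is reprogrammed on the entire $K$-prefix $S_K$ (which depends only on $K$, conceptually drawn at the start, hence fixed upfront), and in $\mathsf{Hyb}_3$ the post-reveal oracle is reprogrammed at $\{Z\}$, with the entire pre-reveal execution (including $\{y_i\}$, so $Z$) frozen into the O2H input $z = (\{\fk_i,\td_i\},\{y_i,e_i,d_i\},\rho_\cA)$. Your case analysis of the extractor's measured query (pre-reveal $\Rightarrow$ OW-CPA; post-reveal-valid $\Rightarrow$ cut-and-choose) is morally the same partition, but it does not repair the underlying well-definedness problem; to make it rigorous you would need either the paper's $K$-prefix step or a genuinely adaptive-reprogramming variant of O2H, neither of which you invoke.

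Two smaller issues: (1) Your remark about ``programming $H(Z) \mapsto \ctmsg \xor m_b$ once $S$ becomes known'' is inconsistent with the reprogrammed-oracle view you adopt — under $H_{Z\to r}$ the adversary's oracle value at $Z$ is the fresh $r$, not $\ctmsg \xor m_b$, and the joint distribution of $(\ctmsg, H_{Z\to r})$ is just two independent uniforms, so no programming is needed or correct. (2) You do not address the case $Z = \sfnull$ (when $\Chk_\cG$ fails for some $i\in S$), in which there is no reprogramming point; the paper handles this explicitly.
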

\begin{proof}
What we need to prove is that for any QPT adversary $\cA$, it holds that
\begin{align}
\advc{\Sigma_{\cccd},\cA}{ccpk}{cert}{del}(\secp)\seteq \abs{\Pr[ \expb{\Sigma_{\cccd},\cA}{ccpk}{cert}{del}(\secp, 0)=1] - \Pr[ \expb{\Sigma_{\cccd},\cA}{ccpk}{cert}{del}(\secp, 1)=1] }\leq \negl(\secp).
\end{align}
Let $q=\poly(\secp)$ be the maximum number of $\cA$'s random oracle queries.   
For clarity, we describe how  $\expb{\Sigma_{\cccd},\cA}{ccpk}{cert}{del}(\secp, b)$ works below. 
\begin{enumerate}
   \item 
   A uniformly random function $H$ from $\bit^{\secp} \times (\bit\times \mathcal{X})^{2n}$ to $\bit^\ell$ is chosen, and 
   $\cA$ can make arbitrarily many quantum queries to $H$ at any time in the experiment. 
    \item The challenger generates  $(\nce.\pk,\nce.\sk,\nce.\aux)\lrun \NCE.\keygen(1^\secp)$ 
and $(\ow.\pk,\ow.\sk)\lrun \OW.\keygen(1^\secp)$ 
and sends $\pk \seteq (\nce.\pk,\ow.\pk)$ to $\cA$.
\item $\cA$ sends $(m_0,m_1)\in \Ms^2$ to the challenger.
    \item  
    The challenger chooses a uniform subset $S\subseteq [4n]$ such that $|S|=2n$,
    generates 
\begin{align}
  (\fk_i,\td_i)\lrun
  \begin{cases}
   \GenG(1^\secp)~~~&i \in S\\
   \GenF(1^\secp)~~~~&i \in \overline{S}
  \end{cases}
\end{align}    
for $i\in [4n]$, 
     and
    sends $\{\fk_i\}_{i\in[4n]}$ to $\cA$.
\item $\cA$ sends $\{y_i\}_{i\in [4n]}$ to the challenger. 
\item \label{step:send_ct}
The challenger chooses $K\lrun \bit^\secp$ and computes 
$(b_i,x_i)\lrun \InvG(\td_i,y_i)$ for all $i\in S$. 
If $\Chk_{\cG}(\fk_i,b_i,x_i,y_i)=0$ for some $i\in S$, the challenger sets $Z\seteq\sfnull$ and returns $\bot$ to $\cA$ where $\sfnull$ is a special symbol indicating that $Z$ is undefined.  
Otherwise, let $i_1,...,i_{2n}$ be the elements of $S$ in the ascending order. 
The challenger 
sets $Z\seteq (K,(b_{i_1}, x_{i_1} ),(b_{i_2},x_{i_2}) ,...,(b_{i_{2n}},x_{i_{2n}}))$,  
computes 
\begin{align}
&\nce.\ct\lrun \NCE.\Enc(\nce.\pk,S),\\
&\ow.\ct\lrun \OW.\Enc(\ow.\pk,K),\\
&\ctmsg\seteq m_b\xor H(Z),
\end{align}
and sends $(\nce.\ct,\ow.\ct,\ctmsg)$ to $\cA$.
\item  \label{step:send_cert}
$\cA$ sends $\cert=\{(e_i,d_i)\}_{i\in [4n]}$ to the challenger.
    \item  \label{step:reveal_sk}
  The challenger
   computes $x_{i,\beta}\lrun \InvF(\td_i,\beta,y_i)$ for all $(i,\beta)\in \overline{S}\times \bit$.
   If 
     $d_i\in G_{\fk_i,0,x_{i,0}} \cap G_{\fk_i,1,x_{i,1}}$ and
    $e_i=d_i\cdot (J(x_{i,0})\oplus J(x_{i,1}))$ hold for all $i\in \overline{S}$, sends $\sk\seteq(\nce.\sk,\ow.\sk)$ to $\cA$, and otherwise sends $\bot$ to $\cA$.
    \item  $\cA$ outputs $b'$. The output of the experiment is $b'$.
\end{enumerate}

We define the following sequence of hybrids.
\begin{description}
\item[$\mathsf{Hyb}_1(b)$:]
Let $\revealsk$ be the event that the challenger sends $\sk$ in Step~\ref{step:reveal_sk}.  
$\mathsf{Hyb}_1(b)$ is identical to $\expb{\Sigma_{\cccd},\cA}{ccpk}{cert}{del}(\secp, b)$ except that 
$K$ is chosen at the beginning  and 
the oracle given to $\cA$ before $\revealsk$ occurs is replaced with $H_{K\concat\ast \ra H'}$, which is $H$ reprogrammed according to $H'$ on inputs whose first entry is $K$ where $H'$ is another independent random function.  
More formally, 
 $H_{K\concat\ast \ra H'}$ is defined by 
\begin{align}
H_{K\concat\ast \ra H'}(K',(b_1,x_1),...,(b_{2n}, x_{2n}))\seteq
\begin{cases}
H(K',(b_1,x_1),...,(b_{2n}, x_{2n}))~~~&(K'\neq K)\\
H'(K',(b_1,x_1),...,(b_{2n}, x_{2n})) ~~~&(K'= K)
\end{cases}.
\end{align}
We note that 
 the challenger still uses $H$ to generate $\ctmsg$ and
the oracle after $\revealsk$ occurs is still
$H$   
similarly to the real experiment. 
On the other hand, if $\revealsk$ does not occur, the oracle $H_{K\concat\ast \ra H'}$ is used throughout the experiment except for the generation of $\ctmsg$.  
\item[$\mathsf{Hyb}_2(b)$:]
This is identical to $\mathsf{Hyb}_1(b)$ except that $\nce.\ct$ and $\nce.\sk$ that may be sent to $\cA$ in Step~\ref{step:send_ct}~and~\ref{step:reveal_sk} 
are replaced by
\begin{align}
&\nce.\tlct\lrun \NCE.\Fake(\nce.\pk,\nce.\sk,\nce.\aux),\\
&\nce.\tlsk\lrun \NCE.\Reveal(\nce.\pk,\nce.\sk,\nce.\aux,\nce.\tlct,S).
\end{align}
\item[$\mathsf{Hyb}_3(b)$:]
This is identical to $\mathsf{Hyb}_2(b)$ except that the oracle given to $\cA$ after  $\revealsk$ occurs is replaced with $H_{Z \ra r}$, which is $H$ reprogrammed to output $r$ on input $Z=(K,(b_{i_1},x_{i_1}),...,(b_{i_{2n}},x_{i_{2n}}))$ where $r$ is an independently random $\ell$-bit string. 
More formally, 
 $H_{Z \ra r}$ is defined by 
\begin{align}
H_{Z \ra r}(Z')\seteq
\begin{cases}
H(Z')~~~&(Z'\neq Z)\\
r ~~~&(Z'= Z)
\end{cases}.
\end{align}
 Note that we have $H_{Z \rightarrow r}= H$ if $Z=\sfnull$, i.e., if  $\Chk_{\cG}(\fk_i,b_i,x_i,y_i)=0$ for some $i\in S$ in Step \ref{step:send_ct}. 
\end{description}

\begin{proposition}\label{prop:pkecccd_hyb_one}
If $\Sigma_{\ow}$ is OW-CPA secure, 
$\abs{\Pr[\expb{\Sigma_{\cccd},\cA}{ccpk}{cert}{del}(\secp, b)=1] - \Pr[\mathsf{Hyb}_1(b)=1]} \le \negl(\secp)$.
\end{proposition}
\begin{proof}
To show this, we assume that
$
\abs{\Pr[\expb{\Sigma_{\cccd},\cA}{ccpk}{cert}{del}(\secp, b)=1] - \Pr[\mathsf{Hyb}_1(b)=1]} 
$
is non-negligible, and construct an adversary $\cB_\ow$ that breaks the OW-CPA security of $\Sigma_\ow$.
For notational simplicity,
we denote $\expb{\Sigma_{\cccd},\cA}{ccpk}{cert}{del}(\secp, b)$ by $\mathsf{Hyb}_0(b)$.
We consider an algorithm $\widetilde{\cA}$ that works as follows. $\widetilde{\cA}$ is given an oracle $\mathcal{O}$, which is either $H$ or $H_{K\concat \ast \rightarrow H'}$, and an input $z$ that consists of $K$ and the whole truth table of $H$, where 
$K\lrun \bit^\secp$,  
$H$ and $H'$ are uniformly random functions. 
$\widetilde{\cA}$ runs $\mathsf{Hyb}_0(b)$ except that 
it uses its own oracle $\mathcal{O}$ to simulate $\cA$'s random oracle queries before $\revealsk$ occurs. 
On the other hand, $\widetilde{\cA}$ uses $H$ to simulate $\ctmsg$ and $\cA$'s random oracle queries after $\revealsk$ occurs regardless of $\mathcal{O}$, which is possible because the truth table of $H$ is included in the input $z$.    
By definition, we clearly have 
\begin{align}
    \Pr[\mathsf{Hyb}_0(b)=1]=\Pr[\widetilde{\cA}^{H}(K,H)=1]
\end{align}
and 
\begin{align}
    \Pr[\mathsf{Hyb}_1(b)=1]=\Pr[\widetilde{\cA}^{H_{K\concat \ast \rightarrow H'}}(K,H)=1]
\end{align}
where $H$ in the input means the truth table of $H$.
We apply the one-way to hiding lemma (\cref{lem:o2h}) to the above $\widetilde{\cA}$. 
Note that $\widetilde{\cA}$ is inefficient, but the one-way to hiding lemma is applicable to inefficient algorithms.
Then if we let $\widetilde{\cB}$ be the algorithm that measures uniformly chosen query of $\widetilde{\cA}$ and $S_K\subseteq\bit^{\secp} \times (\bit\times \mathcal{X})^{2n}$ be the subset of all elements whose first entry is $K$, we have 
\begin{align}
   \abs{\Pr[\widetilde{\cA}^{H}(K,H)=1]-\Pr[\widetilde{\cA}^{H_{K\concat \ast \rightarrow H'}}(K,H)=1]}\leq 2q\sqrt{\Pr[\widetilde{\cB}^{H_{K\concat \ast \rightarrow H'}}(K,H)\in S_K]}.  
\end{align}
By the assumption, the LHS is non-negligible, and thus $\Pr[\widetilde{\cB}^{H_{K\concat \ast \rightarrow H'}}(K,H)\in S_K]$ is non-negligible. 
We remark that $\widetilde{\cB}$ uses the truth table of $H$ only for generating $\ctmsg:=m_b\oplus H(Z)$ since it halts before $\revealsk$ occurs, and thus it needs not to simulate the random oracle for $\cA$ after  $\revealsk$ occurs. 
Here, we observe that the oracle $H_{K\concat \ast \rightarrow H'}$ reveals no information about $H(Z)$ since the first entry of $Z$ is $K$, and thus $\ctmsg$ is  independently and uniformly random. 
Therefore, if we let $\widetilde{\cB'}$ be the same as  $\widetilde{\cB}$ except that it does not take the truth table of $H$ as input, and sets $\ctmsg$ to be a uniformly random string instead of setting $\ctmsg:=m_b\oplus H(Z)$,  we have 
\begin{align}
    \Pr[\widetilde{\cB}^{H_{K\concat \ast \rightarrow H'}}(K,H)\in S_K]=\Pr[{\widetilde{\cB'}}^{H_{K\concat \ast \rightarrow H'}}(K)\in S_K].
\end{align}
Moreover, for any fixed $K$, when $H$ and $H'$ are uniformly random, $H_{K\concat \ast \rightarrow H'}$ is also a uniformly random function, and thus we have 
\begin{align}
\Pr[{\widetilde{\cB'}}^{H_{K\concat \ast \rightarrow H'}}(K)\in S_K]=\Pr[{\widetilde{\cB'}}^{H}(K)\in S_K].
\end{align}
Since $\Pr[\widetilde{\cB}^{H_{K\concat \ast \rightarrow H'}}(K,H)\in S_K]$ is non-negligible,  $\Pr[{\widetilde{\cB'}}^{H}(K)\in S_K]$ is also non-negligible.
Recall that $\widetilde{\cB'}^{H}$ is an algorithm that simulates $\mathsf{Hyb}_0(b)$
  with the modification that $\ctmsg$ is set to be uniformly random and measures randomly chosen $\A$'s query before $\revealsk$ occurs. 
Then it is straightforward to construct an adversary $\cB_\ow$ that breaks the OW-CPA security of $\Sigma_\ow$ by using $\widetilde{\cB'}$.
For clarity, we give the description of $\cB_\ow$ below.

$\cB_\ow$ is given  $(\ow.\pk,\ow.\ct)$ from the challenger of $\expa{\Sigma,\cB_\ow}{ow}{cpa}(\secp)$.
$\cB_\ow$ chooses $i\lrun [q]$ and runs  $\hybi{1}(b)$ until $\cA$ makes $i$-th random oracle query or $\revealsk$ occurs where  $\cB_\ow$ embeds the problem instance $(\ow.\pk,\ow.\ct)$ into those sent to $\cA$ instead of generating them by itself. 
If $\revealsk$ occurs before $\cA$ makes $i$-th random oracle query, $\cB_\ow$ just aborts. 
Otherwise, $\cB_\ow$ measures the $i$-th random oracle query by $\cA$, and lets $Z'$ be the measurement outcome. 
$\cB_\ow$ outputs the first entry of $Z'$.

We note that $\cB_\ow$ can efficiently simulate the random oracle $H$ by Zhandry's compressed oracle technique \cite{C:Zhandry19}.
We also note that $\cB_\ow$ needs not to know $\ow.\sk$ since it aborts as soon as $\revealsk$ occurs. 
We can see that the probability that $Z'\in S_K$ where $K$ is the underlying message behind $\ow.\ct$  is exactly $\Pr[{\widetilde{\cB'}}^{H}(K)\in S_K]$, which is non-negligible. 
When $Z'\in S_K$, $\cB_\ow$ outputs $K$.
Therefore, $\cB_\ow$ succeeds in predicting $K$ with non-negligible probability.
This contradicts the OW-CPA security of $\Sigma_\ow$.
Therefore $\abs{\Pr[\hybi{0}(b)]-\Pr[\hybi{1}(b)]}$ is negligible.

\if0
$\cB_\ow$ generates $(\nce.\pk,\nce.\sk,\nce.\aux)\lrun\NCE.\keygen(1^{\secp})$ and sends $\pk\seteq(\ow.\pk,\nce.\pk)$ to $\cA_1$.
$\cB_\ow$ is given $(m_0,m_1)$ by $\cA_1$.
$\cB_\ow$ uniformly randomly chooses $S\in[4n]$ such that $|S|=2n$ and generates
\begin{align}
  (\fk_i,\td_i)\lrun
  \begin{cases}
   \GenG(1^\secp)~~~&i \in S\\
   \GenF(1^\secp)~~~~&i \in \overline{S}
  \end{cases}
\end{align}    
for $i\in [4n]$, 
     and
    sends $\{\fk_i\}_{i\in[4n]}$ to $\cA_1$.
$\cA_1$ sends $\{y_i\}_{i\in[4n]}$ to $\cB_\ow$.
$\cB_\ow$ computes $(b_i,x_i)\lrun \InvG(\td_i,y_i)$ for all $i\in S$.
Then $\cB_\ow$ generates $r\leftarrow\{0,1\}^n$ and computes
\begin{align}
\nce.\ct&\lrun \NCE.\Enc(\nce.\pk,S),\\
\ctmsg&\seteq m_b\xor r.
\end{align}
$\cB_\ow$ sends $(\nce.\ct,\ow.\ct,\ctmsg)$ to $\cA_1$. (Note that $\cB_\ow$ just transfers $\ow.\CT$ given by the challenger of the OW-CPA game to $\cA_1$.)
By the assumption, $\cB$ outputs an element of $S_K$ with non-negligible probability.
$\cB_\ow$ outputs the first $\lambda$ bits of the output of $\cB$, which is $K$ with non-negligible probability,
but it breaks the OW-CPA security of $\Sigma_\ow$.
\fi

\end{proof}

\begin{proposition}\label{prop:pkecccd_hyb_two}
If $\Sigma_{\nce}$ is RNC secure, 
$\abs{\Pr[\mathsf{Hyb}_1(b)=1]-\Pr[\mathsf{Hyb}_2(b)=1]} \le \negl(\secp)$.
\end{proposition}
\begin{proof}

The proof is very similar to \cref{prop:pkecd_hyb_one},
but for the convenience of readers, we give the proof here.
To show the proposition, we assume
that
$\abs{\Pr[\mathsf{Hyb}_1(b)=1]-\Pr[\mathsf{Hyb}_2(b)=1]}$ is non-negligible,
and construct an adversary $\cB_\nce$ that breaks the RNC security of $\Sigma_{\nce}$.
Let $\cA$ be the distinguisher for $\hybi{1}(b)$ and $\hybi{2}(b)$. 
First, $\cB_\nce$ receives $\nce.\pk$ from the challenger of $\expa{\Sigma,\cB_\nce}{rec}{nc}(\secp,b')$ where $b'$ is $0$ or $1$.
 $\cB_\nce$ generates $K\lrun \bit^\secp$ and simulates the random oracle for $\cA$ as in $\hybi{1}(b)$ and $\hybi{2}(b)$.
That is, it uses 
$H_{K\concat \ast \rightarrow H'}$ before $\revealsk$ occurs
and 
$H$ after $\revealsk$ occurs,
where $H$ and $H'$ are random functions. 
Note that quantum access to random functions $H$ and $H'$ can be efficiently simulated by using Zhandry's compressed oracle technique \cite{C:Zhandry19}.  
$\cB_\nce$ generates $(\ow.\pk,\ow.\sk)\leftarrow\OW.\keygen(1^\secp)$ and sends $(\nce.\pk,\ow.\pk)$ to $\cA$ 
and receives $(m_0,m_1)$ from $\cA$.
$\cB_\nce$ generates 
$S$ and 
$\{\fk_i,\td_i\}_{i\in[4n]}$ as in $\hybi{1}(b)$ and $\hybi{2}(b)$,  
sends $\{\fk_i\}_{i\in[4n]}$ to $\cA$, 
and receives $\{y_i\}_{i\in[4n]}$ from $\cA$.
$\cB_\nce$ computes $\ow.\ct$ and $\ctmsg$ as in $\hybi{1}(b)$ and $\hybi{2}(b)$, 
 sends $S$ to the challenger of $\expa{\Sigma,\cB_\nce}{rec}{nc}(\secp,b')$, 
and receives $(\nce.\CT^*,\nce.\sk^*)$ from the challenger. 
Here,
\begin{align}
(\nce.\ct^*,\nce.\sk^*)= 
(\NCE.\Enc (\nce.\pk,S),\nce.\sk)
\end{align}
if $b'=0$, and
\begin{align}
(\nce.\ct^*,\nce.\sk^*)= 
(\NCE.\Fake (\nce.\pk,\nce.\sk,\nce.\aux),\NCE.\Reveal(\nce.\pk,\nce.\sk,\nce.\aux,\nce.\ct^*,S))
\end{align}
if $b'=1$.
$\cB_\nce$ sends $(\nce.\ct^*,\ow.\ct,\ct_\msg)$ to $\cA$  
and receives $\cert$ from $\cA$.  
If it is valid one (i.e., $\vrfy(\vk,\cert)=\top$), $\cB_\nce$ sends $(\nce.\sk^*,\ow.\sk)$
to $\cA$. Otherwise, it sends $\bot$ to $\cA$.
 It is easy to see that $\cB_\nce$ perfectly simulates  $\hybi{1}(b)$ if $b'=0$ and $\hybi{2}(b)$ otherwise.  
By assumption, $\cA$ can distinguish $\hybi{1}(b)$
and $\hybi{2}(b)$, and therefore $\cB_\nce$ can break the RNC security of $\Sigma_\nce$.

\if0
$\cB_\nce$ then generates $(\ow.\pk,\ow.\sk)\leftarrow \OW.\KeyGen(1^\lambda)$.
$\cB_\nce$ sends $(\nce.\pk,\ow.\pk)$ to $\cD$.
$\cB_\nce$ is given $(m_0,m_1)$ from $\cD$.
$\cB_\nce$ randomly chooses $S\in[4n]$ such that $|S|=2n$ and generates
\begin{align}
  (\fk_i,\td_i)\lrun
  \begin{cases}
   \GenG(1^\secp)~~~&i \in S\\
   \GenF(1^\secp)~~~~&i \in \overline{S}
  \end{cases}
\end{align}    
for $i\in [4n]$.
$\cB_\nce$ sends $\{k_i\}_{i\in[4n]}$ to $\cD$.
$\cB_\nce$ is given $\{y_i\}_{i\in[4n]}$ from $\cD$. $\cB$ then computes $(b_i,x_i)\lrun \InvG(\td_i,y_i)$ for all $i\in S$. 
$\cB_\nce$ chooses $K\leftarrow \{0,1\}^\lambda$, and
sets $Z\seteq K\concat b_{i_1}\concat x_{i_1} \concat b_{i_2}\concat x_{i_2} \concat...\concat b_{i_{2n}}\concat x_{i_{2n}}$.
$\cB_\nce$ computes
\begin{align}
&\ow.\ct\lrun \OW.\Enc(\nce.\pk,K),\\
&\ctmsg\seteq m_b\xor H(Z).
\end{align}
Then $\cB_\nce$ sends $S$ to the challenger of NCE, and $\cB$ receives $(\nce.\CT^*,\nce.{\sk}^*)$ from the challenger of NCE. 
$\cB_\nce$ sends $(\nce.\ct^*,\ow.\ct,\ctmsg)$ to the distinguisher $\cD$. $\cB$ is given $\cert$ from $\cD$.
If $\cert$ passes the verification, $\cB$ sends $\sk=(\ow.\sk,\nce.\sk^*)$ to $\cD$.
$\cB_\nce$ receives $b'$ from $\cD$.
\begin{itemize}
\item If $(\nce.\ct^*,\nce.\sk^*)= (\NCE.\Enc (\nce.\pk,S),\nce.\sk)$, $\cB$ perfectly simulates $\hybi{1}$.
\item  If $(\nce.\ct^*,\nce.\sk^*) = (\NCE.\Fake (\nce.\pk,\nce.\sk,\nce.\aux),\NCE.\Reveal(\nce.\pk,\nce.\sk,\nce.\aux,\nce.\ct^*,S))$, $\cB$ perfectly simulates $\hybi{2}$.
\end{itemize}
Thus, if $\cD$ distinguishes the two games, $\cB$ can break the RNC security. This completes the proof.
\fi
\end{proof}

\begin{proposition}\label{prop:pkecccd_hyb_three}
If $\mathcal{F}$ and $\mathcal{G}$ satisfy the cut-and-choose adaptive hardcore property (\cref{lem:cut_and_choose_adaptive_hardcore}),  
$\abs{\Pr[\mathsf{Hyb}_2(b)=1]-\Pr[\mathsf{Hyb}_3(b)=1]} \le \negl(\secp)$.
\end{proposition}
\begin{proof}

To show this, we assume that 
$\abs{\Pr[\mathsf{Hyb}_2(b)=1]-\Pr[\mathsf{Hyb}_3(b)=1]}$ is non-negligible,
and show that the cut-and-choose adaptive hardcore property (\cref{lem:cut_and_choose_adaptive_hardcore})   
for $\mathcal{F}$ and $\mathcal{G}$ is broken.
For simplicity, we first prove this assuming that $\abs{\Pr[\mathsf{Hyb}_2(b)=1]-\Pr[\mathsf{Hyb}_3(b)=1]}$ is \emph{noticeable}, i.e., there exists a polynomial $p$ such that   $\abs{\Pr[\mathsf{Hyb}_2(b)=1]-\Pr[\mathsf{Hyb}_3(b)=1]}\geq 1/p(\secp)$, and then we explain how to extend it to the non-negligible case.\footnote{Looking ahead, we first consider the noticeable case because the fact that  a product of two noticeable functions is also noticeable simplifies the proof. Note that a similar statement for non-negligible functions does not hold in general.} 
First, we remark that $\mathsf{Hyb}_2(b)$ and $\mathsf{Hyb}_3(b)$ are identical unless $\revealsk$ occurs. 
Therefore, we have
\begin{align}
\abs{\Pr[\mathsf{Hyb}_2(b)=1]-\Pr[\mathsf{Hyb}_3(b)=1]}&=\abs{\Pr[\mathsf{Hyb}_2(b)=1\land \revealsk]-\Pr[\mathsf{Hyb}_3(b)=1\land \revealsk]}\\
&=\Pr[\revealsk]\cdot \abs{\Pr[\mathsf{Hyb}_2(b)=1\mid \revealsk]-\Pr[\mathsf{Hyb}_3(b)=1\mid \revealsk]}
\end{align}
where we remark that $\Pr[\revealsk]$ is identical in $\mathsf{Hyb}_2(b)$ and $\mathsf{Hyb}_3(b)$ and thus we need not specify which hybrid is considered when we write  $\Pr[\revealsk]$. 
Therefore, both $\Pr[\revealsk]$
and $\abs{\Pr[\mathsf{Hyb}_2(b)=1|\revealsk]-\Pr[\mathsf{Hyb}_3(b)=1|\revealsk]}$ are noticeable.

We consider an algorithm $\widetilde{\cA}$ that works as follows. $\widetilde{\cA}$ is given an oracle $\mathcal{O}$, which is either $H$ or $H_{Z \rightarrow r}$ and an input $z\seteq (\{\fk_i,\td_i\}_{i\in[4n]},\{y_i,e_i,d_i\}_{i\in[4n]},\rho_{\cA})$ that are sampled according to the distribution  $\mathcal{D}_{\revealsk}$ defined below.
Let $\mathcal{D}$ be the distribution of $(H,H_{Z \rightarrow r},(\{\fk_i,\td_i\}_{i\in[4n]},\{y_i,e_i,d_i\}_{i\in[4n]},\rho_{\cA}))$ sampled as in $\mathsf{Hyb}_3(b)$ where $\rho_{\cA}$ is $\cA$'s internal state just after sending $\cert$ in Step~\ref{step:send_cert}. 
We assume that $\cA$ does nothing after sending $\cert$ until it receives $\sk$ or $\bot$ in Step~\ref{step:reveal_sk} without loss of generality. Therefore, $\rho_{\cA}$ is also the state just before receiving $\sk$ or $\bot$ in Step~\ref{step:reveal_sk}.  
$\mathcal{D}_{\revealsk}$ is defined to be the conditional distribution of $\mathcal{D}$ conditioned on that $\revealsk$ occurs. 
Given an oracle $\mathcal{O}$, which is either $H$ or $H_{Z \rightarrow r}$ and an input $z\seteq (\{\fk_i,\td_i\}_{i\in[4n]},\{y_i,e_i,d_i\}_{i\in[4n]},\rho_{\cA})$ sampled from $\mathcal{D}_{\revealsk}$, 
$\widetilde{\cA}$ runs $\hybi{2}(b)$ starting from 
Step~\ref{step:reveal_sk} where 
the internal state of $\cA$ is initialized to be $\rho_{\cA}$ and 
$\widetilde{\cA}$ uses its own oracle $\mathcal{O}$ to simulate $\cA$'s random oracle queries.
By definition, we clearly have 
\begin{align}
    \Pr[\mathsf{Hyb}_2(b)=1\mid \revealsk]=\Pr[\widetilde{\cA}^{H}(z)=1]
\end{align}
and 
\begin{align}
    \Pr[\mathsf{Hyb}_3(b)=1\mid \revealsk]=\Pr[\widetilde{\cA}^{H_{Z \rightarrow r}}(z)=1]
\end{align}
where $(H,H_{Z\rightarrow r},z) \lrun \mathcal{D}_{\revealsk}$. 
We apply the one-way to hiding lemma (\cref{lem:o2h}) to the above $\widetilde{\cA}$. 
Let $\widetilde{\cB}$ be the algorithm that measures uniformly chosen query of $\widetilde{\cA}$.
Then we have 
\begin{align}
   \abs{\Pr[\widetilde{\cA}^{H}(z)=1]-\Pr[\widetilde{\cA}^{H_{Z \rightarrow r}}(z)=1]} \leq 2q\sqrt{\Pr[\widetilde{\cB}^{H_{Z \rightarrow r}}(z)= Z]}
\end{align}
where 
$(H,H_{Z\rightarrow r},z) \lrun \mathcal{D}_{\revealsk}$ and when $Z=\sfnull$, we regard that  $\widetilde{\cB}^{H_{Z \rightarrow r}}(z)= Z$ does not occur regardless of the output of $\widetilde{\cB}$.    
The LHS is equal to   $\abs{\Pr[\mathsf{Hyb}_2(b)=1|\revealsk]-\Pr[\mathsf{Hyb}_3(b)=1|\revealsk]}$, which is noticeable.
Therefore $\Pr_{(H,H_{Z\rightarrow r},z) \lrun \mathcal{D}_{\revealsk}}[\widetilde{\cB}^{H_{Z \rightarrow r}}(z)=Z]$ is noticeable.      
Here, we observe that $H(Z)$ is used only for generating $\ctmsg\seteq m_b\oplus H(Z)$ in the execution of $(H,H_{Z\rightarrow r},z) \lrun \mathcal{D}_{\revealsk}$ and $\widetilde{\cB}^{H_{Z \rightarrow r}}(z)$. 
Therefore, the output distribution of $\widetilde{\cB}$ does not change even if we set $\ctmsg$ to be a uniformly random string in the sampling procedure of $\mathcal{D}_{\revealsk}$.
Moreover, 
for any fixed $Z$, 
the joint distribution of  $(H_{K\concat \ast\rightarrow H'},H_{Z \rightarrow r})$ is identical to the joint distribution of $(H_{K\concat \ast\rightarrow H'},H)$   
when $H$, $H'$, and $r$ are uniformly random.
Then, if we denote by $\mathcal{D}'_{\revealsk}$ the distribution that is the same as  $\mathcal{D}_{\revealsk}$ except that $\ctmsg$ is set to be a uniformly random string in the sampling procedure  and the second output $H_{Z \rightarrow r}$ is omitted, 
we have 
\begin{align}
    \Pr_{(H,H_{Z\rightarrow r},z) \lrun \mathcal{D}_{\revealsk}}[\widetilde{\cB}^{H_{Z \rightarrow r}}(z)=Z]=\Pr_{(H,z) \lrun \mathcal{D}'_{\revealsk}}[\widetilde{\cB}^{H}(z)=Z].
\end{align}
Since the LHS is noticeable, the RHS is noticeable.
We note that the sequential execution of $(H,z) \lrun \mathcal{D}'_{\revealsk}$ and $\widetilde{\cB}^{H}(z)$ is the same as executing $\hybi{2}(b)$ with the modification that $\ctmsg$ is set to be uniformly random and measuring a uniformly chosen $\A$'s query after Step~\ref{step:reveal_sk} conditioned on that $\revealsk$ occurs.
This naturally gives an adversary $\cB_{\mathsf{cac}}$ that breaks 
the cut-and-choose adaptive hardcore property (\cref{lem:cut_and_choose_adaptive_hardcore})   
for $\mathcal{F}$ and $\mathcal{G}$, which embeds the problem instance of $\expb{(\mathcal{F},\mathcal{G}),\cB_{\mathsf{cac}}}{cut}{and}{choose}(\secp,n)$ into the execution of $(H,z) \lrun \mathcal{D}'_{\revealsk}$ and $\widetilde{\cB}^{H}(z)$.
For clarity, we give the description of $\cB_{\mathsf{cac}}$ below.

$\cB_{\mathsf{cac}}$ receives $\{\fk_i\}_{i\in[4n]}$ from the challenger of $\expb{(\mathcal{F},\mathcal{G}),\cB_{\mathsf{cac}}}{cut}{and}{choose}(\secp,n)$.
$\cB_{\mathsf{cac}}$ chooses $K\lrun \bit^\secp$ and simulates $\A$'s random oracle queries before $\revealsk$ occurs by $H_{K\concat\ast \rightarrow H'}$ and those after $\revealsk$ occurs by $H$, which can be done efficiently by  Zhandry's compressed oracle technique \cite{C:Zhandry19}. 
$\cB_{\mathsf{cac}}$ generates $(\nce.\pk,\nce.\sk,\nce.\aux)\lrun \NCE.\keygen(1^\secp)$ 
and $(\ow.\pk,\ow.\sk)\lrun \OW.\keygen(1^\secp)$, sends $\pk \seteq (\nce.\pk,\ow.\pk)$ to $\cA$, and receives $(m_0,m_1)$ from $\cA$. 
$\cB_{\mathsf{cac}}$ sends $\{\fk_i\}_{i\in[4n]}$ to $\cA$ and receives $\{y_i\}_{i\in [4n]}$ from $\cA$. $\cB_{\mathsf{cac}}$ generates 
$\nce.\tlct \lrun \NCE.\Fake(\nce.\pk,\nce.\sk,\allowbreak\nce.\aux)$, 
$\ow.\ct\lrun \OW.\Enc(\ow.\pk,K)$, and 
$\ctmsg\lrun \bit^\ell$, sends $(\nce.\tlct,\ow.\ct,\ctmsg)$ to $\cA$, and receives $\cert=\{(e_i,d_i)\}_{i\in [4n]}$ from $\cA$.  
$\cB_{\mathsf{cac}}$ sends $\{(y_i,e_i,d_i)\}_{i\in [4n]}$ to the challenger of $\expb{(\mathcal{F},\mathcal{G}),\cB_{\mathsf{cac}}}{cut}{and}{choose}(\secp,n)$.
If the challenger aborts, $\cB_{\mathsf{cac}}$ aborts.
Otherwise, $\cB_{\mathsf{cac}}$ receives $S$ from the challenger, generates $\nce.\tlsk\lrun \NCE.\Reveal(\nce.\pk,\nce.\sk,\allowbreak \nce.\aux,\nce.\tlct,S)$, and sends  $\sk\seteq(\nce.\tlsk,\ow.\sk)$ to $\cA$.
$\cB_{\mathsf{cac}}$ chooses $i\lrun [q]$,  measures $\A$'s $i$-th query after Step~\ref{step:reveal_sk}, and lets $Z^\prime=(K^\prime, (b_{i_1}^\prime, x_{i_1}^\prime), (b_{i_2}^\prime, x_{i_2}^\prime),...,(b_{i_{2n}}^\prime, x_{i_{2n}}^\prime))$ be the measurement outcome where $i_1,...,i_{2n}$ are the elements of $S$ in the ascending order. 
$\cB_{\mathsf{cac}}$ sends $\{b_i^\prime,x_i^\prime\}_{i\in S}$ to the challenger.

We can see that $\cB_{\mathsf{cac}}$ perfectly simulates the sequential execution of $(H,z) \lrun \mathcal{D}'_{\revealsk}$ and $\widetilde{\cB}^{H}(z)$ conditioned on that it does not abort, which corresponds to the event ${\revealsk}$ in the simulated experiment for $\cA$. 
Moreover, when $Z^\prime=Z\neq \sfnull$ where $Z$ is defined as in $\hybi{3}(b)$, we have $\Chk_{\cG}(\fk_i,b_i^\prime,x_i^\prime,y_i)=1$ for all $i\in S$ by the definition of $Z$.
Therefore, we have 
\begin{align}
    \Pr[\expb{(\mathcal{F},\mathcal{G}),\cB_{\mathsf{cac}}}{cut}{and}{choose}(\secp,n)=1]\geq \Pr[\revealsk]\cdot \Pr_{(H,z) \lrun \mathcal{D}'_{\revealsk}}[\widetilde{\cB}^{H}(z)=Z].
\end{align}
Since both $\Pr[\revealsk]$ and $\Pr_{(H,z) \lrun \mathcal{D}'_{\revealsk}}[\widetilde{\cB}^{H}(z)=Z]$ are noticeable as already proven, $\Pr[\expb{(\mathcal{F},\mathcal{G}),\cB_{\mathsf{cac}}}{cut}{and}{choose}(\secp,n)=1]$ is noticeable, which means that $\cB_{\mathsf{cac}}$ breaks the cut-and-choose adaptive hardcore property. 
This completes the proof for the case where $\abs{\Pr[\mathsf{Hyb}_2(b)=1]-\Pr[\mathsf{Hyb}_3(b)=1]}$ is noticeable.

When it is non-negligible rather than noticeable, the reason that the above proof does not immediately works is that   $\Pr[\revealsk]\cdot \Pr_{(H,z) \lrun \mathcal{D}'_{\revealsk}}[\widetilde{\cB}^{H}(z)=Z]$ may be negligible even if both $\Pr[\revealsk]$ and $\Pr_{(H,z) \lrun \mathcal{D}'_{\revealsk}}[\widetilde{\cB}^{H}(z)=Z]$ are non-negligible in general.
Intuitively, we overcome this by observing that $\Pr[\revealsk]$ and $\Pr_{(H,z) \lrun \mathcal{D}'_{\revealsk}}[\widetilde{\cB}^{H}(z)=Z]$ take ``noticeable values" on the same infinite subset of security parameters, in which case the product also takes ``noticeable values" on the same subset. 
More precisely, since we assume that $\abs{\Pr[\mathsf{Hyb}_2(b)=1]-\Pr[\mathsf{Hyb}_3(b)=1]}$ is non-negligible, there exists an infinite subset $I\subseteq \mathbb{N}$ and a polynomial $p$ such that  $\abs{\Pr[\mathsf{Hyb}_2(b)=1]-\Pr[\mathsf{Hyb}_3(b)=1]}\geq 1/p(\secp)$ for all $\secp \in I$. 
By similar arguments as above, we can show that there exists a polynomial $p'$ such that
$\Pr[\revealsk]\geq 1/p'(\secp)$ and $\Pr_{(H,z) \lrun \mathcal{D}'_{\revealsk}}[\widetilde{\cB}^{H}(z)=Z]\geq 1/p'(\secp)$ for all $\secp \in I$.
Then we have $\Pr[\expb{(\mathcal{F},\mathcal{G}),\cB_{\mathsf{cac}}}{cut}{and}{choose}(\secp,n)=1]\geq (1/p'(\secp))^2$ for all $\secp\in I$, in which case it is non-negligible. 
This completes the proof.

\end{proof}

\begin{proposition}\label{prop:pkecccd_hyb_final}
It holds that   
$\Pr[\mathsf{Hyb}_3(0)=1]=\Pr[\mathsf{Hyb}_3(1)=1]$.
\end{proposition}
\begin{proof}
In $\hybi{3}$, the challenger queries $H$ while the adversary queries
$H_{K\|*\rightarrow H'}$ or $H_{Z\rightarrow r}$. 
Therefore, $H(Z)$ is used only for generating $\ctmsg$ in $\mathsf{Hyb_3}$  and thus 
$\ctmsg$ is an independently uniform string regardless of $b$ from the view of the adversary.
Therefore \cref{prop:pkecccd_hyb_final} holds.
\end{proof}

By combining
\cref{prop:pkecccd_hyb_one,prop:pkecccd_hyb_two,prop:pkecccd_hyb_three,prop:pkecccd_hyb_final}
\cref{thm:pke_cccd} is proven.
\end{proof}

\section{PKE with Publicly Verifiable Certified Deletion and Classical Communication}\label{sec:WE_plus_OSS}

In this section, we define the notion of public key encryption with publicly verifiable
certified deletion with classical
communication, and construct it from the witness encryption and the one-shot signature.
In \cref{sec:pk_pv_cd_def_classical_com}, we present the definition of the public key encryption
with publicly verifiable certified deletion with classical communication. 
In \cref{sec:pk_pv_cd_cc_construction}, we give a construction 
and show
its security.

\subsection{Definition of PKE with Publicly Verifiable Certified Deletion with Classical Communication}
\label{sec:pk_pv_cd_def_classical_com}

In this section, we consider a PKE with publicly verifiable certified deletion with classical communication.
It is a publicly verifiable version of the one given in
(\cref{def:pk_cert_del_classical_com}).
The construction of \cref{sec:PKE_cd_cc_construction}
is not publicly verifiable, because the verification key $\vk$ 
(which is the trapdoor $\{\td_i\}_i$) should be
secret to the adversary. On the other hand, in \cref{sec:pk_pv_cd_cc_construction} we construct a publicly verifiable one,
which means that the security is kept even if the verification key $\vk$ (which is the public key $\oss.\pk$ of the one-shot signature in our construction)
is given to the adversary.

The definition (syntax) is the same as that of
the non-publicly-verifiable one (\cref{def:pk_cert_del_classical_com}).
Furthermore, its correctness, i.e., 
the decryption correctness and the verification correctness,
are also the same as those of the non-publicly-verifiable one (\cref{def:pk_cd_correctness_classical_com}).
Regarding the certified deletion security, it is the same as
that of the non-publicly-verifiable one (\cref{def:pk_certified_del_classical_com})
except that the challenger sends $\vk$ to the adversary (which is $\oss.\pk$ in our construction).

\subsection{Construction}
\label{sec:pk_pv_cd_cc_construction}
We construct a PKE scheme with publicly verifiable certified deletion with classical communication
$\Sigma_{\mathsf{pvcccd}}=(\KeyGen,\Enc,\Dec,\Delete,\Vrfy)$ 
from a public key NCE scheme $\Sigma_\nce=\NCE.(\keygen,\Enc,\Dec,\Fake,\Reveal)$,
the witness encryption scheme $\Sigma_\we=\WE.(\Enc,\Dec)$, and 
the one-shot signature scheme $\Sigma_\oss=\OSS.(\Setup,\keygen,\Sign,\Vrfy)$.

\begin{description}
\item[$\keygen(1^\lambda)$:] $ $
\begin{itemize}
\item Generate $(\nce.\pk,\nce.\sk,\nce.\aux)\leftarrow \NCE.\keygen(1^\lambda)$.
\item Output $(\pk,\sk)=(\nce.\pk,\nce.\sk)$.
\end{itemize}

\item[$\Enc\langle \cS(\pk,m),\cR\rangle$:] This is an interactive protocol between
a sender $\cS$ with input $(\pk,m)$ and a receiver $\cR$ without input that works as follows.
\begin{itemize}
\item $\cS$ parses $\pk=\nce.\pk$.
\item $\cS$ generates $\crs\leftarrow\OSS.\Setup(1^\secp)$, and sends $\crs$ to $\cR$.
\item $\cR$ generates $(\oss.\pk,\oss.\sk)\lrun \OSS.\keygen(\crs)$, sends $\oss.\pk$
to $\cS$, and keeps $\oss.\sk$.
    \item $\cS$ computes $\we.\ct\leftarrow \WE.\Enc(1^\lambda,x,m)$ with the statement $x$ that
    ``$\exists \sigma$ s.t. $\OSS.\Vrfy(\crs,\oss.\pk,\sigma,0)=\top$".
\item $\cS$ computes $\nce.\ct\leftarrow\NCE.\Enc(\nce.\pk,\we.\ct)$,
    and sends $\nce.\ct$ to $\cR$.
    \item $\cS$ outputs $\vk=(\crs,\oss.\pk)$. $\cR$ outputs $\ct=(\nce.\ct,\oss.\sk)$.
\end{itemize}

\item[$\Dec(\sk,\ct)$:]  $ $
\begin{itemize}
\item Parse $\sk=\nce.\sk$ and $\ct=(\nce.\ct ,\oss.\sk)$.
    \item Compute $\sigma\leftarrow \OSS.\Sign(\oss.\sk,0)$.
    \item Compute $m'\leftarrow \NCE.\Dec(\nce.\sk,\nce.\ct)$
    \item Compute $m\leftarrow \WE.\Dec(m',\sigma)$.
    \item Output $m$.
\end{itemize}

\item[$\Delete(\ct)$:]  $ $
\begin{itemize}
\item Parse $\ct=(\nce.\ct,\oss.\sk)$.
    \item Compute $\sigma\leftarrow \OSS.\Sign(\oss.\sk,1)$.
    \item Output $\cert=\sigma$.
\end{itemize}

\item[$\Vrfy(\vk,\cert)$:]  $ $
\begin{itemize}
\item Parse $\vk=(\crs,\oss.\pk)$ and $\cert=\sigma$.
    \item Compute $b\leftarrow \OSS.\Vrfy(\crs,\oss.\pk,\sigma,1)$.
    \item Output $b$.
\end{itemize}
\end{description}

\paragraph{Correctness.}
The decryption and verification correctness easily follow from the correctness of $\Sigma_\we$
and $\Sigma_\oss$.

\paragraph{Security.}
We show the following theorem.

\begin{theorem}\label{theorem:WEOSS}
If $\Sigma_\nce$ is RNC secure, $\Sigma_\we$ has the extractable security and $\Sigma_\oss$ is secure, then
$\Sigma_{\mathsf{pvcccd}}$ is IND-CPA-CD secure.
\end{theorem}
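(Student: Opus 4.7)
The plan is to prove the theorem through a two-step hybrid argument followed by a reduction that combines the extractability of $\Sigma_\we$ with the unforgeability of $\Sigma_\oss$. Define $\hybi{0}(b) \seteq \expb{\Sigma_{\mathsf{pvcccd}},\cA}{ccpk}{cert}{del}(\secp, b)$, and define $\hybi{1}(b)$ identically to $\hybi{0}(b)$ except that the challenger replaces $\nce.\ct \lrun \NCE.\Enc(\nce.\pk, \we.\ct)$ by a fake ciphertext $\nce.\tlct \lrun \NCE.\Fake(\nce.\pk, \nce.\sk, \nce.\aux)$ and, upon receiving a valid certificate, hands $\cA$ the fake secret key $\nce.\tlsk \lrun \NCE.\Reveal(\nce.\pk, \nce.\sk, \nce.\aux, \nce.\tlct, \we.\ct)$ in place of $\nce.\sk$. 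First I would show $\abs{\Pr[\hybi{0}(b) = 1] - \Pr[\hybi{1}(b) = 1]} \leq \negl(\secp)$ by a direct reduction to the RNC security of $\Sigma_\nce$, analogous to \cref{prop:pkecd_hyb_one} and \cref{prop:pkecccd_hyb_two}: the reduction receives either $(\NCE.\Enc(\nce.\pk, \we.\ct),\nce.\sk)$ or the fake analog from its own RNC challenger, and embeds it in a faithful simulation of the certified deletion experiment for $\cA$.

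The heart of the proof is to bound $\abs{\Pr[\hybi{1}(0) = 1] - \Pr[\hybi{1}(1) = 1]}$. The key observation is that in $\hybi{1}$ the bit $b$ affects $\cA$'s view only through $\we.\ct = \WE.\Enc(1^\secp, x, m_b)$, which enters only via $\NCE.\Reveal$, and $\nce.\tlsk$ is revealed to $\cA$ only when $\cA$ first submits a certificate $\sigma_1$ with $\OSS.\Vrfy(\crs, \oss.\pk, \sigma_1, 1) = \top$. Assuming for contradiction that the gap is non-negligible, the same case-analysis idea used after \cref{prop:pkecccd_hyb_three} lets me assume, on an infinite set of security parameters, that both the probability of a valid certificate and the conditional distinguishing advantage given a valid certificate are at least $1/\poly(\secp)$.

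I would then build a WE distinguisher $\cA_\we$ that receives an auxiliary state $\aux^*$ encoding $(\crs, \oss.\pk, \nce.\pk, \nce.\sk, \nce.\aux, \nce.\tlct, \sigma_1, \text{internal state of } \cA)$ together with a challenge ciphertext $\we.\ct^*$, recomputes $\nce.\tlsk^* \lrun \NCE.\Reveal(\nce.\pk, \nce.\sk, \nce.\aux, \nce.\tlct, \we.\ct^*)$, resumes $\cA$'s execution by delivering $(\vk, \nce.\tlsk^*)$, and outputs $\cA$'s final bit. By construction the distinguishing advantage of $\cA_\we$ between $\WE.\Enc(1^\secp, x, m_0)$ and $\WE.\Enc(1^\secp, x, m_1)$ matches the conditional gap in $\hybi{1}$ up to the probability of a valid certificate, which is $\geq 1/\poly(\secp)$. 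The extractable security of $\Sigma_\we$ then yields a QPT extractor $\cE$ producing, with probability $\geq 1/\poly(\secp)$, a witness $\sigma_0$ for the statement $x$, which by the choice of $x$ is exactly a valid signature with $\OSS.\Vrfy(\crs, \oss.\pk, \sigma_0, 0) = \top$.

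Chaining these reductions gives an OSS adversary $\cB_\oss$ that, on input $\crs$, samples all remaining parameters honestly, simulates the experiment for $\cA$ up to obtaining $\sigma_1$, assembles $\aux^*$, runs $\cE$ to recover $\sigma_0$, and outputs $(\oss.\pk, 0, \sigma_0, 1, \sigma_1)$; this contradicts OSS security and completes the proof. The main technical obstacle is bookkeeping around $\aux^*$: I must check that every piece of information $\cA_\we$ uses after receiving $\we.\ct^*$ is already determined before $\we.\ct$ is sampled, so that the only dependence on $b$ in $\cA$'s view enters through the plaintext supplied to $\NCE.\Reveal$. Once this accounting is set up correctly, the composition of the RNC switch, the WE extractability step, and the OSS reduction is essentially mechanical.
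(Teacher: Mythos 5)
Your proposal is correct and follows essentially the same route as the paper: the identical RNC switch to $\hybi{1}$, then conditioning on the event of a valid deletion certificate, invoking WE extractability on the adversary's post-certificate state to produce a signature on $0$, and combining it with the submitted signature on $1$ to contradict OSS security. The only presentational difference is that the paper explicitly splits the advantage over $\mathsf{good}/\mathsf{bad}$ events and notes the $\mathsf{bad}$ contributions cancel, while you phrase the same observation as "$b$ enters the view only via the plaintext handed to $\NCE.\Reveal$"; these are the same idea.
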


\begin{proof}
For clarity, we describe how  $\expb{\Sigma_{\mathsf{pvcccd}},\cA}{pvccpk}{cert}{del}(\secp, b)$ 
(which we call $\hybi{0}(b)$ for simplicity) works below. 
\begin{enumerate}
\item
The challenger generates $(\nce.\pk,\nce.\sk,\nce.\aux)\leftarrow\NCE.\keygen(1^\lambda)$,
and sends $\nce.\pk$ to $\cA$.
\item
$\cA$ sends $(m_0,m_1)\in \cM^2$ to the challenger.
\item
The challenger generates $\crs\leftarrow\OSS.\Setup(1^\secp)$, and
sends $\crs$ to $\cA$.
\item
$\cA$ sends $\oss.\pk$ to the challenger.
\item
The challenger computes $\we.\ct\leftarrow \WE.\Enc(1^\lambda,x,m_b)$,
where $x$ is the statement that ``$\exists\sigma$ s.t. $\OSS.\Vrfy(\crs,\oss.\pk,\sigma,0)=\top$.
The challenger computes $\nce.\ct\leftarrow \NCE.\Enc(\nce.\pk,\we.\ct)$.
The challenger sends $\nce.\ct$ to $\cA$.
\item
$\cA$ sends $\cert=\sigma$ to the challenger.
\item
The challenger computes $b\leftarrow\OSS.\Vrfy(\crs,\oss.\pk,\sigma,1)$.
If the output is $b=\bot$, the challenger sends $\bot$ to $\cA$.
If the output is $b=\top$, the challenger sends $\nce.\sk$ to $\cA$.
\item
$\cA$ outputs $b'$. The output of the experiment is $b'$.
\end{enumerate}

We define the following hybrid.
\begin{description}
\item[$\mathsf{Hyb}_1(b)$:]
This is identical to $\hybi{0}(b)$ except that 
$\nce.\ct$ and $\nce.\sk$ are generated as
\begin{align}
\nce.\ct&\leftarrow \NCE.\Fake(\nce.\pk,\nce.\sk,\nce.\aux),\\
\nce.\sk&\leftarrow \NCE.\Reveal(\nce.\pk,\nce.\sk,\nce.\aux,\nce.\CT,\we.\ct).
\end{align}
 \end{description}
 
\begin{proposition}\label{prop:WEOSS_hyb_1}
If $\Sigma_{\nce}$ is RNC secure, 
$\abs{\Pr[\hybi{0}(b)=1] - \Pr[\mathsf{Hyb}_1(b)=1]} \le \negl(\secp)$.
\end{proposition}

\begin{proof}
To show this, we assume that
$\abs{\Pr[\hybi{0}(b)=1] - \Pr[\mathsf{Hyb}_1(b)=1]}$ is non-negligible, 
and construct an adversary $\cB_\nce$ that breaks the RNC security of $\Sigma_{\nce}$.
Let $\cA$ 
be the distinguisher for $\hybi{0}(b)$ and $\hybi{1}(b)$. 
First, $\cB_\nce$ receives $\nce.\pk$ from the challenger of the RNC security game.
$\cB_\nce$ then sends $\nce.\pk$ to $\cA$.
$\cB_\nce$ receives $(m_0,m_1)$ from $\cA$.
$\cB_\nce$ generates $\crs\lrun \OSS.\Setup(1^\secp)$ and sends $\crs$ to $\cA$.
$\cB_\nce$ receives $\oss.\pk$ from $\cA$.
$\cB_\nce$ generates $\we.\ct\leftarrow\WE.\Enc(1^\secp,x,m_b)$. 
$\cB_\nce$ sends $\we.\ct$ to the challenger of the RNC security game, and receives
$(\nce.\CT^*,\nce.\sk^*)$ from the challenger of the RNC security game.
Here,
\begin{align}
(\nce.\ct^*,\nce.\sk^*)= 
(\NCE.\Enc (\nce.\pk,\we.\ct),\nce.\sk)
\end{align}
if the challenger's bit is 0, and
\begin{align}
(\nce.\ct^*,\nce.\sk^*)= 
(\NCE.\Fake (\nce.\pk,\nce.\sk,\nce.\aux),\NCE.\Reveal(\nce.\pk,\nce.\sk,\nce.\aux,\nce.\ct^*,\we.\ct))
\end{align}
if the challenger's bit is 1.
$\cB_\nce$ sends $\nce.\ct^*$ to $\cA$. 
$\cB_\nce$ receives $\cert=\sigma$ from $\cA$.
If $\OSS.\Vrfy(\crs,\oss.\pk,\sigma,1)=\bot$, $\cB_\nce$ sends $\bot$ to $\cA$, 
and otherwise sends $\nce.\sk^*$ to $\cA$. 
By assumption, $\cA$ can distinguish $\hybi{0}(b)$
and $\hybi{1}(b)$, and therefore $\cB_\nce$ can output the bit of the challenger of the RNC security game
with non-negligible probability,
which breaks the RNC security of $\Sigma_\nce$.
\end{proof}

\begin{proposition}\label{prop:WEOSS_hyb_2}
If $\Sigma_{\oss}$ is secure and $\Sigma_{\we}$ has the extractable security, 
$\abs{\Pr[\mathsf{Hyb}_1(0)=1] - \Pr[\mathsf{Hyb}_1(1)=1]} \le \negl(\secp)$.
\end{proposition}
\begin{proof}
Let $\mathsf{good}$ (resp. $\mathsf{bad}$) be the event that the adversary 
in $\hybi{1}(b)$
sends a valid (resp. invalid) $\cert$. 
Then,
\begin{align}
|\Pr[\hybi{1}(0)=1]-\Pr[\hybi{1}(1)=1]|
&=
|\Pr[\hybi{1}(0)=1\wedge\mathsf{good}]+\Pr[\hybi{1}(0)=1\wedge\mathsf{bad}]\\
&-\Pr[\hybi{1}(1)=1\wedge\mathsf{good}]-\Pr[\hybi{1}(1)=1\wedge\mathsf{bad}]|\\
&\le
|\Pr[\hybi{1}(0)=1\wedge\mathsf{good}]
-\Pr[\hybi{1}(1)=1\wedge\mathsf{good}]|\\
&+|\Pr[\hybi{1}(0)=1\wedge\mathsf{bad}]
-\Pr[\hybi{1}(1)=1\wedge\mathsf{bad}]|\\
&=
|\Pr[\hybi{1}(0)=1\wedge\mathsf{good}]
-\Pr[\hybi{1}(1)=1\wedge\mathsf{good}]|
\end{align}
Assume that
$|\Pr[\hybi{1}(0)=1]-\Pr[\hybi{1}(1)=1]|$ is non-negligible,
which means that there exists an infinite subset $I\subseteq{\mathbb N}$ and a polynomial $p$
such that
$|\Pr[\hybi{1}(0)=1\wedge\mathsf{good}]-\Pr[\hybi{1}(1)=1\wedge\mathsf{good}]|\ge1/p(\secp)$ for all $\secp\in I$.
Because
\begin{align}
|\Pr[\hybi{1}(0)=1\wedge\mathsf{good}]-\Pr[\hybi{1}(1)=1\wedge\mathsf{good}]|=
|\Pr[\hybi{1}(0)=1|\mathsf{good}]-\Pr[\hybi{1}(1)=1|\mathsf{good}]|\Pr[\mathsf{good}],
\end{align}
It means 
$|\Pr[\hybi{1}(0)=1|\mathsf{good}]-\Pr[\hybi{1}(1)=1|\mathsf{good}]|\ge1/p(\secp)$ and
$\Pr[\mathsf{good}]\ge1/p(\secp)$ for all $\secp\in I$.
Let $\aux$ be the adversary's internal state after outputting $\cert$ conditioned on $\mathsf{good}$.  
Then, due to the extractability of the witness encryption,
there is a QPT extractor $\cE$ and polynomial $q$ such that the probability that
$\cE(1^\secp,x,\aux)$ outputs a valid $\sigma'$
such that $\OSS.\Vrfy(\crs,\oss.\pk,\sigma',0)=\top$ is at least $1/q(\secp)$ for all $\secp\in I$. 

We can construct an adversary $\cB_{\oss}$ that breaks the security of the one-shot signature scheme
as follows.
$\cB_{\oss}$ receives $\oss.\crs\leftarrow \OSS.\Setup(\secp)$.
It generates $(\nce.\pk,\nce.\sk,\nce.\aux)\leftarrow \NCE.\keygen(1^\secp)$,
and sends $\nce.\pk$ to the adversary of $\hybi{1}(b)$.
$\cB_{\oss}$ receives $m_0,m_1$ from the adversary of $\hybi{1}(b)$,
and returns $\oss.\crs$.
$\cB_{\oss}$ receives $\oss.\pk$ from the adversary of $\hybi{1}(b)$.
$\cB_{\oss}$ sends
$\nce.\ct^*\leftarrow\NCE.\Fake(\nce.\pk,\nce.\sk,\nce.\aux)$ to the adversary of $\hybi{1}(b)$,
and receives $\cert$.
$\cB_{\oss}$ simulates $\cE$ and gets its output $\sigma'$.
$\cB_{\oss}$ outputs $(\oss.\pk,1,\cert,0,\sigma')$.
Because
\begin{align}
&\Pr[m_0\neq m_1\wedge \Vrfy(\oss.\crs,\oss.\pk,0,\sigma')=\top\wedge
\Vrfy(\oss.\crs,\oss.\pk,1,\cert)=\top]\\
&=\Pr[\mathsf{good}]\Pr[\cE(1^\secp,x,\aux)\mbox{ outputs valid } \sigma']
\ge\frac{1}{p(\secp)}\frac{1}{q(\secp)}
\end{align}
for all $\secp\in I$,
$\cB_{\oss}$ breaks the security of the one-shot signature scheme.

\if0
In $\hybi{1}(b)$, if the adversary $\cA$ sends $\sigma$ to the challenger such that
$\OSS.\Vrfy(\crs,\oss.\pk,\sigma,1)=\bot$, it is trivial that
$\abs{\Pr[\mathsf{Hyb}_1(0)=1] - \Pr[\mathsf{Hyb}_1(1)=1]} \le \negl(\secp)$.
Therefore let us assume that
in $\hybi{1}(b)$, the adversary $\cA$ sends $\sigma$ to the challenger such that
$\OSS.\Vrfy(\crs,\oss.\pk,\sigma,1)=\top$. 
Assume that
$\abs{\Pr[\mathsf{Hyb}_1(0)=1] - \Pr[\mathsf{Hyb}_1(1)=1]}$ is non-negligible.
Then, due to the extractability of the witness encryption scheme,
there is a QPT extractor $\cE$ and polynomial $q$ such that the probability that
$\cE(1^\secp,x,\aux)$ outputs a valid $\sigma'$
such that $\OSS.\Vrfy(\crs,\oss.\pk,\sigma',0)=\top$ is at least $1/q(\secp)$, 
but it breaks the security of the one-shot signature scheme.
\fi 
\end{proof}
By \cref{prop:WEOSS_hyb_1} and \cref{prop:WEOSS_hyb_2}, we obtain \cref{theorem:WEOSS}.

\end{proof}

\ifnum\noclassic=1
\else
\section*{Acknowledgement}
TM is supported by the MEXT Q-LEAP, JST FOREST, JST PRESTO No.JPMJPR176A,
and the Grant-in-Aid for Scientific Research (B) No.JP19H04066 of JSPS.
\fi

\ifnum\llncs=1
\bibliographystyle{alpha} 
\bibliography{abbrev3,crypto,reference}
\else
\ifnum\arxiv=1
\newcommand{\etalchar}[1]{$^{#1}$}

\else
\bibliographystyle{alpha} 
\bibliography{abbrev3,crypto,reference}
\fi
\fi

\ifnum\cameraready=1
\else
\appendix

	\ifnum\llncs=1
	\newpage
	 	\setcounter{page}{1}
 	{
	\noindent
 	\begin{center}
	{\Large SUPPLEMENTAL MATERIALS}
	\end{center}
 	}
	\setcounter{tocdepth}{2}
	\else

\section{Reusable SKE with Certified Deletion}\label{sec:reusable_SKE_cd}
We present the definition and construction of reusable SKE with certified deletion in this section.
First, we recall the definition of secret key NCE.

\begin{definition}[Secret Key NCE (Syntax)]\label{def:sk_nce_syntax}
A secret key NCE scheme is a tuple of PPT algorithms $(\keygen,\Enc,\Dec,\allowbreak\Fake,\Reveal)$ with plaintext space $\Ms$.
\begin{description}
    \item [$\keygen(1^\secp)\ra (\ek,\dk,\aux)$:] The key generation algorithm takes as input the security 
    parameter $1^\secp$ and outputs a key pair $(\ek,\dk)$ and an auxiliary information $\aux$.
    \item [$\Enc(\ek,m)\ra \ct$:] The encryption algorithm takes as input $\ek$ and a plaintext $m\in\cM$ and outputs a ciphertext $\ct$.
    \item [$\Dec(\dk,\ct)\ra m^\prime \mbox{ or }\bot$:] The decryption algorithm takes as input $\dk$ and $\ct$ and outputs a plaintext $m^\prime$ or $\bot$.
    \item [$\Fake(\ek,\aux)\ra \tlct$:] The fake encryption algorithm takes $\dk$ and $\aux$, and outputs a fake ciphertext $\tlct$.
    \item [$\Reveal(\ek,\aux,\tlct,m)\ra \tldk $:] The reveal algorithm takes $\ek,\aux,\tlct$ and $m$, and outputs a fake secret key $\tldk$.
\end{description}  
\end{definition}

Correctness is similar to that of PKE, so we omit it.
\begin{definition}[Receiver Non-Committing (RNC) Security for SKE]\label{def:sk_nce_security}
A secret key NCE scheme is RNC secure if it satisfies the following.
Let $\Sigma=(\keygen, \Enc, \Dec, \Fake,\Reveal)$ be a secret key NCE scheme.
We consider the following security experiment $\expb{\Sigma,\cA}{sk}{rec}{nc}(\secp,b)$.

\begin{enumerate}
    \item The challenger computes $(\ek,\dk,\aux) \lrun \keygen(1^\secp)$ and sends $1^\secp$ to the adversary $\cA$.
    \item $\cA$ sends an encryption query $m$ to the challenger. The challenger computes and returns $\ct\lrun \Enc(\ek,m)$ to $\cA$. This process can be repeated polynomially many times.
    \item $\cA$ sends a query $m \in \Ms$ to the challenger.
    \item The challenger does the following.
    \begin{itemize}
    \item If $b =0$, the challenger generates $\ct \lrun \Enc(\ek,m)$ and returns $(\ct,\dk)$ to $\cA$.
    \item If $b=1$, the challenger generates $\tlct \lrun \Fake(\ek,\aux)$ and $\tldk \lrun \Reveal(\ek,\aux,\tlct,m)$ and returns $(\tlct,\tldk)$ to $\cA$.
    \end{itemize}
    \item Again $\cA$ can send encryption queries.
    \item $\cA$ outputs $b'\in \bit$.
\end{enumerate}
Let $\advc{\Sigma,\cA}{sk}{rec}{nc}(\secp)$ be the advantage of the experiment above.
We say that the $\Sigma$ is RNC secure if for any QPT adversary, it holds that
\begin{align}
\advc{\Sigma,\cA}{sk}{rec}{nc}(\secp)\seteq \abs{\Pr[ \expb{\Sigma,\cA}{sk}{rec}{nc}(\secp, 0)=1] - \Pr[ \expb{\Sigma,\cA}{sk}{rec}{nc}(\secp, 1)=1] }\leq \negl(\secp).
\end{align}
\end{definition}

\begin{definition}[Reusable SKE with Certified Deletion (Syntax)]\label{def:reusable_sk_cert_del}
A secret key encryption scheme with certified deletion is a tuple of quantum algorithms $(\keygen,\Enc,\Dec,\Delete,\Vrfy)$ with plaintext space $\Ms$ and key space $\Ks$.
\begin{description}
    \item[$\keygen (1^\secp) \ra \sk$:] The key generation algorithm takes as input the security parameter $1^\secp$ and outputs a secret key $\sk \in \Ks$.
    \item[$\Enc(\sk,m) \ra (\vk,\ct)$:] The encryption algorithm takes as input $\sk$ and a plaintext $m\in\Ms$ and outputs a verification key $\vk$ and a ciphertext $\ct$.
    \item[$\Dec(\sk,\ct) \ra m^\prime$:] The decryption algorithm takes as input $\sk$ and $\ct$ and outputs a plaintext $m^\prime \in \Ms$ or $\bot$.
    \item[$\Delete(\ct) \ra \cert$:] The deletion algorithm takes as input $\ct$ and outputs a certification $\cert$.
    \item[$\Vrfy(\vk,\cert)\ra \top \mbox{ or }\bot$:] The verification algorithm takes $\vk$ and $\cert$ and outputs $\top$ or $\bot$.
\end{description}
\end{definition}
A difference the definition by Broadbent and Islam and ours is that $\Enc$ outputs not only $\CT$ but also $\vk$, and $\vk$ is used in $\Vrfy$ istead of $\sk$.

\begin{definition}[Correctness for reusable SKE with Certified Deletion]\label{def:reusable_sk_cd_correctness}
There are two types of correctness. One is decryption correctness and the other is verification correctness.
\begin{description}
\item[Decryption correctness:] For any $\secp\in \N$, $m\in\Ms$, 
\begin{align}
\Pr\left[
\Dec(\sk,\ct)\ne m
\ \middle |
\begin{array}{ll}
\sk\lrun \keygen(1^\secp)\\
(\vk,\ct) \lrun \Enc(\sk,m)
\end{array}
\right] 
\le\negl(\secp).
\end{align}

\item[Verification correctness:] For any $\secp\in \N$, $m\in\Ms$, 
\begin{align}
\Pr\left[
\Vrfy(\vk,\cert)=\bot
\ \middle |
\begin{array}{ll}
\sk\lrun \keygen(1^\secp)\\
(\vk,\ct) \lrun \Enc(\sk,m)\\
\cert \lrun \Delete(\ct)
\end{array}
\right] 
\le\negl(\secp).
\end{align}
\end{description}
\end{definition}

\begin{definition}[Certified Deletion Security for Reusable SKE]\label{def:reusable_sk_certified_del}
Let $\Sigma=(\keygen, \Enc, \Dec, \Delete, \Vrfy)$ be a secret key encryption with certified deletion.
We consider the following security experiment $\expb{\Sigma,\cA}{sk}{cert}{del}(\secp,b)$.

\begin{enumerate}
    \item The challenger computes $\sk \la \keygen(1^\secp)$.
    \item $\cA$ sends an encryption query $m$ to the challenger. The challenger computes $(\vk,\ct)\lrun \Enc(\sk,m)$ to $\cA$ and returns $(\vk,\ct)$ to $\cA$. This process can be repeated polynomially many times.
    \item $\cA$ sends $(m_0,m_1)\in\cM^2$ to the challenger.
    \item The challenger computes $(\vk_b,\ct_b) \la \Enc(\sk,m_b)$ and sends $\ct_b$ to $\cA$.
    \item Again, $\cA$ can send encryption queries.
    \item At some point, $\cA$ sends $\cert$ to the challenger.
    \item The challenger computes $\Vrfy(\vk_b,\cert)$. If the output is $\bot$, the challenger sends $\bot$ to $\cA$.
    If the output is $\top$, the challenger sends $\sk$ to $\cA$.
    \item If the challenger sends $\bot$ in the previous item, $\cA$ can send encryption queries again.
    \item $\cA$ outputs $b'\in \bit$.
\end{enumerate}
Let $\advc{\Sigma,\cA}{sk}{cert}{del}(\secp)$ be the advantage of the experiment above.
We say that the $\Sigma$ is IND-CPA-CD secure if for any QPT $\cA$, it holds that
\begin{align}
\advc{\Sigma,\cA}{sk}{cert}{del}(\secp)\seteq \abs{\Pr[ \expb{\Sigma,\cA}{sk}{cert}{del}(\secp, 0)=1] - \Pr[ \expb{\Sigma,\cA}{sk}{cert}{del}(\secp, 1)=1] }\leq \negl(\secp).
\end{align}
\end{definition}

\paragraph{Our reusable SKE scheme.}
We construct $\Sigma_{\mathsf{r}\skcd} =(\keygen,\Enc,\Dec,\Delete,\Vrfy)$ with plaintext space $\Ms$ from an one-time SKE with certified deletion scheme $\Sigma_{\mathsf{o}\skcd}=\SKE.(\Gen,\Enc,\Dec,\Delete,\Vrfy)$ with plaintext space $\Ms$ and key space $\Ks$ and a secret key NCE scheme $\Sigma_{\nce}=\NCE.(\keygen,\Enc,\Dec,\Fake,\Reveal)$ with plaintext space $\Ks$.

\begin{description}
\item[$\keygen(1^\secp)$:] $ $
\begin{itemize}
\item Generate $(\nce.\ek,\nce.\dk,\nce.\aux)\lrun \NCE.\keygen(1^\secp)$ and output $\sk \seteq (\nce.\ek,\nce.\dk)$.
\end{itemize}
\item[$\Enc(\sk,m)$:] $ $
\begin{itemize}
	\item Parse $\sk = (\nce.\ek,\nce.\dk)$.
\item Generate $\oske.\sk \lrun \SKE.\Gen(1^\secp)$.
\item Compute $\nce.\ct \lrun \NCE.\Enc(\nce.\ek,\oske.\sk)$ and $\oske.\ct \lrun \SKE.\Enc(\oske.\sk,m)$.
\item Output $\ct \seteq (\nce.\ct,\oske.\ct)$ and $\vk \seteq \oske.\sk$.
\end{itemize}
\item[$\Dec(\sk,\ct)$:] $ $
\begin{itemize}
\item Parse $\sk = (\nce.\ek,\nce.\dk)$ and $\ct = (\nce.\ct,\oske.\ct)$.
\item Compute $\sk^\prime \lrun \NCE.\Dec(\nce.\dk,\nce.\ct)$.
\item Compute and output $m^\prime \lrun \SKE.\Dec(\sk^\prime,\oske.\ct)$.
\end{itemize}
\item[$\Delete(\ct)$:] $ $
\begin{itemize}
\item Parse $\ct= (\nce.\ct,\oske.\ct)$.
\item Generate $\oske.\cert \lrun \SKE.\Delete(\oske.\ct)$.
\item Output $\cert \seteq \oske.\cert$.
\end{itemize}
\item[$\Vrfy(\vk,\cert)$:] $ $
\begin{itemize}
\item Parse $\vk = \oske.\sk$ and $\cert = \oske.\cert$.
\item Output $b \lrun \SKE.\Vrfy(\oske.\sk,\oske.\cert)$.
\end{itemize}
\end{description}

\begin{theorem}\label{thm:reusable_ske_cd_from_sk_cd_and_ske}
If $\Sigma_{\nce}$ is RNC secure and $\Sigma_{\mathsf{o}\skcd}$ is OT-CD secure, $\Sigma_{\mathsf{r}\skcd}$ is IND-CPA-CD secure.
\end{theorem}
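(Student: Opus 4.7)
The plan is to mirror the proof of \cref{thm:pke_cd_from_sk_cd_and_pke} with the appropriate adaptations for the reusable secret-key setting. Concretely, I will define a single hybrid $\sfhyb{}{}(b)$ that is identical to $\expb{\Sigma_{\mathsf{r}\skcd},\cA}{sk}{cert}{del}(\secp,b)$ except that the challenge ciphertext's NCE component is generated by $\nce.\tlct\lrun \NCE.\Fake(\nce.\ek,\nce.\aux)$ rather than $\NCE.\Enc(\nce.\ek,\oske.\sk)$, and after a valid $\cert$ is received the challenger hands over $\tldk \lrun \NCE.\Reveal(\nce.\ek,\nce.\aux,\nce.\tlct,\oske.\sk)$ in place of the real $\nce.\dk$. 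Note that all non-challenge encryption queries (both pre- and post-challenge) keep using genuine $\NCE.\Enc$, exactly as in the real experiment.

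First, I would show $\abs{\Pr[\expb{\Sigma_{\mathsf{r}\skcd},\cA}{sk}{cert}{del}(\secp,b)=1] - \Pr[\sfhyb{}{}(b)=1]} \le \negl(\secp)$ via a reduction $\cB_\nce$ to the RNC security of $\Sigma_\nce$. The reduction proceeds in the natural way: for each of $\cA$'s encryption queries on message $m$, $\cB_\nce$ samples a fresh $\oske.\sk \lrun \SKE.\Gen(1^\secp)$, forwards it as an encryption query to the NCE challenger to obtain $\nce.\ct$, and returns $(\oske.\sk,(\nce.\ct,\SKE.\Enc(\oske.\sk,m)))$ as $(\vk,\ct)$. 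For the challenge $(m_0,m_1)$, $\cB_\nce$ samples $\oske.\sk^*$, submits it as the single RNC challenge query, receives $(\nce.\ct^*,\dk^*)$, and returns $(\nce.\ct^*,\SKE.\Enc(\oske.\sk^*,m_b))$ to $\cA$; when $\cert$ arrives, if $\SKE.\Vrfy(\oske.\sk^*,\cert)=\top$ then $\cB_\nce$ hands $\dk^*$ (together with the appropriate $\ek$) to $\cA$. Depending on the RNC challenge bit, this perfectly simulates the real experiment or $\sfhyb{}{}(b)$.

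Second, I would show $\abs{\Pr[\sfhyb{}{}(0)=1] - \Pr[\sfhyb{}{}(1)=1]} \le \negl(\secp)$ by reducing to the OT-CD security of $\Sigma_{\mathsf{o}\skcd}$. The reduction $\cB_\skcd$ generates $(\nce.\ek,\nce.\dk,\nce.\aux)$ itself, simulates all encryption queries honestly (since $\oske.\sk$ for those is its own and is independent of the OT-CD challenge), and for the challenge forwards $(m_0,m_1)$ to the OT-CD challenger, receives $\oske.\ct^*$, samples $\nce.\tlct \lrun \NCE.\Fake(\nce.\ek,\nce.\aux)$, and returns $(\nce.\tlct,\oske.\ct^*)$ to $\cA$. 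On receiving $\cert$, $\cB_\skcd$ forwards it; if the OT-CD challenger returns $\oske.\sk$, then $\cB_\skcd$ computes $\tldk \lrun \NCE.\Reveal(\nce.\ek,\nce.\aux,\nce.\tlct,\oske.\sk)$ and sends the resulting key to $\cA$.

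The only structural difference from the PKE proof is that we must also answer encryption queries throughout, both before and after the challenge (and even after $\sk$ is revealed, which is the trivially simulatable regime). In the RNC reduction, pre- and post-challenge queries are answered using the NCE encryption oracle, while in the OT-CD reduction they are answered locally using keys generated by $\cB_\skcd$ itself. I expect the mild bookkeeping around the fact that secret-key NCE gives the adversary only $1^\secp$ (not $\ek$) at the start to be the main subtlety: the reduction must maintain a consistent view of $\sk$ that is ultimately handed to $\cA$ after verification. Given the black-box use of both primitives and the one-shot nature of the $\Fake/\Reveal$ call, this is exactly the same style of argument as \cref{prop:pkecd_hyb_one,prop:pkecd_hyb_end}, so no new ideas are required beyond checking that the encryption-query oracle can be threaded through each hybrid.
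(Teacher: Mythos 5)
Your hybrid decomposition, the way you thread encryption queries through both reductions, and the OT-CD reduction are exactly what the paper intends (it remarks that the proof is ``almost the same as'' \cref{thm:pke_cd_from_sk_cd_and_pke}). However, the RNC reduction $\cB_\nce$ has a genuine gap that you notice in passing but incorrectly dismiss as ``mild bookkeeping.''

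In $\Sigma_{\mathsf{r}\skcd}$ the secret key is $\sk=(\nce.\ek,\nce.\dk)$, and after a valid certificate the certified-deletion challenger must send this entire $\sk$ to $\cA$. But the secret-key RNC game (\cref{def:sk_nce_security}) gives the NCE adversary only $1^\secp$ at the start and only $\dk$ (or $\tldk$) at the reveal step; it never discloses $\nce.\ek$. So $\cB_\nce$ does not possess $\nce.\ek$ and cannot hand ``$\dk^*$ (together with the appropriate $\ek$)'' to $\cA$ as you propose — there is nothing consistent for it to put in that slot, and sampling a fresh unrelated $\ek'$ would break the simulation, since $\cA$ has already seen many $\nce.\ct$'s generated under the true $\nce.\ek$ via the encryption oracle.

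This is not patched by simply strengthening \cref{def:sk_nce_security} to also disclose $\ek$ at the reveal step, either: in the standard double-encryption construction of secret-key NCE (the analogue of the one sketched in the technical overview), $\ek$ contains both underlying SKE keys, and an adversary who learns $\ek$ can decrypt the fake ciphertext under both keys, see that the two components encrypt different bits, and distinguish fake from real. So that stronger notion is not met by the construction \cref{thm:indcpa-pke_to_rnc-pke} implicitly relies on. To close the gap one needs to change something substantive — e.g., give $\Sigma_{\mathsf{r}\skcd}$ a split $(\ek,\dk)$ syntax so that the certified-deletion game reveals only the decryption half, or argue directly (and not via black-box RNC security as stated) that revealing $\nce.\ek$ after the certificate cannot help $\cA$. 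Either way, this is more than ``no new ideas required,'' so the proposal as written does not establish the theorem.
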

We omit the proof of this theorem since it is almost the same as the proof of~\cref{thm:pke_cd_from_sk_cd_and_pke}.

	\fi
\fi

\ifnum\cameraready=1
\else
\ifnum\submission=1
\newpage
\setcounter{tocdepth}{1}
\tableofcontents
\else
\fi
\fi

\end{document}